\newtheorem{theorem}{Theorem}[section]
\newtheorem{corollary}[theorem]{Corollary}
\newtheorem{remark}[theorem]{Remark}
\newtheorem{lemma}[theorem]{Lemma}
\newtheorem{defi}[theorem]{Definition}
\newtheorem{observation}[theorem]{Observation}
\newtheorem{proposition}[theorem]{Proposition}
\newtheorem{fact}[theorem]{Fact}
\newenvironment{repeattheorem}[1]
  {\innercustomthm}
  {\endinnercustomthm}
\newcommand{\ket}[1]{\ensuremath{\left|#1\right\rangle}}
\newcommand{\bra}[1]{\ensuremath{\left\langle#1\right|}}
\newcommand{\norm}[1]{\ensuremath{\left\|#1\right\|}}
\newcommand{\wh}{\widehat}
\newcommand{\Q}{\ensuremath{\mathsf{Q}}}
\newcommand{\E}{\mathbb{E}}
\newcommand{\DictatorFunction}{\mathsf{Dict}}
\newcommand{\IndexFunction}{\mathsf{Index}}
\newcommand{\pmone}{\{-1,1\}}
\newcommand{\cbra}[1]{\left\{#1\right\}}
\newcommand{\rbra}[1]{\left(#1\right)}
\newcommand{\Var}{\mathbf{Var}}
\newcommand{\cQ}{\mathcal{Q}}
\newcommand{\cR}{\mathcal{R}}
\newcommand{\bR}{\mathbb{R}}
\newcommand{\bI}{\mathbb{I}}
\newcommand{\rec}{\mathsf{rec}}
\newcommand{\XOR}{\mathsf{XOR}}
\newcommand{\OR}{\mathsf{OR}}
\newcommand{\val}{\mathsf{val}}
\renewcommand{\cal}[1]{\mathcal{#1}}
\DeclareMathOperator{\ADV}{ADV}
\DeclareMathOperator*{\argmax}{arg\,max}
\title{Quantum Search With Generalized Wildcards}
\author{Arjan Cornelissen\thanks{Simons Institute for the Theory of Computing, University of California, Berkeley, United States of America {\tt ajcornelissen@outlook.com}}
\and
Nikhil S.~Mande\thanks{University of Liverpool, UK {\tt mande@liverpool.ac.uk}}
\and 
Subhasree Patro\thanks{Eindhoven University of Technology, Netherlands {\tt patrofied@gmail.com}}
\and
Nithish Raja\thanks{Eindhoven University of Technology, Netherlands {\tt n.r.raja@tue.nl}}
\and
Swagato Sanyal\thanks{University of Sheffield, UK {\tt swagato.sanyal@sheffield.ac.uk}}
}
\date{}
\begin{document}

\maketitle

\begin{abstract}
    In the ``search with wildcards'' problem [Ambainis, Montanaro, Quantum Inf.~Comput.'14], one's goal is to learn an unknown bit-string $x \in \{-1,1\}^n$. An algorithm may, at unit cost, test equality of any subset of the hidden string with a string of its choice. Ambainis and Montanaro showed a quantum algorithm of cost $O(\sqrt{n} \log n)$ and a near-matching lower bound of $\Omega(\sqrt{n})$. Belovs [Comput.~Comp.'15] subsequently showed a tight $O(\sqrt{n})$ upper bound.
    
    We consider a natural generalization of this problem, parametrized by a subset $\cal{Q} \subseteq 2^{[n]}$, where an algorithm may test whether $x_S = b$ for an arbitrary $S \in \cal{Q}$ and $b \in \{-1,1\}^S$ of its choice, at unit cost. We show the following:
    \begin{itemize}
        \item For all $k \in [n]$, when $\cal{Q}$ is the collection of all sets of size at most $k$, the quantum query complexity is $\Theta(n/\sqrt{k})$. In particular when $k = n$, this corresponds to the standard search with wildcards setting. This recovers and generalizes the tight characterization of Belovs, and Ambainis and Montanaro, using completely different techniques.
        \item When $\cal{Q}$ is the collection of contiguous blocks, the quantum query complexity is $\tilde{\Theta}(n)$.
        \item When $\cal{Q}$ is the collection of prefixes, the quantum query complexity is $\Theta(n)$.
    \end{itemize}
    All of these results are derived using a framework that we develop. We apply a symmetry reduction to the \emph{primal} version of the negative-weight adversary bound, and show that the quantum query complexity of learning $x$ is characterized, up to a constant factor, by a particular optimization program, which can be succinctly described as follows: `maximize over all odd functions $f : \{-1,1\}^n \to \mathbb{R}$ the ratio of the maximum value of $f$ to the maximum (over $T \in \cal{Q}$) standard deviation of $f$ on a subcube whose free variables are exactly $T$.'

    To the best of our knowledge, ours is the first work to use the primal version of the negative-weight adversary bound (which is a \emph{maximization} program typically used to show lower bounds) to show new quantum query \emph{upper} bounds without explicitly resorting to SDP duality. 
\end{abstract}
\thispagestyle{empty}

\newpage
\tableofcontents
\thispagestyle{empty}

\newpage
\setcounter{page}{1}
\pagenumbering{arabic}

\section{Introduction}\label{sec:intro}
Reconstructing an unknown string from limited access to its substrings is a natural and widely studied task across many areas of science and engineering. In computational biology, for example, genome assembly and DNA sequencing rely on reconstructing a long unknown strand from local substring measurements (see, for example, \cite{SS95, robertson2010novo, CWE15, LD24} and the references therein). Beyond applied motivations, the problem provides a clean and mathematically tractable model for understanding the complexity of learning a hidden string via substring queries.

Most relevant to our work is the \emph{search with wildcards} problem, introduced by Ambainis and Montanaro~\cite{AM14}. In this problem, the goal is to learn a hidden bitstring $x \in \pmone^n$ (i.e., an algorithm must correctly output $x$ with high probability) using the minimum number of queries in the following \emph{wildcard query} model. At unit cost, an algorithm may ``check its guess for $x$'' on any subset of the inputs. Formally, at unit cost, an algorithm may make a query $(S, b)$ where $S \subseteq [n], b \in \pmone^S$. 
The query returns 1 if $x_S = b$, and 0 otherwise. Wildcard queries are a generalization of the usual query model, which only allows for $|S| = 1$. Classically even in the wildcard model every query provides at most 1 bit of information, yielding an information-theoretic lower bound of $\Omega(n)$ for learning $x$. In the usual quantum query model with $|S| = 1$, even though roughly a factor-2 speedup is possible~\cite{vD98}, this is essentially the best speedup one can get~\cite{FGGS99}.

Ambainis and Montanaro~\cite{AM14} showed that in the wildcard model, quantum algorithms can learn $x$ using only $O(\sqrt{n} \log n)$ queries, exhibiting roughly a quadratic quantum speedup over the classical lower bound. However, their analysis and algorithm crucially rely on the ability to query \emph{all subsets} of coordinates.
In many practical settings, this assumption can be unrealistic. For instance, in \emph{de novo} genome sequencing and related problems in computational biology~\cite{robertson2010novo}, one has no promise on the input string, and the goal is to reconstruct it. In practical sequencing settings, the accessible measurements are limited to contiguous segments of the underlying strand; one cannot, for instance, simultaneously probe all odd positions at unit cost.
Likewise, in more general settings one might be restricted to a fixed family of allowed subsets $\mathcal{Q} \subseteq 2^{[n]}$, and the structure of $\mathcal{Q}$ can significantly affect the power of the algorithm.

In this work, we study the quantum query complexities of learning an unknown string when the allowed query subsets are drawn from an arbitrary fixed collection $\mathcal{Q}$. This unifies previously studied models, including the standard query model $\mathcal{Q} = \cbra{\cbra{i} : i \in [n]}$, and the search-with-wildcards model $\mathcal{Q} = 2^{[n]}$ under a single framework.

\subsection{Our Results}
We refer the reader to Section~\ref{sec:prelims} for formal definitions. We summarize here the notions that we need to state our main results. For an arbitrary collection of sets $\cQ \subseteq 2^{[n]}$ and a function $F : \pmone^n \to E$,
let $\Q^{\cQ}(F)$ denote the quantum query query complexity of computing $F$, where an algorithm can make unit-cost queries of the form $\bI[x_S = b_S]$ for an arbitrary $S \in \cQ$ and $b \in \pmone^S$ of its choice. We call $\Q^\cQ(F)$ the \emph{quantum $\cQ$-substring query complexity of $F$}. Let $\oplus(\cdot)$ denote the parity function, i.e., $\oplus(x) = \prod_{i = 1}^n x_i$ for $x \in \pmone^n$. A function $f : \pmone^n \to \bR$ is said to be \emph{odd} if $f(x) = -f(-x)$ for all $x \in \pmone^n$. We write $\Var(f) := \Var_{\mathbf{x} \sim \pmone^n}[f(\mathbf{x})]$ for the variance of the random variable $f(\mathbf{x})$, where $\mathbf{x}$ is sampled uniformly at random from $\pmone^n$.

Our main technical contribution is to show that the quantum $\cQ$-substring query complexity of parity is characterized by the optimal value of an abstract analytic optimization program. Below, $f_{S|b}$ refers to the function $f$ restricted to the bits in $\bar{S}$ set to $b$.
\begin{theorem}\label{thm:main}
    Let $n$ be a positive integer and $\cQ \subseteq 2^{[n]}$. 
    Then $\Q^\cQ(\oplus) = \Theta(\val_\cQ)$, where
    \begin{equation}
        \label{eq:dfn-valq}
        \val_\cQ := \left(\max_{\substack{f : \pmone^n \to \bR\\ f \textnormal{ is an odd function}}} \frac{\|f\|_\infty}{\max\limits_{\substack{S \in \cQ,\\ b \in \pmone^{\bar{S}}}}\sqrt{\Var(f_{S|b})}}\right).
    \end{equation}
\end{theorem}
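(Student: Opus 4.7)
The plan is to invoke the tight characterization of quantum query complexity by the negative-weight adversary bound $\ADV^\pm$, adapted to the generalized $\cQ$-substring query model. That is, $\Q^\cQ(\oplus) = \Theta(\ADV^{\pm,\cQ}(\oplus))$, where the primal bound is the maximum of $\|\Gamma\|$ over real symmetric matrices $\Gamma$ supported on pairs $(x,y) \in \pmone^n \times \pmone^n$ with $\oplus(x) \neq \oplus(y)$, subject to $\|\Gamma \circ D_{S,b'}\| \le 1$ for every query $(S, b')$ with $S \in \cQ$ and $b' \in \pmone^S$. Here $D_{S,b'}$ is the $0/1$ matrix indicating whether the query $(S,b')$ distinguishes $x$ from $y$. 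With this characterization in hand, both the lower bound and the upper bound on $\Q^\cQ(\oplus)$ reduce to analyzing this primal maximization SDP, which is exactly what allows one to bypass explicit SDP duality.

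The first step is a symmetry reduction. The group $\pmone^n$ acts on inputs by $x \mapsto z \odot x$, and this action permutes the queries within the model (a query $(S,b')$ is sent to $(S, z_S \odot b')$) while transforming parity multiplicatively. By averaging a feasible $\Gamma$ appropriately over this action on the pair $(x,y)$, one obtains a feasible matrix whose entries are controlled by a single function $f:\pmone^n \to \bR$. The averaging together with the parity support constraint can be massaged, via a suitable change of variables and a further symmetrization under $x \mapsto -x$, into the condition that $f$ is an odd function in the sense of the theorem. Crucially, one must check that feasibility is preserved and that the objective decreases by at most a constant factor under this reduction, so that the optimal value of the reduced program matches the SDP value up to constants.

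The second step is to identify the norms appearing in the reduced program with $\|f\|_\infty$ and the variance quantities. Since $\Gamma$ becomes a convolution-type operator on $\pmone^n$, it diagonalizes in the Fourier basis, and $\|\Gamma\|$ evaluates (after appropriate normalization) to $\|f\|_\infty$. For the constraint norms, I would use the block structure of $D_{S,b'}$: it is supported on the $2^{n-|S|}$ subcubes obtained by fixing the $\bar{S}$-bits to some $b \in \pmone^{\bar{S}}$, and within each such subcube $D_{S,b'}$ is a rank-one ``star'' picking out the unique point satisfying $x_S = b'$. A direct spectral computation then shows that $\|\Gamma \circ D_{S,b'}\|$ is captured, up to constants, by $\max_{b \in \pmone^{\bar{S}}}\sqrt{\Var(f_{S|b})}$, which is exactly the denominator appearing in $\val_\cQ$.

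The main obstacle, I expect, is the second step: carrying out the spectral analysis of $\|\Gamma \circ D_{S,b'}\|$ cleanly and showing that the variance formulation is tight in both directions, so that one loses only constant factors. A further subtlety is ensuring the symmetrization step is lossless (up to constants) both for extracting an odd function from a feasible adversary matrix and for converting an arbitrary odd $f$ back into a feasible $\Gamma$ achieving the claimed ratio. Once these pieces are in place, the lower bound $\Q^\cQ(\oplus) = \Omega(\val_\cQ)$ follows by exhibiting a feasible $\Gamma$ constructed from an optimal odd $f$, and the upper bound $\Q^\cQ(\oplus) = O(\val_\cQ)$ follows by showing that every feasible $\Gamma$ has objective bounded by a constant times $\val_\cQ$, both via the same tight primal characterization.
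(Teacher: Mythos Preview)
Your overall architecture matches the paper's: characterize $\Q^\cQ(\oplus)$ via the primal negative-weight adversary bound, symmetrize under the $\pmone^n$ bit-flip action to force $\Gamma = M_{f \circ \XOR}$, then Fourier-analyze numerator and denominator. But two steps in your outline are not quite right and would stall the argument as written.

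First, the symmetrization. You correctly flag the worry that averaging might cost more than a constant factor in the objective, but you do not say how to avoid it. Averaging $\Gamma \mapsto \frac{1}{|G|}\sum_\sigma \sigma\Gamma\sigma^T$ directly in the ratio formulation $\|\Gamma\|/\max_q\|\Gamma\circ\Delta_q\|$ can genuinely decrease $\|\Gamma\|$, and there is no a priori reason the loss is only a constant. The paper's fix is to pass to an equivalent \emph{linear-objective} SDP formulation $\overline{\ADV}^{\cR,\pm}$ (\Cref{lem:equivalence_of_formulations}): there the objective is $\sum_{x,y}\Gamma[x,y]$, the feasible region is convex, and each $(\sigma\beta,\sigma\Gamma\sigma^T)$ has \emph{exactly} the same objective, so averaging is lossless. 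One then transports the symmetrized optimum back to the ratio formulation. Without this detour, your ``check that the objective decreases by at most a constant factor'' is not a check but the heart of the difficulty.

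Second, the identification with $\|f\|_\infty$ and $\Var(f_{S|b})$ is subtler than you describe. For $\Gamma = M_{f\circ\XOR}$ one has $\|\Gamma\| = 2^n\|\widehat{f}\|_\infty$, not $\|f\|_\infty$; likewise the denominator works out (exactly, not up to constants) to $2^n\max_{b}\sqrt{\Var(\widehat{f}_{S|b})}$ (\Cref{thm:denominatornorm}). The parity support constraint forces $f(z)=0$ on even-weight $z$, which by a Fourier identity (\Cref{lem:fourierequiv}) is \emph{equivalent} to $\widehat{f}$ being odd. The ``odd function $f$'' in $\val_\cQ$ is thus really $\widehat{f}$, obtained by a syntactic substitution at the end---not by an additional symmetrization under $x\mapsto -x$ as you suggest. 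Your description of the block structure of $\Delta_{S,b'}$ is also off: the decomposition into $2^{n-|S|}$ pieces indexed by $\pmone^{\bar{S}}$ only emerges \emph{after} conjugating by the Hadamard on $\bar{S}$ (\Cref{lem:diagonal-blocks,lem:diagonal-blpcks}), not from the raw sparsity pattern.
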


In other words, this characterization can concisely be stated as:
\begin{tcolorbox}
    $\Q^\cQ(\oplus)$ is asymptotically equal to the maximum, over all odd functions $f : \pmone^n \to \bR$, of the ratio of $\|f\|_\infty$ to the largest standard deviation of $f$ in a subcube whose free variables are a valid query set.
\end{tcolorbox}

In order to show this, we use the characterization of quantum query complexity by the negative-weight adversary bound~\cite{HLS07, Rei11}. We show that for all $\cQ$, the corresponding adversary bound \emph{equals} $\val_\cQ$ as defined above.
We next analyze $\val_\cQ$, as defined in \Cref{eq:dfn-valq} for various interesting and natural settings of $\cQ$ such as bounded-size sets, contiguous blocks, prefixes, and only the full set.

\begin{theorem}\label{thm:valvalues}
    Let $n$ be a positive integer. Then, the following bounds hold for $\val_\cQ$ for various classes of $\cQ \subseteq 2^{[n]}$.
    \[
\begin{array}{|c|c|c|c|}
\hline
\text{Query Set} & \cQ & \val_{\cQ} & \text{Reference}\\
\hline
\hline
\text{Bounded-size sets} & \{ S \subseteq [n] : |S| \le k \} & \Theta\!\left(\dfrac{n} {\sqrt{k}}\right) & \text{Theorem~\ref{thm:valbounded}}\\
\hline
\text{Contiguous blocks} & \{ S = [i,j] : i \le j \in [n] \} & \vphantom{\overline{\widetilde{\Theta}}(n)} \widetilde{\Theta}(n) & \text{Theorem~\ref{thm:valcontig}} \\
\hline
\text{Prefixes} & \{ [1,i] : i \in [n] \} & \Theta(n) & \text{Theorem~\ref{thm:valprefix}}\\
\hline
\text{Only full set} & \{ [n] \} & 2^{(n-1)/2} & \text{Theorem~\ref{thm:valfull}}\\
\hline
\end{array}
\]
\end{theorem}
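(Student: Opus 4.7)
The plan is to bound each row of Theorem~\ref{thm:valvalues} separately, in each case exhibiting an odd function $f \colon \pmone^n \to \bR$ witnessing the lower bound on $\val_\cQ$ and arguing over all odd $f$ for the upper bound. Fourier analysis on $\pmone^n$ is the workhorse: for odd $f$, $\widehat f(T) = 0$ when $|T|$ is even, and averaging over $b \in \pmone^{\bar S}$ gives the identity $\E_b[\Var(f_{S|b})] = \sum_{T : T \cap S \ne \emptyset} \widehat f(T)^2$. The coordinate splitting $f(x) = x_1\, g(x_{[2,n]}) + h(x_{[2,n]})$, with $g$ even and $h$ odd on $n-1$ variables, will be used for the prefix recursion.

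The full-set case $\cQ = \cbra{[n]}$ is the simplest: the only allowed subcube has no fixed bits, so $\Var(f_{[n]|\emptyset}) = \|f\|_2^2$ (using $\E f = 0$ for odd $f$), and the program reduces to maximizing $\|f\|_\infty/\|f\|_2$. For odd $f$ with $\|f\|_\infty = M$ attained at $x^\ast$, oddness forces $f(-x^\ast) = -M$, so $\|f\|_2^2 \ge 2M^2/2^n$ and hence $\|f\|_\infty/\|f\|_2 \le 2^{(n-1)/2}$; the antipodal spike $\bI[x=x^\ast] - \bI[x=-x^\ast]$ saturates this. For the bounded-size row, the lower bound $\val_\cQ \ge n/\sqrt k$ is witnessed by the linear function $f(x) = \sum_{i=1}^n x_i$, for which $\|f\|_\infty = n$ and $\Var(f_{S|b}) = |S|$. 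The matching upper bound $O(n/\sqrt k)$ is the main technical obstacle, since Fourier averaging alone gives $\max_{S,b}\Var(f_{S|b}) \ge (k/n)\|f\|_2^2$, which is loose for spike-like $f$ where $\|f\|_\infty \gg \|f\|_2$. Our plan is to combine this averaging bound with a telescoping ``antipodal path'' argument: partitioning $[n]$ into $\lceil n/k\rceil$ size-$k$ blocks and expanding $f(-x^\ast) - f(x^\ast) = -2M$ over these blocks forces some block-subcube to carry an $\Omega(Mk/n)$ function-value gap, which, anchored at $b = x^\ast_{\bar S}$ so the subcube contains $x^\ast$, converts to the required variance lower bound. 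Reconciling the two regimes via Cauchy--Schwarz yields $\val_\cQ = O(n/\sqrt k)$.

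For the prefix row, the upper bound $\val_\cQ \le n$ follows by induction on $n$ using the splitting $f = x_1 g + h$: the constraint $\max_b\Var(f_{\{1\}|b}) \le V$ gives $\|g\|_\infty \le \sqrt V$, and for $i \ge 2$ the identity $\Var(f_{[1,i]|b}) = \E_y g(y,b)^2 + \Var_y h(y,b)$ forces $\Var(h_{[2,i]|b}) \le V$, which is precisely the prefix constraint on $h$ for $n-1$ variables. Combined with $\|f\|_\infty \le \|g\|_\infty + \|h\|_\infty$ this yields $\val_\cQ(n) \le 1 + \val_\cQ(n-1) \le n$. The matching $\Omega(n)$ lower bound requires a non-linear construction, since linear functions only reach $O(\sqrt n)$ via Cauchy--Schwarz on $\sum|c_i|/\sqrt{\sum c_i^2}$; we plan to recursively build $f_n(x) = x_1\bigl(\bI[x_{[2,n]} = \mathbf{1}] + \bI[x_{[2,n]} = -\mathbf{1}]\bigr) + f_{n-1}(x_{[2,n]})$, with $\mathbf{1}$ the all-ones vector, so that $g$ and $h$ peak at the common point $x_{[2,n]} = \mathbf{1}$, yielding $\|f_n\|_\infty = 1 + \|f_{n-1}\|_\infty = n$ additively; the indicator mask keeps $\E_y g^2$ exponentially small on all but two values of $b$, capping $\max_{i,b}\Var(f_{n,[1,i]|b})$ at an absolute constant (independent of $n$) and giving $\val_\cQ \ge \Omega(n)$. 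The contiguous-block row then inherits the upper bound $O(n)$ from prefixes (enlarging $\cQ$ to contiguous blocks can only decrease $\val_\cQ$), and the matching $\widetilde\Omega(n)$ lower bound follows by modifying the prefix-style recursion to simultaneously control the $O(\log n)$ possible block-lengths, with the polylogarithmic factor accounting for this extra control.
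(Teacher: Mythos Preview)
Your treatment of the full-set row and all four lower/upper bounds for the prefix row is correct and, for prefixes, genuinely different from the paper's. Your inductive upper bound via the splitting $f = x_1 g + h$ is cleaner than the paper's ``chain of averages'' argument, and your recursive lower-bound construction $f_n = x_1(\bI[x_{[2,n]}=\mathbf{1}]+\bI[x_{[2,n]}=-\mathbf{1}]) + f_{n-1}$ is a valid alternative to the paper's decision-list function (it even gives a slightly better constant, $\val_\cQ \ge n/\sqrt{3}$ versus $n/\sqrt{8}$). The contiguous upper bound via monotonicity of $\val_\cQ$ in $\cQ$ is also fine.

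There is, however, a real gap in your bounded-size upper bound. The telescoping path over size-$k$ blocks yields two points in some $k$-dimensional subcube whose function values differ by $\Omega(Mk/n)$, but a two-point gap $\Delta$ in a $k$-dimensional cube only forces variance $\Omega(\Delta^2/4^k)$, not $\Omega(\Delta^2)$ or $\Omega(\Delta^2/k)$. Take the antipodal spike $f = \bI[x=\mathbf{1}] - \bI[x=-\mathbf{1}]$: every size-$k$ subcube has variance at most $2^{-k}$, so neither your averaging bound ($\max\Var \ge (k/n)\|f\|_2^2 = 2k/(n2^n)$) nor your path bound can reach the required $\Omega(k/n^2)$ once $k \gg \log n$. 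The spike is handled in the paper by dropping to a size-$1$ subcube near $\mathbf{1}$, where the variance is $1/4$; more generally, the paper's Lemma~\ref{lem:induction} inductively locates a subcube (of adaptively chosen, possibly small, dimension) in which \emph{all} degree-$1$ Fourier coefficients are $\Omega(M/n)$, and only then restricts further to size $k$, picking up a factor of $k$ in the variance via Parseval. Your two ``regimes'' do not cover this, and no Cauchy--Schwarz step bridges the exponential-in-$k$ shortfall.

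The contiguous lower bound is also not established. Your prefix construction $f_n$ fails outright for interior singleton blocks: with $S = \{n-1\}$ and $b = (1^{n-2},1)$ on $\bar S$, every level $j \le n-2$ of the recursion contributes $x_j g_{n-j+1}(\cdots) = \bI[x_{n-1}=1]$ (since $x_j = 1$ is fixed), so the $g$-terms add up to $(n-2)\bI[x_{n-1}=1]$ rather than a geometric series, giving $\Var = \Theta(n^2)$ and hence only $\val_\cQ(f_n) = O(1)$ for contiguous blocks. The paper's construction is structurally different: it partitions $[n]$ into $n/k$ disjoint blocks with $k \approx \log n$, sets $f = \sum_i g(z^{(i)})$ where each $g$ is a $\{-1,0,1\}$-valued spike on $k$ bits, and exploits that any contiguous restriction touches at most two blocks, so the restricted variance is $O(1) + (n/k)\cdot 2^{1-k} = O(1)$ while $\|f\|_\infty = n/k$. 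Your sketch (``modify the recursion to control $O(\log n)$ block-lengths'') does not supply this idea.
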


These results have immediate implications for the query complexity of the string-recovery problem in these query models. Let $\rec : \pmone^n \to \pmone^n$ be the (string recovery/learning) function defined by $\rec(x) = x$.
Using Theorem~\ref{thm:main}, Theorem~\ref{thm:valvalues}, and the Bernstein-Vazirani algorithm~\cite{BV97} to recover a bitstring in one query given query access to arbitrary parities of it, we obtain the following corollary.\footnote{The upper bounds for contiguous and prefix queries follow from the naive upper bound of querying one bit at a time (or an increasingly long prefix in the prefix query model) to learn an input string in $O(n)$ queries. The upper bound for full set queries follows from \cite{Gro96}.}

\begin{corollary}\label{cor:maininstantiations}
    Let $n$ be a positive integer. Then, the following bounds hold on the quantum $\cQ$-substring query complexity of learning an input string.
    \[
\begin{array}{|c|c|c|}
\hline
\text{Query Set} & \cQ & \Q^{\cQ}(\mathrm{rec}) \\
\hline
\hline
\text{Bounded-size sets} & \{ S \subseteq [n] : |S| \le k \} & \Theta\!\left(\dfrac{n}{\sqrt{k}}\right) \\
\hline
\text{Contiguous blocks} & \{ S = [i,j] : i \le j \in [n] \} & \vphantom{\overline{\widetilde{\Theta}}(n)}\widetilde{\Theta}(n) \\
\hline
\text{Prefixes} & \{ [1,i] : i \in [n] \} & \Theta(n)\\
\hline
\text{Only full set} & \{ [n] \} & \Theta(2^{n/2})\\
\hline
\end{array}
\]

\end{corollary}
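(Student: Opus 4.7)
My plan is to derive all four lower bounds uniformly from the parity characterization and then handle each upper bound separately, with the bounded-size case being the main one. For the lower bounds, any algorithm that learns $x$ can compute $\oplus(x)$ with no additional queries, so $\Q^\cQ(\rec) \geq \Q^\cQ(\oplus) = \Theta(\val_\cQ)$ by Theorem~\ref{thm:main}; plugging in the four values of $\val_\cQ$ from Theorem~\ref{thm:valvalues} yields all four claimed lower bounds simultaneously.

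For the upper bounds in the contiguous, prefix, and full-set rows, the argument is elementary (and sketched in the footnote). Singletons are contiguous blocks, so bit-by-bit queries give $n$ queries in the contiguous case. For prefixes, I would learn $x_1, x_2, \ldots$ sequentially by querying $[1,i]$ against $(x_1, \ldots, x_{i-1}, +1)$ and reading off $x_i$ from the answer, again giving $n$ queries. For the full set, each query tests a single candidate in $\pmone^n$, so Grover's search over $\pmone^n$ identifies the hidden $x$ in $O(2^{n/2})$ queries.

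The main case is the bounded-size upper bound $O(n/\sqrt{k})$. I would partition $[n]$ into $\lceil n/k \rceil$ blocks $B_i$ of size at most $k$. On each block, full wildcard access is available ($2^{B_i} \subseteq \cQ$), so Theorem~\ref{thm:main} applied to a string of length $k$ together with Theorem~\ref{thm:valvalues} gives $\Q^{2^{B_i}}(\oplus) = O(\sqrt{k})$, and restricting queries to any subset $T \subseteq B_i$ yields $\Q^{2^{B_i}}(\oplus_T) = O(\sqrt{k})$ for every parity pattern $T$. Coherently composing this parity subroutine with the Bernstein-Vazirani algorithm---which recovers a $k$-bit string in a single arbitrary parity query---produces an algorithm that learns $x_{B_i}$ in $O(\sqrt{k})$ $\cQ$-queries; summing over the $\lceil n/k\rceil$ blocks gives $O(n/\sqrt{k})$.

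The only step I anticipate as non-routine is the last one: implementing the BV parity oracle as a uniform-depth coherent subroutine across all patterns $a \in \zone^{B_i}$ simultaneously. I would handle this either by padding the parity algorithm to its worst-case depth $O(\sqrt{k})$, or by invoking a standard composition theorem for bounded-error quantum query algorithms. Everything else reduces to instantiating the lower-bound characterization, applying one of a few elementary algorithms, and bookkeeping the block partition.
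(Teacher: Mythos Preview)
Your proposal is correct, and for the lower bounds and the three elementary upper bounds (contiguous, prefix, full set) it matches the paper's argument exactly.

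For the bounded-size upper bound there is a small but real difference in decomposition. You partition $[n]$ into $\lceil n/k\rceil$ blocks, observe that $2^{B_i}\subseteq\cQ$ on each block, and run Bernstein--Vazirani block-by-block using an $O(\sqrt{k})$ parity subroutine; this costs you the ``non-routine'' step of uniformly composing BV with a bounded-error parity oracle across all patterns. The paper instead observes directly that $\cQ=\{S:|S|\le k\}$ is \emph{downward closed} and applies Lemma~\ref{lem:downwardclosed} (equivalently Corollary~\ref{cor:learning}) to the whole string at once, yielding $\Q^\cQ(\rec)=\Q^\cQ(\oplus)=\Theta(\val_\cQ)=\Theta(n/\sqrt{k})$ without any partitioning. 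The global route is cleaner and sidesteps the coherent-composition issue entirely, since the downward-closed simulation (set the bits outside $S$ to $1$ and intersect the query set with $S$) works uniformly in superposition over all $S\subseteq[n]$; your block approach arrives at the same bound but with avoidable overhead.
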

In particular, setting $k = n$ in the first bound above tightens the $O(\sqrt{n} \log n)$ algorithm of Ambainis and Montanaro~\cite{AM14} for quantum search with wildcards, and improves it to the optimal $O(\sqrt{n})$. We note that the optimal $O(\sqrt{n})$ bound in this case has already been shown by Belovs~\cite{Bel15} using completely different techniques from ours. 
Additionally, our results in the prefix-query model prove that recovering any bitstring cannot be done with a sublinear number of prefix queries, which resolves an open question raised in~\cite{xu2023quantum}. The result in the only full set case recovers the well-known tight $\Omega(\sqrt{N})$ lower bound for searching for a marked element in a database with $N$ items~\cite{BBHT98} and the matching upper bound is given by \cite{Gro96}. 

A lower bound of $\Omega(n/\log(n))$, for learning the entire string, was shown by \cite{CILNTTY12} and \cite{BV23} when the query set is restricted to contiguous blocks. We recover this lower bound using our framework.

\subsection{Our Techniques}

We now discuss the techniques that we use to show Theorems~\ref{thm:main} and \ref{thm:valvalues}. 
We then make a comparison with related work.

\subsubsection{Sketch of Proof of Theorem~\ref{thm:main}}
Our analysis begins from the characterization of quantum query complexity by the negative-weight adversary bound~\cite{Rei11}. This characterization considers the quantum query complexity of a function~$F : D \to E$, with respect to a query set $\cR$, where every query $q \in \cR$ is a function mapping $D$ to a set of query outcomes. Reichardt~\cite{Rei11} expresses the query complexity of $F$ w.r.t.\ $\cR$, up to a constant factor, in terms of the optimum value of a certain semidefinite optimization program known as the general adversary bound, which we refer to as the \textit{primal adversary bound}, defined as
\begin{equation}
\label{eq:dfn-adv}
\ADV^{\cR,\pm}(F) := \max_{\Gamma \neq 0}\frac{\|\Gamma\|}{\max_{q \in \cR} \|\Gamma_q\|}.
\end{equation}
Here the maximization is over all matrices $\Gamma \in \bR^{D \times D}$ 
with $F(x) = F(y) \implies \Gamma[x,y] = 0$, and the maximization in the denominator is over all available queries $\cR$. For a query $q \in \cR$, the matrix $\Gamma_q$ is defined to be equal to $\Gamma$, with all those $[x,y]$ entries zeroed out where the outcome of the query $q$ is the same on inputs $x$ and $y$. This quantity was defined in~\cite{HLS07}, where it was shown to be a lower bound on quantum query complexity. Reichardt~\cite{Rei11} subsequently showed that the query complexity of $F$ with respect to query set $\cR$ is equal to $\ADV^{\cR,\pm}(F)$ up to a constant factor.

It is well-known (see e.g.,~\cite[Theorem~1.2]{Rei09}) that the negative-weight adversary bound can be written as a semidefinite program (SDP). We show in \Cref{lem:equivalence_of_formulations} that the feasible solutions of the maximization version of this SDP are directly related to the choices of $\Gamma$ in~\Cref{eq:dfn-adv}. It then follows that symmetry reductions on the SDP-level can be translated into additional symmetry constraints on the matrix $\Gamma$ in~\Cref{eq:dfn-adv}, without changing the optimal objective value. Note that attempting to naively apply such a symmetrization to the formulation in \Cref{eq:dfn-adv} does not immediately work since, for example, the numerator could potentially reduce after symmetrizing. This is why all our symmetrization is done at the SDP level.

For the specific case where $F = \oplus$ and $\cR = \{(S,b) : S \in \cQ, b \in \pmone^S\}$, with $\cQ \subseteq 2^{[n]}$, we subsequently characterize in \Cref{lem:bit_flip_group_symmetry,lem:xor_structure_of_feasible_solution} the symmetries in the primal adversary bound SDP, from which we deduce in \Cref{cor:adv_xor_matrix} that a maximizing $\Gamma$ can be assumed to be the communication matrix of an $\XOR$ function. That is, we can assume that $\Gamma = M_{f \circ \XOR}$ for some $f : \pmone^n \to \bR$, where $M_{f \circ \XOR}(x, y) := f(x \oplus y)$, i.e., the $(x,y)$'th entry only depends on the bitwise XOR of $x$ and $y$. It is well known that such matrices are simultaneously diagonalizable by the Hadamard matrix, and hence diagonal in the Fourier basis, which motivates analyzing the numerator and denominators of \Cref{eq:dfn-adv} using Fourier analysis.

It is known that the numerator of \Cref{eq:dfn-adv} in this case, which is the spectral norm of $M_{f \circ \XOR}$, equals the (appropriately scaled) maximum Fourier coefficient of $f$~\cite{BC02}. For the denominator, we analyze the sparsity structure of the matrix $\Gamma_q$, for any query $(S,b) = q \in \cR$. Through a series of calculations involving Fourier analysis of restricted Boolean functions, we show that the matrix $\Gamma_q$'s spectral norm equals the (appropriately scaled) maximum, over all fixings of bits in $\bar{S}$, standard deviation of the restriction of $f$ to the subcube that fixes these bits in $\bar{S}$. Putting everything together, we obtain $\val_\cQ = \ADV^{\cR,\pm}(\oplus)$, which together with Reichardt's characterization~\cite{Rei11} implies Theorem~\ref{thm:main}. 

We provide a schematic illustration of the proof overview for \Cref{thm:main} in \Cref{fig:outline_framework_buildup}.

\begin{figure}[htbp!]
\centering

\begin{tikzpicture}[
  >={Latex[length=2mm]},
  edge/.style={->, thick},
  box/.style={rounded corners=2pt, draw, thick, align=center, fill=blue!5, inner sep=3pt, minimum width=2cm, minimum height=8mm}
]

\node[box] (start) at (-6,0) {$\Q^{\cQ}(\oplus)$};
\node[box] (adv) at (0,0) {Optimal\ sol.\ to\\ $\ADV^{\cQ,\pm}(\oplus)$};
\node[box] (baradv) at (6,0) {Optimal\ sol.\ to\\ $\overline{\ADV}^{\cQ,\pm}(\oplus)$};
\node[box] (symmalt) at (6,-3) {Symmetrized sol.\ to\\ $\overline{\ADV}^{\cQ,\pm}(\oplus)$};
\node[box] (symm) at (0,-3) {Symmetrized sol.\ to \\ $\ADV^{\cQ,\pm}(\oplus)$};
\node[box] (valQ) at (-6,-3) {$\val_\cQ$};

\draw[dashed,<->] (start) to node[above] {\cite[Theorem~1.4]{Rei11}} node[below] {(see \Cref{thm:adv_equals_quantum_queries})} (adv);
\draw[<->] (adv) to node[above] {\Cref{lem:equivalence_of_formulations}} (baradv);
\draw[->] (baradv) to node[left] {\Cref{lem:bit_flip_group_symmetry,lem:xor_structure_of_feasible_solution}} (symmalt);
\draw[->] (symmalt) to node[above] {\Cref{lem:xor_structure_of_feasible_solution}} (symm);
\draw[->] (symm) to[bend right=20] node[above] {Numerator:~\Cref{lem:xornorm}} (valQ);
\draw[->] (symm) to[bend left=20] node[below] {Denominator:~\Cref{lem:blockmatrix,lem:diagonal-blocks,lem:diagonal-blpcks} and \Cref{thm:denominatornorm}} (valQ);

\draw[<->, dashed] (valQ) to node[left] {\Cref{thm:main}} (start);
\draw[<->] plot[smooth] coordinates {(adv.south) (-1,-1) (-5.25,-1.5) (valQ.north)};
\node[above] at (-3,-1.25) {$=$};
\end{tikzpicture}
\caption{A schematic illustration to obtain our framework, which enables us to transform the problem of computing $\Q^{\cQ}(\oplus)$, for an arbitrary collection of sets $\cQ \subseteq 2^{[n]}$, into the task of analyzing an abstract analytic optimization program $\val_{\cQ}$. The solid lines represent exact equalities, whereas the dashed lines represents characterizations up to constants.}
\label{fig:outline_framework_buildup}
\end{figure}
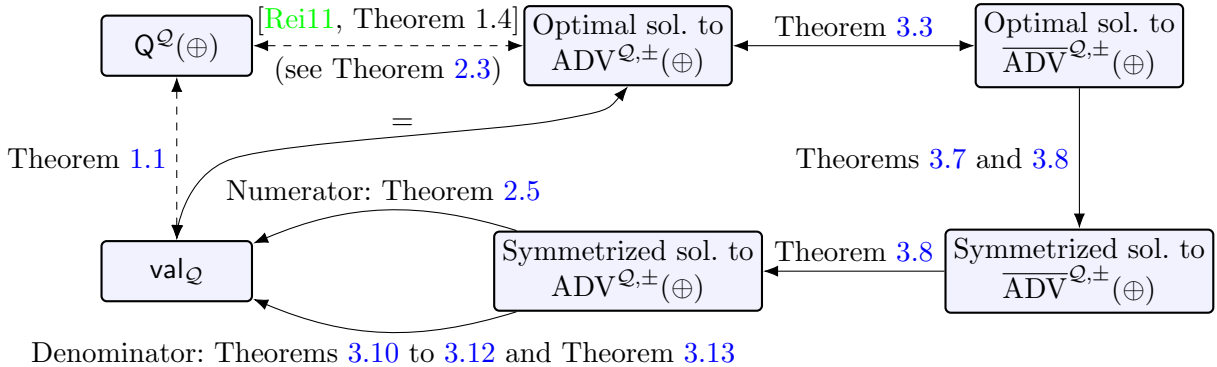

\subsubsection{Sketch of Proof of Theorem~\ref{thm:valvalues}}
Theorem~\ref{thm:valvalues} proves matching lower and upper  bounds on $\val_\cQ$ for four different classes of sets $\cQ$. Note that $\val_\cQ$ involves a maximization over all odd real-valued functions over the Boolean hypercube. The lower bounds are proven by exhibiting explicit functions for which every valid (for the class of $\cQ$ considered) subcube restriction has low standard deviation with respect to the maximum absolute value of the function. The upper bounds are proven by showing that every odd real-valued function over the Boolean hypercube has a high standard deviation (with respect to its maximum absolute value) when restricted to some subcube that is valid for the class of $\cQ$ considered. We now briefly discuss the proofs we give for various classes of $\cQ$.

\paragraph{Bounded size} Theorem~\ref{thm:valbounded} deals with query sets of bounded size. The valid subcubes are the ones that fix at least $n-k$ variables. The lower bound is witnessed by the function that maps each bit-string to its Hamming weight minus $n/2$ (note that this is an odd function). For the upper bound, we first prove by an induction on the number of variables, that for every odd function there is some subcube (not necessarily valid) which simultaneously has a high variance and the absolute values of all its degree-1 Fourier coefficients are large. If this subcube is not valid (i.e., if it fixes less than $n-k$ variables), we argue via the probabilistic method and Fourier analysis that the function can be further restricted by appropriately setting additional variables to a valid subcube, while simultaneously ensuring that the variance does not drop too much.

\paragraph{Contiguous blocks} Theorem~\ref{thm:valcontig} deals with query sets comprising consecutive variables (the same theorem statement holds true even if one is allowed wraparound). The valid subcubes are the ones that restrict consecutive variables. The upper bound of $n$ follows from the observation that the query sets include all singletons, and Proposition~\ref{prop:singletons} which verifies via the framework of Theorem~\ref{thm:main} the obvious fact that if all the single bit queries are allowed, then the quantum query complexity is $O(n)$.

The lower bound is witnessed by a function $f$ defined as follows. Given an input bit string of size $n$, view it as consisting of $n/k$ disjoint blocks of size $k$, where $k=\log n - \log \log n +O(1)$. Let the blocks be denoted by $z^{(1)},\ldots, z^{(n/k)}$. First, each $z^{(i)}$ is subjected to a function $g$ that maps the string $1^k$ to $1$, $(-1)^k$ to $-1$ and maps all other strings to $0$. Finally $f$ is defined to be $\sum_{i=1}^{n/k} g(z^{(i)})$. Note that each subcube setting contiguous variables (even if wraparound is allowed) touches, i.e., the endpoints of the block of queried indices are contained in, at most two of the blocks, so the restricted function is essentially a reduced version of $f$ on a smaller number of blocks plus a bounded function on at most $2k$ bits. Thus, up to an additive shift by a bounded function, the (random variable corresponding to the) restriction of $f$ is essentially the sum of independent random variables taking values in $\{0, 1, -1\}$ (the independence is due to the disjointness of the blocks), each of which is 0 except with probability $1/2^{k-1}$. Furthermore, the additional bounded summand is also independent of this sum and has $O(1)$ variance. We can thus apply linearity of variance for independent random variables to finish our argument.

\paragraph{Prefixes} Theorem~\ref{thm:valprefix} deals with query sets which are prefixes, i.e., contain the first $i$ variables for some positive integer $i$. The theorem shows that for this class $\cQ$ of queries $\val_\cQ$ is $\Theta(n)$. Valid subcubes in this setting fix a set of variables that is a suffix, i.e., contains the last $i$ variables for some non-negative integer $i$.

The lower bound is witnessed by a function that is defined first by constructing a \emph{decision list}: if $x_{n-1} = -1$ then output 1, else if $x_{n-2}=-1$ then output 2, \dots, else if $x_1 = 1$ then output $n-1$, else output $n$.
We then turn this decision list into an odd function by reflecting it about an additional appended variable and inverting the reflection (both the input variables and the function output). The maximum value of this function is $n$. At a high level, inside subcubes that only fix suffixes, the values that the function takes `grow linearly' but their probabilities `drop exponentially'. This allows us to show that every valid subcube has $O(1)$ variance.

For the upper bound we consider an arbitrary odd real function $f$, and consider the sequence of subcubes that are obtained by fixing the last $i$ variables consistent with an input that maximizes $f$, for $i=0, 1, \ldots, n$. Note that the average of $f$ over the first subcube (which is $\pmone$) is $0$ (as the function is odd and hence balanced), and over the last subcube is $\|f\|_\infty$. Thus, we have that there is a point where the function average jumps by at least $\|f\|_\infty/n$. This implies that switching the value of the variable that is being restricted at that step leads to a large change in the function average. We could quantify this phenomenon in the form of a lower bound of $\|f\|_\infty/n$ on the absolute value of the degree-1 Fourier coefficient of the restricted function (note that potentially multiple variables have been fixed before we reach this point) corresponding to this variable. A lower bound on the variance, and hence an upper bound of $\val_\cQ$, then follows easily by using the Fourier formula for variance.
\begin{remark}
In \Cref{sec:prefix_dictator} we provide a proof directly using the negative-weight adversary bound without resorting to our framework, of the stronger result that computing the last bit of an input string (which can only be an easier task than learning the whole string) requires $\Omega(n)$ prefix queries. We choose to still retain \Cref{thm:valprefix} and its proof to demonstrate the versatility of our framework.
\end{remark}

\paragraph{Only full set}
\Cref{thm:valfull} deals with the case when $\cQ$ consists of the full set of variables. In this case the denominator in the definition of $\val_\cQ$ is just the standard deviation of the function over all inputs.
For the lower bound, we construct a function $f$ that takes only two non-zero values: value $+1$ on the all-$1$ input, value $-1$ on the all-$-1$ input, and $0$ everywhere else. From this, its standard deviation turns out to be extremely small, exactly $2^{-(n-1)/2}$, giving the lower bound.

For the upper bound, consider any odd function $f$. Because $f$ is odd, it attains values $\|f\|_\infty$ and $-\|f\|_\infty$ on some pair of opposite inputs. These two points alone already contribute enough to the second moment to guarantee that the standard deviation in the whole cube is at least $\|f\|_\infty \cdot 2^{-(n-1)/2}$. Dividing $\|f\|_\infty$ by this quantity, as required in the definition of $\val_\cQ$, shows that no odd function can achieve a value above about $2^{(n-1)/2}$. This gives the upper bound.

Together, the lower and upper bounds match, so the value for this setting of $\cQ$ is exactly $2^{(n-1)/2}$.

\subsubsection{Comparison With Related Work}
The quantum $\cQ$-substring query complexity of $\rec$ with $\cQ = 2^{[n]}$ was introduced and studied by Ambainis and Montanaro~\cite{AM14}, who dubbed this problem the ``search with wildcards'' problem and gave an $O(\sqrt{n} \log n)$ upper bound and a near matching $\Omega(\sqrt{n})$ lower bound. The algorithm by Ambainis and Montanaro uses the Pretty Good Measurement to iteratively learn $O(\sqrt{n})$ bits and fixes any errors made via binary search. In the search with wildcards setting, binary search can be performed with just $O(\log n)$ queries. Therefore, the algorithm learns the entire string in $O(\sqrt{n}\log n)$ queries. Setting $k = n$ in our Corollary~\ref{cor:maininstantiations} gives a tight $O(\sqrt{n})$ bound for this task.
Ambainis and Montanaro also gave quantum query algorithms for the combinatorial group testing problem, which they observed to be a generalization of the $\cQ$-substring query complexity of $\rec$ with $\cQ = 2^{[n]}$. Subsequently, Belovs~\cite{Bel15} gave, among other results, tight bounds on the quantum query complexity of combinatorial group testing, which implies a tight $O(\sqrt{n})$ upper bound on the quantum $2^{[n]}$-substring query complexity of $\rec$. In contrast with the algorithms of Ambainis and Montanaro, Belovs showed a $O(\sqrt{n})$ upper bound on the dual of the negative-weight adversary bound. Since the negative-weight adversary bound is known to be asymptotically equal to quantum query complexity~\cite{Rei11}, this immediately gives an upper bound on $\Q^{\cQ}(\rec)$.

In contrast to the work of Ambainis and Montanaro~\cite{AM14} who construct an explicit algorithm, and that of Belovs~\cite{Bel15} who constructs an explicit solution to the dual adversary bound, we deal, perhaps counter-intuitively, completely with the \emph{primal} version of the negative-weight adversary bound~\cite{HLS07}. Our work does have the drawback that it does not give explicit algorithms for any of our upper bounds. However, as far as we are aware, ours is the first work to prove novel quantum query \emph{upper} bounds using the \emph{primal} version of the negative-weight adversary bound (which is primarily used as a lower bound technique) without explicitly resorting to SDP duality.

When $\cQ$ is the set of all contiguous blocks, a lower bound of $\Omega(n/\log(n))$ was previously shown by \cite[Lemma 7]{CILNTTY12} via the positive-weight adversary method. The same lower bound, even for computing Parity, was also shown by \cite[Theorem~19]{BV23}, where they lower bound the approximate degree and obtain the lower bound via the polynomial method \cite{BBCMdW01}. Furthermore, their lower bound holds even in the setting where the success probability of the algorithm is inverse polynomially close to 1/2. In this work, we obtain the same $\Omega(n/\log(n))$ query lower bound (in the bounded-error setting) using our framework, using techniques different from the previous works.

Another work of relevance to ours is~\cite{BBGK18}, who showed classical lower bounds for composed functions using quantum upper bounds for combinatorial group testing~\cite{Bel15}. It is not immediately clear to us if our results have any new consequences for classical lower bounds using their framework, but our work can be viewed as partial progress towards answering their question~\cite[Section~V]{BBGK18} ``Are there nontrivial quantum algorithms that use $g$-queries to learn $x$ for some function $g \notin \cbra{\OR, \XOR}$? Are there nontrivial quantum algorithms
that use $g$-queries to compute some other function $h(x)$ of the input?'' Theorem~\ref{thm:main} gives a complete framework that allows one to easily analyze the complexity of computing Parity (which is often equivalent to the complexity of learning the full input string) when $g$ is restricted to subcube queries. Our framework gives provably tight bounds in this setting for all classes of allowed subcubes to query (with the constraint that if the algorithm is allowed to query if $x$ is in a subcube whose fixed variables are $S$, all fixing of variables in $S$ are allowed queries). The key advantage of our framework is that it replaces quantum and SDP-based reasoning with a simple, combinatorial analysis of real-valued functions on the Boolean hypercube. We justify this simpler viewpoint via Theorem~\ref{thm:valvalues}, which establishes tight bounds within this framework for several natural classes of allowed subcube query sets.

\subsubsection{Outlook}

While our results do not apply in the combinatorial group testing setting, where only OR queries are allowed and the input string is promised to have bounded Hamming weight, we consider a generalization of the search with wildcards problem in a different direction, namely when a query algorithm is restricted on the support of the substring queries it can make.
Our results are specifically of interest in the context of biological applications such as genome reconstruction, DNA sequencing, and RNA sequencing: In these models, a unit-cost task is often assumed to be observing a contiguous substring of the input string~\cite{SS95, robertson2010novo, CWE15, LD24}. Our second bound in Corollary~\ref{cor:maininstantiations} shows that quantum algorithms cannot provide an advantage over classical ones in this setting.

We view our work as an invitation to explore upper bounds through the \emph{primal} adversary formulation itself, by studying the structural properties that optimal adversary matrices must satisfy. While this formulation can tend to be regarded as less convenient than its dual, our results illustrate that, with sufficient symmetrization and structural insight, the primal characterization can in fact yield tight bounds and reveal new patterns in how optimal $\Gamma$ matrices arise. This suggests that it may be a more practical tool for upper bounds than previously thought.

\section{Preliminaries}\label{sec:prelims}

All logarithms in this paper are taken base 2. For a positive integer $n$ and integers $i \leq j \in [n]$, we use the notation $[i, j] := \cbra{k \in [n] : i \leq k \leq j}$. For strings, $x, y \in \pmone^n$, we use the notation $x \oplus y$ to denote the bitwise XOR of $x$ and $y$, or equivalently, the string obtained by taking component-wise multiplication of $x$ and $y$. For a string $x \in \pmone^n$, we use the notation $|x|$ to denote the Hamming weight of $x$, i.e., the number of $-1$'s in $x$.
For a string $x \in \pmone^n$ and $S \subseteq [n]$, we use $x_S$ to denote the string $x$ restricted to the bits in $S$. For an expression $X$, we define $\bI[X]$ to equal $1$ if $X$ is true, and $0$ if $X$ is false. For a string $x = (x_1, x_2, \dots, x_n) \in \pmone^n$, we use the notation $-x$ to denote the string $(- x_1, - x_2, \dots, - x_n)$. We denote vectors with the ket-notation, $\ket{v}$, and refer to their conjugate transpose with the bra-notation, i.e., $\bra{v}$.

A function $f : \pmone^n \to \bR$ is said to be an \emph{odd function} if $f(x) = -f(-x)$ for all $x \in \pmone^n$. We abuse notation and use $f$ to interchangeably mean a function as well as a random variable corresponding to picking a value of $f$ uniformly at random. That is, for $f : \pmone^n \to \bR$, we use $\E[f] := \E_{x \in \pmone^n}[f(x)]$, $\E[f^2] := \E_{x \in \pmone^n}[f(x)^2]$, and $\Var(f) := \E[f^2] - \E[f]^2$. We use $\|f\|_\infty$ to denote $\max_{x \in \pmone^n}|f(x)|$. For a positive integer $n$ and a $2^n$-dimensional real matrix $M$ whose rows and columns are indexed by $n$-bit strings, we say that a $M$ is the \emph{communication matrix of an $\XOR$ function} if there exists a function $f : \pmone^n \to \bR$ such that $M(x, y) = f(x \oplus y)$. We use the notation $M_{f \circ \XOR}$ to denote the communication matrix of $f \circ \XOR$.

A collection of sets $\cQ \subseteq 2^{[n]}$ is said to be \emph{downward closed} if for all $Q \in \cQ$, all subsets of $Q$ are also in $\cQ$.

For an $n$-dimensional real square matrix $M$, its spectral norm, denoted $\|M\|$, is defined by
\[
\|M\| := \sup_{x \in \bR^n : \norm{x}_2 = 1}\norm{Mx}_2,
\]
where $\norm{\cdot}_2$ denotes the usual Euclidean 2-norm. Equivalently it is well known that the following characterization also holds true.
\begin{fact}\label{fact:spectral}
    For an $n$-dimensional real square matrix $M$,
    \[
    \|M\| = \sqrt{\lambda_{\max}(M^T M)},
    \]
    where $\lambda_{\max}(\cdot)$ equals the maximum eigenvalue.
\end{fact}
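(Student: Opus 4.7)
The plan is to reduce the identity to the variational (Rayleigh quotient) characterization of the largest eigenvalue of a real symmetric matrix. First I would rewrite the defining expression for the spectral norm by squaring it and exchanging the $2$-norm for an inner product: for any unit vector $x \in \bR^n$,
\[
\|Mx\|_2^2 = \langle Mx, Mx \rangle = x^T M^T M x,
\]
so that
\[
\|M\|^2 = \sup_{x \in \bR^n :\, \|x\|_2 = 1} x^T M^T M x.
\]
This reduces the claim to showing that the supremum on the right equals $\lambda_{\max}(M^T M)$.

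Next I would observe that $A := M^T M$ is real, symmetric, and positive semidefinite: symmetry is immediate from $(M^T M)^T = M^T M$, and positive semidefiniteness follows from $x^T A x = \|Mx\|_2^2 \geq 0$. The spectral theorem then furnishes an orthonormal eigenbasis $\{v_1, \dots, v_n\}$ of $\bR^n$ with nonnegative eigenvalues $\lambda_1 \geq \lambda_2 \geq \cdots \geq \lambda_n \geq 0$, where $\lambda_1 = \lambda_{\max}(A)$.

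The final step is to expand an arbitrary unit vector $x = \sum_{i=1}^n c_i v_i$ with $\sum_i c_i^2 = 1$, and compute
\[
x^T A x = \sum_{i=1}^n \lambda_i c_i^2 \leq \lambda_{\max}(A) \sum_{i=1}^n c_i^2 = \lambda_{\max}(A),
\]
with equality attained by $x = v_1$. Combining with the previous display gives $\|M\|^2 = \lambda_{\max}(M^T M)$, and taking the square root concludes the proof.

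There is no real obstacle here; the statement is a textbook fact, and the only mildly non-trivial ingredient is invoking the spectral theorem for the symmetric matrix $M^T M$ (together with the Rayleigh quotient formula for $\lambda_{\max}$), both of which are standard. The subtlety to watch out for in the write-up is simply the order of operations: squaring the norm, switching to $M^T M$, applying symmetry/positive semidefiniteness, and only then invoking the eigendecomposition.
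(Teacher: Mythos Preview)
Your proof is correct; the paper states this as a well-known fact without proof, so there is nothing to compare against. Your argument via the Rayleigh quotient and the spectral theorem for the symmetric positive semidefinite matrix $M^TM$ is the standard textbook route and is entirely appropriate here.
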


\subsection{Quantum Query Complexity}

Let $\cal{H}$ be a finite-dimensional Hilbert space. Any quantum algorithm $\cal{A}$ can be represented as a sequence of unitary operators $\{U_i\}_{i=0}^{T} \in \cal{L}(\cal{H})$ on an initial state $\ket{\psi_0} \in \cal{H}$. The algorithm is terminated by a measurement on $\ket{\phi} = U_TU_{T-1} \cdots U_1U_0\ket{\psi_0}$ to observe an outcome.

Let $D$ and $E$ be finite sets and $F:D\rightarrow E$ be a function. For every $x \in D$, let $O_x$ be a unitary operator that depends on $x$, referred to as the \textit{oracle}. A quantum query algorithm $\cal{A}$ alternates input-independent unitary operators with calls to the oracle, i.e., it applies the operations $U_0$, $O_x$, $U_1$, $O_x$, etc. The oracle calls are referred to as \textit{queries} to the input $x$. The algorithm $\cal{A}$ is a bounded-error, quantum query algorithm for computing $F$ if $\forall x \in D$ the output of $\cal{A}$ is $F(x)$ with probability at least $2/3$. The quantum query complexity of computing $F$ is the minimum number of queries required to compute $F$ with probability $2/3$. This is denoted by $\Q(F)$.

In this work, we consider algorithms with access to different \emph{query sets} and therefore use $\Q^{\cal{R}}(F)$ to refer to the quantum query complexity of computing $F$ using query set $\cal{R}$. Formally a query set $\cal{R}$ is a set consisting of functions $r : \pmone^n \to \cal{O}$ for some discrete range $\cal{O}$. The usual query model is captured by the query set $\cal{R} = \cbra{r_i : \pmone^n \to \pmone \mid i \in [n]}$ where $r_i(x) := x_i$. The final state of a quantum query algorithm with query set $\cR$ looks like 
\begin{equation*}
    \ket{\phi} = U_T O_x \ldots U_2 O_x U_1 O_x U_0\ket{\psi_0}
\end{equation*}
where the action of $O_x$ on each basis state is
\begin{equation*}
    O_x : \ket{r}\ket{b}\rightarrow \ket{r}\ket{b + r(x) \mod |\cal{O}|} \qquad \forall r\in \cR, b \in \cbra{0, 1, \dots, |\cal{O}|}.
\end{equation*}

Let $x\in\{-1,1\}^n$ be an input string and $\cal{Q}\subseteq 2^{[n]}$. We abuse notation and use $\cal{Q}$ to be the query set consisting of all queries $q = (S,b_S)$ of the form $q(x) = \mathbb{I}[x_S = b_S]$ where $S\in\cal{Q}$, $b_S\in\{-1,1\}^S$ and $x_S$ is the substring of $x$ restricted to indices in $S$. Throughout this work, we only work with queries of this form. 
Thus, for $\cQ \subseteq 2^{[n]}$, we abuse notation and use $\cQ$ to also denote the query set $\cbra{q = (S, b) : S \in \cQ, b \in \pmone^S}$.

\begin{defi}[Positive semi-definiteness]
    A matrix $\Gamma \in \bR^{N \times N}$ is said to be \emph{positive semi-definite}, denoted by $\Gamma \succeq 0$, if 
    \begin{equation*}
        \bra{v}\Gamma\ket{v} \ge 0 \qquad \forall \ket{v}\in\mathbb{R}^N.
    \end{equation*}
\end{defi}

\paragraph*{Standard formulation of primal adversary bound}

The primal adversary bound is a semi-definite program that characterizes the quantum query complexity of computing function $F:D\rightarrow E$, relative to a query set $\cal{R}$. To that end, $\forall q \in \cal{R}, \Delta_q \in \{0,1\}^{D\times D}$ is the matrix defined as:
\[
\Delta_q[x, y] = \begin{cases}
    1, & q(x) \neq q(y), \\
    0, & \textnormal{otherwise}.
\end{cases}
\]
That is, $\Delta_q[x,y] = 1$ if and only if the query $q$ can distinguish between inputs $x$ and $y$. Throughout this work, we use the notation $\ADV^{\cal{R},\pm}$ to denote the adversary bound relative to query set $\cR$. Now we state the primal version of the general adversary bound, which we will refer to as the \textit{primal adversary bound}.

\begin{tcolorbox}[title=Primal adversary bound]
    \begin{align}\label{eqn:primal_adversary_bound_standard}
    \ADV^{\cal{R}, \pm}(F) = & \max_{\Gamma}\quad \frac{\|\Gamma\|}{\max\limits_{q\in{\cal{R}}}\|\Gamma\circ\Delta_q\|} \\
    \nonumber\text{subject to}\quad & \Gamma \in \mathbb{R}^{D\times D},\\
    \nonumber& \Gamma \text{ is symmetric},\\
    \nonumber& \Gamma[x,y] = 0.\tag{$\forall \; x,y\in D \text{ with } F(x) = F(y)$}
\end{align}
\end{tcolorbox}

Since the objective function is invariant under multiplying $\Gamma$ with a constant factor, we can without loss of generality assume that the denominator is equal to $1$. Thus, we can equivalently write the SDP computing $\ADV^{\cal{R}, \pm}$ with objective function $\max_\Gamma\|\Gamma\|$, and the additional constraint that $\forall q\in\cal{R},\|\Gamma\circ\Delta_q\|\le 1$.

Reichardt~\cite{Rei11} showed that the primal adversary bound is a tight characterization of quantum query complexity.

\begin{theorem}[{\cite[Theorem~1.4]{Rei11}}]\label{thm:adv_equals_quantum_queries}
    Let $D$ and $E$ be finite sets and $F:D\rightarrow E$ be a function. The bounded-error, quantum query complexity of computing $F$ using query set $\cal{R}$ is characterized by the adversary bound.
    \begin{equation*}
        \Q^{\cal{R}}(F) = \Theta(\ADV^{\cal{R}, \pm}(F)).
    \end{equation*}
\end{theorem}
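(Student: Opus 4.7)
The plan is to prove the two-sided bound by handling the lower and upper directions separately, since they require fundamentally different ideas. For the lower bound $\Q^{\cR}(F) = \Omega(\ADV^{\cR,\pm}(F))$ I would adapt the classical adversary potential argument of Hoyer--Lee--Spalek essentially unchanged. Given a feasible $\Gamma$ and an arbitrary $T$-query algorithm with states $\ket{\psi_x^t}$ after $t$ queries on input $x$, track the progress matrix $W^{(t)}[x,y] := \Gamma[x,y]\cdot \braket{\psi_x^t}{\psi_y^t}$. Since all initial states coincide, $\|W^{(0)}\| = \|\Gamma\|$; since the algorithm must distinguish inputs with different images, $\|W^{(T)}\|$ must be small on the support of $\Gamma$ (quantitatively $O(\sqrt{\varepsilon})\|\Gamma\|$ for bounded error $\varepsilon$); and the central lemma states that one oracle call changes $\|W^{(t)}\|$ by at most $2\max_{q\in\cR}\|\Gamma\circ\Delta_q\|$, which follows by decomposing the post-query state in the query register and applying operator-norm sub\-multiplicativity to the off-diagonal blocks. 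Telescoping these inequalities over the $T$ queries delivers $T = \Omega(\ADV^{\cR,\pm}(F))$.

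For the upper bound $\Q^{\cR}(F) = O(\ADV^{\cR,\pm}(F))$, which is the deep content due to Reichardt, the plan is to pass through SDP duality and span programs. First I would dualize the primal program defining $\ADV^{\cR,\pm}(F)$; the resulting dual has a natural interpretation as a (non-Boolean, weighted) span program $P$ computing $F$, whose witness size equals (up to constants) the dual objective value, i.e., $O(\ADV^{\cR,\pm}(F))$. The second, heavier ingredient is Reichardt's evaluation algorithm: any span program $P$ of witness size $W$ can be evaluated by a quantum algorithm making $O(W)$ queries to the input, implemented as a phase-estimation-based quantum walk on the bipartite graph naturally associated with $P$, whose discriminant operator has a zero eigenvector aligned with the initial state on negative instances and a spectral gap of order $1/W$ on positive instances. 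Composing the two steps produces a bounded-error $O(\ADV^{\cR,\pm}(F))$-query quantum algorithm for $F$.

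The principal obstacle, and the part I would not attempt to reprove, is the span-program evaluation step: getting the tight query count requires careful control of the effective spectral gap of the discriminant, which is the technical heart of Reichardt's original argument and genuinely nontrivial. Since the theorem is used here only as a black box characterization, I would simply cite \cite{HLS07} for the lower bound and \cite{Rei11} for the matching upper bound. The one conceptual check worth making explicit is that the entire framework extends verbatim beyond the standard single-bit query model to arbitrary query sets $\cR$ whose queries are functions $q:D\to\mathcal{O}$, provided one defines $\Delta_q$ as the indicator matrix of input pairs distinguished by $q$; this is precisely the convention adopted in the statement, so no adaptation is required for the subcube-test query sets used throughout this paper.
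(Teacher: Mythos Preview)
Your proposal is correct in substance, but the paper handles the extension to a general query set $\cR$ by a different and more economical route. Rather than asserting that the H{\o}yer--Lee--\v{S}palek lower-bound argument and Reichardt's span-program upper bound go through verbatim when $\Delta_q$ is generalized to arbitrary queries $q:D\to\mathcal{O}$, the paper gives a black-box reduction to the standard index-query case: for each $x\in D$ encode the input as the bit string $\cR(x) := (r(x))_{r\in\cR}$, and define the partial function $G$ on such strings by $G(\cR(x)) := F(x)$. An $\cR$-query to $x$ is then literally a standard index query to $\cR(x)$, whence $\Q^{\cR}(F) = \Q(G)$ and $\ADV^{\cR,\pm}(F) = \ADV^{\pm}(G)$, and Reichardt's theorem applied to $G$ yields the claim directly.

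The advantage of the paper's approach is that it never opens the box: one need not re-examine the span-program construction or the effective-spectral-gap analysis to confirm they tolerate non-Boolean query outcomes. Your approach is also valid, and in fact your sketch of the lower-bound potential argument is accurate, but the claim that ``no adaptation is required'' for the upper bound places on the reader the burden of re-verifying Reichardt's machinery with the generalized $\Delta_q$. The encoding trick sidesteps this entirely and is worth knowing.
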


Although \cite[Theorem~1.4]{Rei11} states the adversary bound for index queries, it can be observed that the adversary bound characterizes the quantum query complexity for any query set and therefore we present it as such. 
Let $\cR$ be a query set (assume for simplicity that all queries output values in $\pmone$. The argument below can easily be generalized if this is not the case). For every $x \in D$, define the string $\cR(x) = (r(x))_{r \in \cR}$. Define the partial function $G : \pmone^\cR \to E$ on the input domain $\cbra{z \in \pmone^\cR : z = \cR(x) \textnormal{ for some } x \in D}$ as $G(\cR(x)) := F(x)$. Observe that querying a input variable, say $R \in \cR$, to $G$ on input $\cR(x)$ is exactly the same as making the query $R$ on input $x$. Thus, $\Q^{\cR}(F) = \Q(G)$ and $\ADV^{\cR,\pm}(F) = \ADV^{\pm}(G)$, and therefore the result follows by applying \cite[Theorem~1.4]{Rei11} to $G$.

When $\cQ \subseteq 2^{[n]}$, and the query set consists of all queries of the form $\mathbb{I}[x_S = b]$ where $S \in \cQ$ and $b \in \pmone^S$ are arbitrary, recall that we abuse notation and use $\cQ$ to also denote the query set $\cbra{q = (S, b) : S \in \cQ, b \in \pmone^S}$. Thus $\ADV^{\cQ, \pm}(\cdot)$ and $\Q^{\cQ}(\cdot)$ denote the negative-weight adversary bound and the quantum query complexity, respectively, with respect to this query set.

\subsection{Fourier Analysis of Boolean Functions and Restrictions}

Below are some relevant formal definitions and Fourier-analytic properties of restricted functions, adapted from~\cite{o2014analysis}.

Consider the vector space of functions from $\pmone^n$ to $\bR$, equipped with the following inner product.
\[
\langle f, g \rangle := \E_{x \in \pmone^n}[f(x)g(x)] = \frac{1}{2^n}\sum_{x \in \pmone^n}f(x)g(x).
\]
For a set $S \subseteq [n]$, define the parity function $\chi_S : \pmone^n \to \bR$ by $\chi_S(x) = \prod_{i\in S} x_i$. It is not hard to show that the set of all parities $\cbra{\chi_S : S \subseteq [n]}$ forms an orthonormal basis of the space above. Thus, every function $f : \pmone^n \to \bR$ has a unique expansion with respect to this basis, $f = \sum_{S \subseteq [n]} \widehat{f}(S)\chi_S$. This expansion is called the Fourier expansion of $f$ and the coefficients $\cbra{\widehat{f}(S) : S \subseteq [n]}$ are called the Fourier coefficients of $f$.
\begin{fact}[Parseval's Theorem]
\label{fact:parseval}
Let $f:\pmone^n \to \bR$ be a function. Then
\[\E_{x \in \pmone^n}f(x)^2=\langle f,f\rangle=\sum_{S \subseteq [n]}\widehat{f}(S)^2.\]
\end{fact}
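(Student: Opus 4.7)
The plan is to chain together two easy equalities. The first equality, $\E_{x \in \pmone^n} f(x)^2 = \langle f, f \rangle$, is literally the definition of the inner product $\langle \cdot, \cdot \rangle$ given in the paragraph immediately preceding the fact (specialized to $g = f$), so nothing needs to be done there beyond citing the definition.

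For the second equality, I would start from the Fourier expansion $f = \sum_{S \subseteq [n]} \widehat{f}(S)\,\chi_S$, which was introduced just above as the unique expansion with respect to the orthonormal basis of parity functions. Expanding both copies of $f$ inside the inner product and using bilinearity gives
\[
\langle f, f \rangle
= \Big\langle \sum_{S \subseteq [n]} \widehat{f}(S)\,\chi_S,\ \sum_{T \subseteq [n]} \widehat{f}(T)\,\chi_T \Big\rangle
= \sum_{S, T \subseteq [n]} \widehat{f}(S)\,\widehat{f}(T)\,\langle \chi_S, \chi_T \rangle.
\]
Then I would invoke the orthonormality of the parity basis, which has already been asserted in the text: $\langle \chi_S, \chi_T \rangle = \mathbb{I}[S = T]$. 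Applying this collapses the double sum to the diagonal, yielding $\sum_{S \subseteq [n]} \widehat{f}(S)^2$, as desired.

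There is essentially no obstacle here, since both ingredients (the definition of the inner product and orthonormality of the parity basis) are already in hand from the surrounding discussion. The only thing one might want to verify more carefully, if the paper had not already asserted it, is orthonormality of the characters $\chi_S$. That could be done directly: for $S \neq T$, one writes $\chi_S \chi_T = \chi_{S \triangle T}$ and uses that $\E_{x \in \pmone^n}[\chi_U(x)] = 0$ for any nonempty $U$ (by factoring the expectation over independent coordinates and noting $\E[x_i] = 0$). Since the paper already takes orthonormality as known, this calculation can be skipped entirely in the proof.
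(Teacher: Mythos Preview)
Your proof is correct and is the standard argument. Note that the paper itself does not give a proof: the statement is labeled as a \emph{Fact} and is simply asserted, so there is no ``paper's own proof'' to compare against.
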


\begin{lemma}[{\cite{BC02}, also see~\cite[Lemma~2.2.3]{M18}}]\label{lem:xornorm}
    Let $f : \pmone^n \to \bR$ be a function and let $M$ be the communication matrix of $f \circ \XOR$. Then,
    \[
        \|M\| = 2^n \max_{S \subseteq [n]} |\widehat{f}(S)|.
    \]
\end{lemma}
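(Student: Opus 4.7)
The plan is to diagonalize $M$ explicitly in the Fourier basis, after which the spectral norm can be read off from the eigenvalues. First, I would note that $M$ is real and symmetric: since $x \oplus y = y \oplus x$, we have $M[x,y] = f(x \oplus y) = f(y \oplus x) = M[y,x]$. Thus by \Cref{fact:spectral}, $\|M\|$ equals the maximum absolute value of the eigenvalues of $M$, so it suffices to compute the spectrum of $M$.

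Next, for each $S \subseteq [n]$, I would define the vector $\ket{\chi_S} \in \bR^{\pmone^n}$ with entries $\braket{x}{\chi_S} = \chi_S(x)$. The key computation is to apply $M$ to $\ket{\chi_S}$ and observe that
\[
(M\ket{\chi_S})_y \;=\; \sum_{x \in \pmone^n} f(y \oplus x)\,\chi_S(x).
\]
Substituting $z := x \oplus y$ (so $x = y \oplus z$) and using the multiplicativity $\chi_S(y \oplus z) = \chi_S(y)\chi_S(z)$, the right-hand side rewrites as
\[
\chi_S(y) \sum_{z \in \pmone^n} f(z)\,\chi_S(z) \;=\; \chi_S(y) \cdot 2^n \widehat{f}(S),
\]
where the last equality uses the definition $\widehat{f}(S) = \E_{z \in \pmone^n}[f(z)\chi_S(z)]$. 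Therefore $M\ket{\chi_S} = 2^n \widehat{f}(S)\ket{\chi_S}$, i.e., $\ket{\chi_S}$ is an eigenvector of $M$ with eigenvalue $2^n \widehat{f}(S)$.

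Finally, I would invoke orthogonality of characters: the family $\{\chi_S\}_{S \subseteq [n]}$ is orthonormal under $\langle \cdot, \cdot \rangle$, which translates to $\braket{\chi_S}{\chi_{T}} = \sum_x \chi_S(x)\chi_T(x) = 2^n \bI[S=T]$. Hence $\{\ket{\chi_S}\}_{S \subseteq [n]}$ is an orthogonal collection of $2^n$ nonzero vectors in $\bR^{2^n}$ and therefore a full eigenbasis of $M$. Combining this with the symmetry of $M$ yields
\[
\|M\| \;=\; \max_{S \subseteq [n]} \bigl|2^n \widehat{f}(S)\bigr| \;=\; 2^n \max_{S \subseteq [n]} |\widehat{f}(S)|,
\]
as claimed. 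There is no real obstacle here; the only step requiring any care is the change-of-variables computation showing that the characters are eigenvectors, which crucially uses that $M$ is an XOR-convolution operator and hence is diagonalized by the Fourier/Hadamard basis.
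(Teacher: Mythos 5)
Your proof is correct and is the standard argument for this fact: the paper itself does not prove \Cref{lem:xornorm}, citing it from the references, but your character-diagonalization computation is precisely the content of the paper's \Cref{fact:xorhadamard} (the Hadamard diagonalization $M_{f\circ\XOR}=HD_{2^n\widehat f}H^{-1}$), combined with the observation that the spectral norm of a symmetric matrix is its largest absolute eigenvalue. In short, you have taken essentially the same route the paper would take, just spelled out at the level of individual eigenvectors rather than in one matrix identity.
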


Specifically, we require the following fact. Below, $H$ represents the $2^n$-dimensional Hadamard matrix defined by $H(S, T) = \frac{1}{2^{n/2}}(-1)^{|S \cap T|}$. 
\begin{fact}\label{fact:xorhadamard}
    Let $f : \pmone^n \to \bR$ be a function. Then,
    \[
        M_{f \circ \XOR} = H D_{2^n\widehat{f}}H^{-1},
    \]
    where $D_{2^n\widehat{f}}$ is a diagonal matrix with $D_{2^n\widehat{f}}(S) = 2^n\widehat{f}(S)$ for all $S \subseteq [n]$.
\end{fact}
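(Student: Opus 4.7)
The plan is to verify the identity $M_{f\circ\XOR} = H D_{2^n\widehat{f}} H^{-1}$ directly, entry by entry, via a short character-sum computation. The first step is to reconcile the two indexing conventions in play: $H$ and $D_{2^n\widehat{f}}$ are indexed by subsets of $[n]$, whereas $M_{f\circ\XOR}$ is indexed by $\pmone^n$. I would fix once and for all the natural bijection $T \leftrightarrow t \in \pmone^n$ given by $t_i = -1$ iff $i \in T$, under which $(-1)^{|S\cap T|} = \prod_{i \in S} t_i = \chi_S(t)$. In particular, this rewrites $H(S,t) = \frac{1}{2^{n/2}}\chi_S(t)$, showing that (up to normalization) the rows of $H$ are parity characters evaluated at a common $\pmone^n$ point, which immediately suggests a Fourier-style calculation.

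With this identification in hand, I would first verify that $H^{-1} = H$, so that the target identity reduces to $M_{f\circ\XOR} = HDH$. Symmetry of $H$ is immediate from $|S\cap T| = |T\cap S|$, and the self-inverse property follows from character orthonormality:
\[
(H^2)(S,T) \;=\; \tfrac{1}{2^n}\sum_{u\in\pmone^n}\chi_S(u)\chi_T(u) \;=\; \E_u[\chi_{S\triangle T}(u)] \;=\; \bI[S = T],
\]
using $\chi_S\chi_T = \chi_{S\triangle T}$ and the vanishing of nontrivial character averages (implicit in Fact~\ref{fact:parseval}).

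The main computation is then a one-liner. For arbitrary $x,y\in\pmone^n$, I would expand
\[
(HDH)(x,y) \;=\; \sum_{S\subseteq[n]} H(x,S)\cdot 2^n\widehat{f}(S)\cdot H(S,y) \;=\; \sum_{S\subseteq[n]} \widehat{f}(S)\,\chi_S(x)\chi_S(y),
\]
and collapse this via the multiplicativity $\chi_S(x)\chi_S(y) = \chi_S(x\oplus y)$ to $\sum_{S}\widehat{f}(S)\chi_S(x\oplus y) = f(x\oplus y) = M_{f\circ\XOR}(x,y)$, the last equality being the Fourier inversion formula.

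There is no substantive obstacle here; the only mild care required is the bookkeeping across the subset-indexing of $H,D$ and the $\pmone^n$-indexing of $M_{f\circ\XOR}$. Once a consistent bijection is pinned down, the remainder of the argument is a routine character-sum manipulation relying only on the basic Fourier identities already present in the preliminaries.
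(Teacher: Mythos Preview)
Your proposal is correct. The paper does not actually supply a proof of this statement---it is recorded as a \emph{Fact} without argument---so there is nothing to compare against. Your direct entrywise verification via the character identity $\chi_S(x)\chi_S(y)=\chi_S(x\oplus y)$ and Fourier inversion is the standard and expected justification, and your handling of the subset/$\pmone^n$ indexing bijection and of $H^{-1}=H$ is accurate.
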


\begin{defi}
    Let $f : \pmone^n \to \mathbb{R}$ be a function, and let $(J, \bar{J})$ be a partition of $[n]$. Let $z \in \pmone^{\bar{J}}$. Then we write $f_{J|z} : \pmone^J \to \mathbb{R}$ (pronounced ``the restriction of $f$ to $J$ using $z$'') for the subfunction of $f$ obtained by fixing the coordinates in $\bar{J}$ to the bit values $z$, i.e.,
    \begin{equation*}
        \forall x\in\pmone^J, f_{J|z}(x) := f(x_Jz_{\bar{J}})
    \end{equation*}
    where $x_Jz_{\bar{J}}$ represents the string that equals $x$ on $J$ and $z$ on $\bar{J}$.
\end{defi}

\begin{proposition}[{\cite[Proposition 3.21]{o2014analysis}}]\label{prop:restcoeffs}
    Let $f : \pmone^n \to \mathbb{R}$ and let $(J, \bar{J})$ be a partition of $[n]$. For $z \in \pmone^{\bar{J}}$ and $S \subseteq J$ we have
    \[
        \widehat{f_{J|z}}(S) = \sum_{T \subseteq \bar{J}}\wh{f}(S \cup T)\chi_T(z).
    \]
\end{proposition}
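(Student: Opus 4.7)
}
The plan is to start from the Fourier expansion of $f$ on the full cube, split each indexing subset according to the partition $(J, \bar{J})$, substitute $z$ into the $\bar{J}$-coordinates, and then read off the coefficients by uniqueness of the Fourier expansion on $\pmone^J$.

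First I would write
\begin{equation*}
f(x) \;=\; \sum_{U \subseteq [n]} \widehat{f}(U)\,\chi_U(x) \qquad \text{for all } x \in \pmone^n.
\end{equation*}
Every subset $U \subseteq [n]$ decomposes uniquely as $U = S \cup T$ with $S = U \cap J \subseteq J$ and $T = U \cap \bar{J} \subseteq \bar{J}$, and the characters are multiplicative across this partition: $\chi_{S \cup T}(x) = \chi_S(x_J)\,\chi_T(x_{\bar{J}})$. Plugging this into the expansion and reindexing the sum over $(S,T)$ with $S \subseteq J$ and $T \subseteq \bar{J}$ gives
\begin{equation*}
f(x) \;=\; \sum_{S \subseteq J}\sum_{T \subseteq \bar{J}} \widehat{f}(S \cup T)\,\chi_S(x_J)\,\chi_T(x_{\bar{J}}).
\end{equation*}

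Next I would substitute $x_{\bar{J}} = z$, which by definition of $f_{J|z}$ yields, for every $x \in \pmone^J$,
\begin{equation*}
f_{J|z}(x) \;=\; \sum_{S \subseteq J} \left( \sum_{T \subseteq \bar{J}} \widehat{f}(S \cup T)\,\chi_T(z) \right)\chi_S(x).
\end{equation*}
The outer sum is now an expansion of $f_{J|z} : \pmone^J \to \bR$ in the character basis $\{\chi_S : S \subseteq J\}$ on $\pmone^J$, with the bracketed expression playing the role of the coefficient at $S$. Since the Fourier expansion on $\pmone^J$ is unique (the parities form an orthonormal basis, as recalled just before Fact~\ref{fact:parseval}), I can simply read off
\begin{equation*}
\widehat{f_{J|z}}(S) \;=\; \sum_{T \subseteq \bar{J}} \widehat{f}(S \cup T)\,\chi_T(z),
\end{equation*}
which is the claimed identity.

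There is no real obstacle here: the only subtlety is keeping the bookkeeping of the partition straight, namely verifying that $(S,T) \mapsto S \cup T$ is a bijection between $2^J \times 2^{\bar{J}}$ and $2^{[n]}$ and that characters factor as $\chi_{S \cup T}(x) = \chi_S(x_J)\chi_T(x_{\bar{J}})$. Both facts are immediate from the definitions of $J$, $\bar{J}$ and $\chi_U(x) = \prod_{i \in U} x_i$, so the proof reduces to the rearrangement above.
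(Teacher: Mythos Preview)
Your proof is correct and is exactly the standard argument (as in the cited reference~\cite{o2014analysis}); the paper itself does not supply a proof but merely cites the result, so there is nothing further to compare.
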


Setting $z = 1^{\bar{J}}$ in the proposition above, we obtain the following corollary.

\begin{corollary}\label{cor:0restcoeffs}
    Let $f : \pmone^n \to \mathbb{R}$ and let $(J, \bar{J})$ be a partition of $[n]$. For $S \subseteq J$ we have
    \[
        \wh{f_{J|1^{\bar{J}}}}(S) = \sum_{T \subseteq \bar{J}}\wh{f}(S \cup T).
    \]
\end{corollary}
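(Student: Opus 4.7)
The plan is to derive the corollary as an immediate specialization of Proposition~\ref{prop:restcoeffs}. Recall that the proposition asserts, for any partition $(J,\bar{J})$ of $[n]$, any $z \in \pmone^{\bar{J}}$, and any $S \subseteq J$, the identity
\[
\wh{f_{J|z}}(S) = \sum_{T \subseteq \bar{J}} \wh{f}(S \cup T)\,\chi_T(z).
\]
Since the corollary only concerns the specific restriction $z = 1^{\bar{J}}$, the entire argument reduces to evaluating the parity characters $\chi_T$ at the all-ones string.

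First, I would recall the convention established in the preliminaries that we are working in the $\pmone$-valued representation of the Boolean hypercube, where $\chi_T(x) = \prod_{i \in T} x_i$. With $z = 1^{\bar{J}}$, each coordinate $z_i = 1$, so $\chi_T(1^{\bar{J}}) = \prod_{i \in T} 1 = 1$ for every $T \subseteq \bar{J}$ (including the empty set, where the empty product is also $1$). Substituting this into the expression from Proposition~\ref{prop:restcoeffs} gives
\[
\wh{f_{J|1^{\bar{J}}}}(S) = \sum_{T \subseteq \bar{J}} \wh{f}(S \cup T) \cdot 1 = \sum_{T \subseteq \bar{J}} \wh{f}(S \cup T),
\]
which is precisely the claimed formula.

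There is essentially no obstacle here: the entire content of the corollary is the observation that all parity characters evaluate to $1$ at the all-ones input in the $\pmone$ representation, collapsing the sum in Proposition~\ref{prop:restcoeffs} into an unweighted sum of Fourier coefficients over supersets of $S$ obtained by adjoining subsets of the fixed coordinates. Thus the proof is a single line of substitution, and the only thing worth being explicit about is the convention that $1$ (rather than $-1$) plays the role of the identity for the multiplicative characters.
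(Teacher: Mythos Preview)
Your proposal is correct and follows exactly the same approach as the paper: the paper simply states that the corollary is obtained by setting $z = 1^{\bar{J}}$ in Proposition~\ref{prop:restcoeffs}, which is precisely what you do, with the explicit observation that $\chi_T(1^{\bar{J}}) = 1$ for all $T \subseteq \bar{J}$.
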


We also obtain the following general corollary.
\begin{corollary}[{\cite[Corollary 3.22]{o2014analysis}}]\label{cor:parsevalrestcoeffs}
    Let $(J, \bar{J})$ be a partition of $[n]$ and fix $S \subseteq J$. Suppose $z \sim \pmone^{\bar{J}}$ is chosen uniformly at random. Then,
    \[
        \E_z[\wh{f_{J|z}}(S)^2] = \sum_{T \subseteq \bar{J}} \wh{f}(S \cup T)^2.
    \]
\end{corollary}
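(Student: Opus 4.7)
The plan is to derive this identity directly from Proposition~\ref{prop:restcoeffs}, using only linearity of expectation and the orthonormality of the characters $\{\chi_T\}_{T \subseteq \bar{J}}$ on the subcube $\pmone^{\bar{J}}$. The proof is short and essentially a one-line computation once the expansion is in place, so the main task is to organize the algebra carefully rather than to overcome any genuine obstacle.

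First, I would invoke Proposition~\ref{prop:restcoeffs} to write
\[
\wh{f_{J|z}}(S) \;=\; \sum_{T \subseteq \bar{J}} \wh{f}(S \cup T)\,\chi_T(z),
\]
for every fixed $z \in \pmone^{\bar{J}}$. Squaring this expression gives a double sum
\[
\wh{f_{J|z}}(S)^2 \;=\; \sum_{T, T' \subseteq \bar{J}} \wh{f}(S \cup T)\,\wh{f}(S \cup T')\,\chi_T(z)\chi_{T'}(z).
\]

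Next, I would take the expectation over $z \sim \pmone^{\bar{J}}$ chosen uniformly and push the expectation inside the finite sum by linearity. Since $\chi_T(z)\chi_{T'}(z) = \chi_{T \triangle T'}(z)$, and since $\E_{z \in \pmone^{\bar{J}}}[\chi_U(z)]$ equals $1$ if $U = \emptyset$ and $0$ otherwise (this is the standard orthonormality of the Fourier characters on $\pmone^{\bar{J}}$), only the diagonal terms $T = T'$ survive. Collecting these yields
\[
\E_z\!\left[\wh{f_{J|z}}(S)^2\right] \;=\; \sum_{T \subseteq \bar{J}} \wh{f}(S \cup T)^2,
\]
which is the claimed identity.

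There is no real obstacle here: once Proposition~\ref{prop:restcoeffs} has been established (and it is stated in the excerpt as a prerequisite), the remainder is just orthogonality of characters. If one wanted an even slicker presentation, one could alternatively note that the right-hand side is the squared $\ell_2$-norm of the Fourier coefficients of the function $z \mapsto \wh{f_{J|z}}(S)$, viewed as a function on $\pmone^{\bar{J}}$ with Fourier coefficients $\{\wh{f}(S \cup T)\}_{T \subseteq \bar{J}}$ by Proposition~\ref{prop:restcoeffs}, and then apply Parseval's theorem (Fact~\ref{fact:parseval}) directly on $\pmone^{\bar{J}}$. Either route takes a few lines.
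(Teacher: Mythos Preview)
Your proof is correct; the paper does not actually prove this corollary but simply cites it from \cite[Corollary~3.22]{o2014analysis}, and your argument (square the expansion from Proposition~\ref{prop:restcoeffs} and apply orthogonality of characters, or equivalently Parseval on $\pmone^{\bar{J}}$) is exactly the standard derivation.
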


\begin{lemma}\label{lem:fourierequiv}
    For $f : \pmone^n \to \bR$, the following conditions are equivalent:
    \begin{itemize}
        \item For all $x$ with $|x|$ even, $f(x) = 0$.
        \item For all $S \subseteq [n]$, we have $\widehat{f}(S) = - \widehat{f}(\bar{S})$.
    \end{itemize}
\end{lemma}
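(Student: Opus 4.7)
The plan is to reduce the equivalence to a single pointwise identity for $f$ and then pass to Fourier coefficients. First I would note that since $\chi_{[n]}(x) = \prod_{i=1}^n x_i = (-1)^{|x|}$, we have $\chi_{[n]}(x) = 1$ precisely when $|x|$ is even and $\chi_{[n]}(x) = -1$ precisely when $|x|$ is odd. Consequently, the factor $1 + \chi_{[n]}(x)$ equals $2$ on even-weight inputs and vanishes on odd-weight ones. The first condition is therefore equivalent to the identity $f(x)\bigl(1 + \chi_{[n]}(x)\bigr) = 0$ for every $x \in \pmone^n$, which rearranges to the functional identity
\[
f = -\, f \cdot \chi_{[n]}.
\]

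Next I would translate this identity to the Fourier side using the basic relation $\chi_S \chi_T = \chi_{S \triangle T}$, which implies $\widehat{f \cdot \chi_T}(S) = \widehat{f}(S \triangle T)$ for every $T \subseteq [n]$. Setting $T = [n]$ gives $\widehat{f \cdot \chi_{[n]}}(S) = \widehat{f}(\bar S)$, so applying $\widehat{\;\cdot\;}(S)$ to both sides of $f = -f \cdot \chi_{[n]}$ yields $\widehat{f}(S) = -\widehat{f}(\bar S)$ for every $S \subseteq [n]$. This is exactly the second condition.

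For the converse direction I would just reverse these steps: the Fourier expansions of $f$ and $-f \cdot \chi_{[n]}$ have identical coefficients by hypothesis, so $f = -f \cdot \chi_{[n]}$ as a pointwise identity on $\pmone^n$; evaluating at any $x$ with $|x|$ even (so $\chi_{[n]}(x) = 1$) gives $f(x) = -f(x)$ and hence $f(x) = 0$.

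I do not expect any real obstacle; the main content is recognizing the even-weight indicator as $(1 + \chi_{[n]})/2$ and using the character-multiplication rule. The only small point worth verifying carefully is that the rewriting $f(x)(1 + \chi_{[n]}(x)) = 0 \iff f = -f \cdot \chi_{[n]}$ is a true equivalence at every point, which it is because $1 + \chi_{[n]}(x) \in \{0, 2\}$ — on odd-weight points both sides of $f = -f \cdot \chi_{[n]}$ are automatic (as $-\chi_{[n]}(x) = 1$), and on even-weight points the identity forces $f(x) = 0$.
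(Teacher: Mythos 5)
Your proof is correct. The core step — noting that $\widehat{f\cdot\chi_T}(S)=\widehat{f}(S\triangle T)$ and specializing to $T=[n]$ so that $\widehat{f\cdot\chi_{[n]}}(S)=\widehat{f}(\bar S)$ — is sound, and the rewriting of ``$f$ vanishes on even-weight inputs'' as the pointwise identity $f=-f\cdot\chi_{[n]}$ is exactly the right reformulation, since $1+\chi_{[n]}(x)\in\{0,2\}$.

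The packaging is a bit cleaner than the paper's. The paper proves the two implications separately: for the direction (Fourier $\Rightarrow$ pointwise) it pairs each $\chi_S(x)$ with $\chi_{\bar S}(x)$ inside the Fourier expansion and uses the observation that these agree when $|x|$ is even; for the converse it computes $\widehat{f}(S)=\E[f(x)\chi_S(x)]$, restricts the expectation to the support of $f$ where $\chi_{[n]}=-1$, and absorbs the $-1$ as $\chi_{[n]}(x)$ to flip $S$ to $\bar S$. Your version collapses both directions into the single biconditional ``$f=-f\cdot\chi_{[n]}$ as a function $\iff$ $\widehat f(S)=-\widehat f(\bar S)$ for all $S$,'' using uniqueness of the Fourier expansion once, then separately observes that the left side is just a rephrasing of the vanishing condition. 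Both proofs use the same underlying fact (multiplication by $\chi_{[n]}$ complements Fourier supports); yours makes the symmetry between the two conditions more visible by routing everything through one functional identity, whereas the paper's is a touch more hands-on and self-contained for a reader not already fluent with the character-multiplication rule.
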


\begin{proof}
    For all $x \in \pmone^n$,
    \begin{align*}
        f(x) = \sum_{S\subseteq[n]}\widehat{f}(S)\chi_S(x) = \frac{1}{2}\sum_{S\subseteq[n]}\left[ \widehat{f}(S)\chi_{S}(x) + \widehat{f}(\bar{S})\chi_{\bar{S}}(x) \right].
    \end{align*}

    Assume that $\forall S\subseteq[n]$, $\widehat{f}(S) = -\widehat{f}(\bar{S})$. Let $x\in\{-1,1\}^n$ such that $|x|$ is even. Now we note that if $|x_S|$ is even, then $|x_{\bar{S}}|$ is also even. Similarly, when $|x_S|$ is odd, $|x_{\bar{S}}|$ is also odd. This implies $\chi_S(x) = \chi_{\bar{S}}(x)$ for all $x$ with even Hamming weight. Thus, for such $x$ we have
    \begin{align*}
        f(x) = \frac{1}{2}\sum_{S\subseteq[n]}\left[ \widehat{f}(S) + \widehat{f}(\bar{S}) \right]\chi_S(x) = \frac{1}{2}\sum_{S\subseteq[n]}\left[ \widehat{f}(S) - \widehat{f}(S) \right]\chi_S(x) = 0.
    \end{align*}

    To prove the other direction, we begin with the assumption that $f(x) = 0$ for all $x\in\{-1,1\}^n$ such that $|x|$ is even. By orthogonality of the characters, we have
    \begin{align*}
        \widehat{f}(S) = \E[f(x)\chi_S(x)] = - \E[-f(x)\chi_S(x)] = - \E[f(x)\chi_S(x)\chi_{[n]}(x)] = -\E[f(x)\chi_{\bar{S}}(x)] = -\widehat{f}(\bar{S}),
    \end{align*}
where the third equality holds since for all $x$ in the support of the expectation, we have $|x|$ odd and therefore $\chi_{[n]}(x) = -1$, and the fourth equality uses $\chi_{A}(x) \cdot \chi_{B}(x) = \chi_{A \triangle B}(x)$ where $A \triangle B$ denotes the symmetric difference of $A$ and $B$.
\end{proof}

We require the following well-known expression for the variance of a function. See, for example,~\cite[Proposition 1.13]{o2014analysis} for a simple proof.
\begin{lemma}\label{lem:fouriervariance}
    Let $f : \pmone^n \to \bR$ be a function. Then,
    \[
    \Var(f) = \sum_{S \neq \emptyset}\widehat{f}(S)^2.
    \]
\end{lemma}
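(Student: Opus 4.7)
The plan is a direct computation using the definition of variance together with Parseval's theorem (Fact~\ref{fact:parseval}), both of which have already been stated in the paper. The identity is standard, so the proof will essentially be a two-line calculation; there is no serious obstacle.

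First, I would unpack the definition $\Var(f) = \E[f^2] - \E[f]^2$. To handle the two terms on the right I would express each in terms of the Fourier coefficients of $f$. For the second-moment term, Parseval's theorem (Fact~\ref{fact:parseval}) gives
\[
\E[f^2] \;=\; \langle f, f\rangle \;=\; \sum_{S \subseteq [n]} \widehat{f}(S)^2.
\]
For the mean, I would use that the empty-set character is the constant function $\chi_\emptyset \equiv 1$, so by definition of the Fourier coefficients
\[
\widehat{f}(\emptyset) \;=\; \langle f, \chi_\emptyset\rangle \;=\; \E_{x\in\pmone^n}[f(x)] \;=\; \E[f].
\]

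Combining these two identities, $\Var(f) = \sum_{S\subseteq[n]} \widehat{f}(S)^2 - \widehat{f}(\emptyset)^2 = \sum_{S \neq \emptyset} \widehat{f}(S)^2$, which is the claimed formula. The only mild point of care is being explicit that $\widehat{f}(\emptyset) = \E[f]$, which follows immediately from $\chi_\emptyset \equiv 1$; everything else is Parseval.
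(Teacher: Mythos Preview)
Your proposal is correct and is exactly the standard argument; the paper itself does not give a proof but merely cites \cite[Proposition~1.13]{o2014analysis}, whose proof is precisely the Parseval-based computation you wrote.
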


Finally, we recall this fact.
\begin{fact}
    \label{fact:off-diagonal-matrix-norm}
    Let $v \in \mathbb{R}^d$ be a vector, and
    \[A = \begin{bmatrix}
        0 & v^T \\
        v & 0
    \end{bmatrix} \in \mathbb{R}^{(d+1) \times (d+1)}.\]
    Then, $\norm{A} = \norm{v}_2$, the Euclidean norm of $v$. Here, the bottom-right 0 in $A$ is a $d \times d$ zero matrix.
\end{fact}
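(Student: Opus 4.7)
The plan is to apply Fact~\ref{fact:spectral}, which reduces computing $\norm{A}$ to finding $\lambda_{\max}(A^T A)$. Since $A$ is symmetric, $A^T A = A^2$, and performing the block matrix multiplication --- viewing $A$ as a $2 \times 2$ block matrix with blocks of sizes $1 \times 1$, $1 \times d$, $d \times 1$, and $d \times d$ --- yields a block-diagonal matrix whose scalar top-left entry is $v^T v = \norm{v}_2^2$ and whose bottom-right block is the rank-one matrix $vv^T$.

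From there I would note that the maximum eigenvalue of a block-diagonal matrix is the maximum of the eigenvalues of its diagonal blocks. The top-left contributes $\norm{v}_2^2$ directly, and since $vv^T$ is rank-one with unique nonzero eigenvalue $v^T v = \norm{v}_2^2$ (its eigenvector being $v$ itself), the bottom-right block also contributes $\norm{v}_2^2$. Therefore $\lambda_{\max}(A^2) = \norm{v}_2^2$, and $\norm{A} = \norm{v}_2$ follows by taking the square root.

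There is essentially no technical obstacle here; this is a standard linear-algebraic fact, so the write-up would consist mainly of the block-multiplication calculation and the observation about the spectrum of $vv^T$. If one prefers a more self-contained route that avoids invoking Fact~\ref{fact:spectral}, I would parametrize an arbitrary unit vector as $x = (a, w)$ with $a \in \bR$, $w \in \bR^d$, and $a^2 + \norm{w}_2^2 = 1$, compute $Ax = (v^T w,\, a v)^T$ and hence $\norm{Ax}_2^2 = (v^T w)^2 + a^2 \norm{v}_2^2$, bound this by $\norm{v}_2^2 \norm{w}_2^2 + a^2 \norm{v}_2^2 = \norm{v}_2^2$ using the Cauchy--Schwarz inequality, and verify that equality is attained at $x = e_1$ (which gives $Ax = (0, v)^T$ with Euclidean norm $\norm{v}_2$).
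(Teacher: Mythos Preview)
Your proposal is correct; both the $A^2$-eigenvalue route via Fact~\ref{fact:spectral} and the direct Cauchy--Schwarz argument are valid and complete. The paper itself does not prove this statement --- it is simply recalled as a standard fact without proof --- so there is nothing to compare against, and either of your approaches would serve perfectly well as a justification.
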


\section{The Framework}\label{sec:framework}

In this section we build our framework that allows us to analyze the $\cQ$-substring query complexity of learning input strings, for arbitrary $\cQ$.

\subsection{Learning vs.~Parity}
We first make the simple observation that for every collection of sets $\cQ \subseteq 2^{[n]}$, the quantum query complexity of learning $x$ is bounded from below by the quantum query complexity of computing the parity of all of the bits of $x$.

\begin{observation}\label{obs:paritylowerbound}
    Let $\cQ \subseteq 2^{[n]}$. Then $\Q^\cQ(\rec) \geq \Q^\cQ(\oplus)$.
\end{observation}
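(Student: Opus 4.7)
The plan is to give a standard reduction: any bounded-error quantum query algorithm that learns $x$ can be post-processed classically into a bounded-error quantum query algorithm that computes $\oplus(x)$, without making any additional queries. Since this post-processing step is free in the query model, the resulting algorithm for $\oplus$ uses no more queries than the original learning algorithm, which yields the desired inequality.

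More concretely, let $\mathcal{A}$ be a bounded-error quantum query algorithm with query set $\cQ$ that computes $\rec$ using $T = \Q^\cQ(\rec)$ queries, so that for every $x \in \pmone^n$ the algorithm $\mathcal{A}$ outputs $x$ with probability at least $2/3$. Define an algorithm $\mathcal{A}'$ that first runs $\mathcal{A}$ to obtain some candidate string $y \in \pmone^n$, and then outputs $\prod_{i=1}^n y_i$. The algorithm $\mathcal{A}'$ makes exactly the same queries as $\mathcal{A}$, and so uses $T$ queries from $\cQ$. Moreover, whenever $\mathcal{A}$ succeeds (an event of probability at least $2/3$), we have $y = x$ and therefore $\mathcal{A}'$ outputs $\prod_i x_i = \oplus(x)$. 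Hence $\mathcal{A}'$ is a bounded-error quantum query algorithm computing $\oplus$ with $T$ queries from $\cQ$, which gives $\Q^\cQ(\oplus) \leq T = \Q^\cQ(\rec)$.

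The only very mild point to note is that the post-processing step $y \mapsto \prod_i y_i$ involves no queries to $x$ and can be absorbed into the final input-independent unitary and measurement of the quantum algorithm (or equivalently performed classically after measurement), so it does not increase the query cost. There is no substantive obstacle in this proof; the statement is essentially a tautology in the query model, since parity is a function of the whole string and hence trivially reducible to learning the string.
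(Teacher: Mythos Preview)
Your proof is correct and takes essentially the same approach as the paper: once the string is learned, its parity can be computed with no additional queries. The paper states this in a single sentence, and your proposal simply spells out the details of that reduction.
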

This immediately follows from the fact that once a string has been learned, the parity of its input bits can be computed with no extra queries. We remark that the same argument in fact yields $\Q^\cQ(\rec) \geq \max_{S \subseteq [n]}\Q^\cQ(\oplus_S)$, but we do not use this.

Next we observe that in the case when $\cQ$ is downward closed, these complexities are actually equal.
\begin{lemma}\label{lem:downwardclosed}
For all $\cQ \subseteq 2^{[n]}$ that is downward closed, we have $\Q^\cQ(\rec) = \Q^\cQ(\oplus)$.
\end{lemma}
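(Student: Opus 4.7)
The $\ge$ direction is exactly Observation~\ref{obs:paritylowerbound}, so the plan is to establish the reverse bound $\Q^\cQ(\rec) = O(\Q^\cQ(\oplus))$ via the Bernstein--Vazirani-style reduction that the paper already flags in the statement of Corollary~\ref{cor:maininstantiations} (``recover a bitstring in one query given query access to arbitrary parities of it''). Downward closure of $\cQ$ is precisely what will let me manufacture query access to \emph{every} parity $\oplus_S$ from query access to the single parity $\oplus$.

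First, I would show that for every $S \subseteq [n]$, the partial parity $\oplus_S(x) := \prod_{i \in S} x_i$ can be computed in the $\cQ$-query model using at most $\Q^\cQ(\oplus)$ queries. The idea is to simulate a $\cQ$-oracle to the ``folded'' string $y^{(S)} \in \pmone^n$ defined by $y^{(S)}_i = x_i$ for $i \in S$ and $y^{(S)}_i = 1$ otherwise, for which $\oplus(y^{(S)}) = \oplus_S(x)$. Any $\cQ$-query $(T, b)$ on $y^{(S)}$ factors as
\[
\bI[y^{(S)}_T = b] \;=\; \bI\!\left[b_{T \setminus S} = 1^{T \setminus S}\right] \cdot \bI\!\left[x_{T \cap S} = b_{T \cap S}\right],
\]
where the first factor is $x$-independent and the second is a legitimate $\cQ$-query on $x$, because $T \cap S \subseteq T \in \cQ$ and $\cQ$ is downward closed. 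Plugging this per-query simulation into the assumed algorithm for $\oplus$ produces, for every fixed $S$, an algorithm for $\oplus_S$ with the same number of $\cQ$-queries on $x$.

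Next, I would make the simulation coherent in $s \in \{0,1\}^n$ (identifying $S = \text{supp}(s)$), so that a single circuit of $\cQ$-query cost $O(\Q^\cQ(\oplus))$ approximately implements $\ket{s}\ket{0} \mapsto \ket{s}\ket{\oplus_S(x)}$. After amplifying the underlying bounded-error algorithm for $\oplus$ to success probability $1 - 2^{-\Omega(n)}$, standard phase kickback converts this into a phase oracle $\ket{s} \mapsto (-1)^{\oplus_S(x)}\ket{s}$, and one Hadamard--oracle--Hadamard sandwich as in Bernstein--Vazirani recovers $x$ outright, yielding $\Q^\cQ(\rec) = O(\Q^\cQ(\oplus))$.

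The main obstacle I anticipate is the bounded-error bookkeeping: the per-query factorization above is immediate from downward closure, but converting a bounded-error $\oplus$-algorithm into a unitary phase oracle clean enough to survive coherent superposition over all $2^n$ choices of $s$ needs careful amplification so that the final error, and hence the query-count blow-up, stays within a constant factor. This is standard but is the only step where real care is required; the equality stated in the lemma should then be read asymptotically, consistent with how $\Q^\cQ(\cdot)$ is used elsewhere in the paper.
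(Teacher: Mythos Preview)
Your approach is essentially identical to the paper's: the same folded string $y^{(S)}$, the same factorization of a $(T,b)$-query into an $x$-independent check on $T\setminus S$ and a legitimate $\cQ$-query on $T\cap S$ via downward closure, and the same appeal to Bernstein--Vazirani once all parities $\oplus_S$ are available in superposition.

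The one slip is in your bounded-error bookkeeping. You propose amplifying the $\oplus$-algorithm to success $1-2^{-\Omega(n)}$ and assert that ``the query-count blow-up stays within a constant factor''; it does not---that level of amplification costs $\Theta(n)$ repetitions, which would only yield $\Q^\cQ(\rec)=O(n\cdot\Q^\cQ(\oplus))$. The fix is that you do \emph{not} need exponentially small error. Bernstein--Vazirani makes a single call to the parity oracle, so it suffices that the compute--copy--uncompute implementation of that oracle be within a small constant of the ideal in operator norm; this holds once the per-$S$ error of the $\oplus$-algorithm is any sufficiently small constant, which requires only $O(1)$ amplification. The intuition that ``superposing over $2^n$ values of $s$ forces a $2^{-n}$-type union bound'' is misleading here: the operator-norm closeness automatically propagates through arbitrary superpositions. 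With this correction your argument goes through and gives the asymptotic equality you intended. (The paper's own proof is terse on this point and simply invokes Bernstein--Vazirani without spelling out the error analysis.)
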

\begin{proof}
    \Cref{obs:paritylowerbound} provides the lower bound on $\Q^{\cQ}(\rec)$, so it remains to prove the upper bound. Note that we can recover a bit string if we are able to compute the parity on each of the subsets in superposition, by the Bernstein-Vazirani algorithm. Thus, it remains to prove that $\Q^{\cQ}(\oplus_S) \leq \Q^{\cQ}(\oplus)$, for all $S \subseteq [n]$.

    To that end, let $S \subseteq [n]$. For any $x \in \pmone^n$, let $y(x) \in \pmone^n$ be the input that is equal to $x$ on $S$, and identically $1$ on $\bar{S}$. Then, $\oplus_S(x) = \oplus(y(x))$, so it remains to run the algorithm on $y(x)$. To that end, note that we can simulate a substring query $(T,b)$ to $y(x)$ as follows. First, we check if $b$ is identically $1$ on $T \setminus S$, else we simply output $1$ (False). Otherwise, we perform the query $(S \cap T, b_S)$ on $x$, and we observe that $y(x)_T = b$ if and only if $y(x)_{S \cap T} = b_S$. Since $\cQ$ is downward-closed and $S \cap T \subseteq T$, we indeed have access to this query, and so we can simulate the parity-algorithm on $y(x)$ without additional overhead.
\end{proof}

We conclude that in order to characterize the $\cQ$-complexity of bit string recovery, it suffices to characterize the $\cQ$-complexity of computing the parity function, as long as $\cQ$ is downward closed.

\subsection{Simplification by Symmetry Reduction}

In this subsection we perform a symmetry reduction on the primal adversary bound for the quantum $\cQ$-substring quantum query complexity of computing parity, for an arbitrary $\cQ \subseteq 2^{[n]}$. We show that the maximizing matrix $\Gamma$ can be assumed to be the communication matrix of a XOR function, i.e., we may assume that $\Gamma = M_{f \circ \XOR}$ for some $f : \pmone^n \to \bR$. Recall that $M_{f \circ \XOR}[x, y] = f(x \oplus y)$ for all $x,y\in\{-1,1\}^n$.

\paragraph{Alternate formulation of adversary bound for functions with Boolean output}

Here we use an alternative formulation of the adversary bound that was shown by Cornelissen~\cite[Theorem~6.2.4]{Cor23}. This formulation allows us to average over optimal solutions without any loss in the optimal value. An important detail to note is that this formulation holds only for functions with Boolean output.\footnote{A similar optimization program can be derived for functions with non-Boolean output. Since its presentation is slightly more cumbersome and we don't need it in this work, we merely present the Boolean version here.}

\begin{tcolorbox}[title=Alternate formulation of primal adversary bound for functions with Boolean output]
    \begin{align}\label{eqn:primal_adversary_bound_new}
    \overline{\ADV}^{\cal{R}, \pm}(F) = \quad\max_{\Gamma, \beta}\quad & \sum_{x,y\in D} \Gamma[x,y]\\
    \text{subject to}\quad & \Gamma \in \mathbb{R}^{D\times D}, \beta \in \mathbb{R}_{\geq0}^D,\nonumber\\
    \nonumber& \Gamma \text{ is symmetric},\nonumber\\
    \nonumber& \Gamma[x,y] = 0,\tag{$\forall \; x,y\in D \text{ with } F(x) = F(y)$}\\
    \nonumber& \mathsf{diag}(\beta) - \Gamma\circ\Delta_q \succeq 0,\tag{$\forall q\in \cal{R}$}\\
    \nonumber& \sum_{x\in f^{-1}(1)}\beta[x] = \sum_{x\in f^{-1}(-1)}\beta[x] = \frac{1}{2}.
\end{align}
\end{tcolorbox}

\begin{lemma}\label{lem:equivalence_of_formulations}
    Let $D\subseteq\{-1,1\}^n$ and $F:D\rightarrow\{-1,1\}$ be a Boolean function. Then $\ADV^{\cR,\pm}(F) = \overline{\ADV}^{\cR,\pm}(F)$. Moreover, if $(\beta,\Gamma)$ is an optimal solution for $\overline{\ADV}^{\cR,\pm}(F)$, then $\Gamma'$, defined as\footnote{Here, we use the convention that $0/0 = 0$.}
    \[\Gamma'[x,y] = \frac{\Gamma[x,y]}{\sqrt{\beta[x]\beta[y]}},\]
    is an optimal solution for $\ADV^{\cR,\pm}(F)$.
\end{lemma}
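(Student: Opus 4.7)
The plan is to prove both inequalities $\overline{\ADV}^{\cR,\pm}(F) \le \ADV^{\cR,\pm}(F)$ and $\overline{\ADV}^{\cR,\pm}(F) \ge \ADV^{\cR,\pm}(F)$ by exhibiting explicit maps between feasible solutions in each direction; the forward map will be exactly the $\Gamma'$ from the ``moreover'' statement. Throughout I exploit the following structural observation: because $F$ has Boolean output and $\Gamma[x,y] = 0$ whenever $F(x) = F(y)$, reordering the rows and columns so that $F^{-1}(1)$ comes first puts $\Gamma$ in the bipartite form $\Gamma = \begin{pmatrix} 0 & G \\ G^T & 0 \end{pmatrix}$. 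Its eigenvalues are therefore $\pm \sigma_i(G)$ together with some zero eigenvalues, so (i) $I \succeq \Gamma \circ \Delta_q$ is \emph{equivalent} to $\|\Gamma \circ \Delta_q\| \le 1$ (the spectrum is symmetric about zero), and (ii) any top eigenvector in the positive eigenspace can be taken of the form $\tfrac{1}{\sqrt{2}}(u_0, v_0)$ for a top singular vector pair $(u_0,v_0)$ of $G$, and automatically satisfies $\|v_+\|_2^2 = \|v_-\|_2^2 = 1/2$ on the two index blocks.

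For the forward direction, fix feasible $(\Gamma,\beta)$, write $D := \mathsf{diag}(\beta)$, and set $\Gamma'[x,y] := \Gamma[x,y]/\sqrt{\beta[x]\beta[y]}$. The convention $0/0 = 0$ is consistent by a short PSD argument: if $\beta[x] = 0$ then the $(x,x)$ diagonal entry of $D - \Gamma \circ \Delta_q$ is $0$, forcing the $x$-row to vanish, which together with the zero-on-same-output constraint on $\Gamma$ makes $\Gamma[x,y] = 0$ for every $y$. Restricted to the support of $\beta$ we have $\Gamma' = D^{-1/2} \Gamma D^{-1/2}$, and conjugating $D - \Gamma \circ \Delta_q \succeq 0$ by $D^{-1/2}$ gives $I \succeq \Gamma' \circ \Delta_q$, which by observation (i) means $\max_q \|\Gamma' \circ \Delta_q\| \le 1$. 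Setting $u[x] := \sqrt{\beta[x]}$, the normalization on $\beta$ gives $\|u\|_2 = 1$, so
\[
\sum_{x,y} \Gamma[x,y] \;=\; u^T \Gamma' u \;\le\; \|\Gamma'\| \;\le\; \frac{\|\Gamma'\|}{\max_q \|\Gamma' \circ \Delta_q\|}.
\]
This proves $\overline{\ADV}^{\cR,\pm}(F) \le \ADV^{\cR,\pm}(F)$ and shows that the stated map $\Gamma \mapsto \Gamma'$ sends feasible $(\Gamma,\beta)$ to feasible $\Gamma'$ whose $\ADV$-value is at least its $\overline{\ADV}$-objective.

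For the reverse direction, take feasible $\Gamma$, rescale so $\max_q \|\Gamma \circ \Delta_q\| = 1$, and, WLOG after negating $\Gamma$, assume $\|\Gamma\|$ is attained by a positive eigenvalue. Pick $v$ as in observation (ii), and define $\tilde\Gamma[x,y] := v[x]\, \Gamma[x,y]\, v[y]$ and $\beta[x] := v[x]^2$. Symmetry, the zero-on-same-output structure, and both normalizations $\sum_{F^{-1}(\pm 1)} \beta[x] = 1/2$ are immediate. Writing $V := \mathsf{diag}(v)$, the PSD constraint becomes $\mathsf{diag}(\beta) - \tilde\Gamma \circ \Delta_q = V(I - \Gamma \circ \Delta_q)V \succeq 0$, using $\|\Gamma \circ \Delta_q\| \le 1 \Longrightarrow I \succeq \Gamma \circ \Delta_q$ and the fact that diagonal conjugation preserves PSD. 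Finally $\sum_{x,y} \tilde\Gamma[x,y] = v^T \Gamma v = \|\Gamma\|$, giving $\overline{\ADV}^{\cR,\pm}(F) \ge \ADV^{\cR,\pm}(F)$ and hence equality. For the ``moreover'' claim, on any optimal $(\Gamma, \beta)$ the forward map yields $\Gamma'$ feasible for $\ADV$ with value at least $\overline{\ADV}^{\cR,\pm}(F) = \ADV^{\cR,\pm}(F)$, hence optimal. The main technical point is producing $\beta$ with the correct two-block normalization in the reverse direction, which requires a top eigenvector with balanced split across the two blocks; it is precisely the bipartite structure afforded by the Boolean output of $F$ that makes this split automatic, and is why the stated formulation of $\overline{\ADV}$ is restricted to the Boolean case.
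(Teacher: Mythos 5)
Your proof is correct, and in the ``standard $\to$ alternate'' direction it takes a genuinely cleaner route than the paper. Both proofs share the same forward map (conjugating by $D^{\pm 1/2}$, the vanishing-row argument when $\beta[x]=0$, bounding the denominator via $I - \Gamma' \circ \Delta_q \succeq 0$ and the bipartite spectrum, and lower-bounding $\|\Gamma'\|$ by $u^T \Gamma' u$ with $u[x]=\sqrt{\beta[x]}$). The divergence is in the reverse direction. The paper takes an \emph{arbitrary} top eigenvector $v$ of $\Gamma$, flips signs via a $\pm1$ diagonal matrix $A$ to make $v$ entrywise non-negative (replacing $\Gamma$ by $A\Gamma A$), sets $\beta[x]=v[x]^2$, and then has to repair the two-block normalization: it computes $s = \sum_{x\in F^{-1}(1)}\beta[x]$, rescales to $\beta'$ with weights $\frac{1}{2s}$ and $\frac{1}{2(1-s)}$, and argues via AM--GM that the factor $\frac{1}{2\sqrt{s(1-s)}}\geq 1$ cannot decrease the objective. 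You instead observe that the bipartite block structure forces every unit eigenvector at the top (positive) eigenvalue to have the form $\frac{1}{\sqrt{2}}(u_0,v_0)$ for a unit singular-vector pair of the off-diagonal block, so $\beta[x]=v[x]^2$ is \emph{automatically} balanced across the two blocks and no rescaling is needed; you also sidestep the sign-flip step by defining $\tilde\Gamma = V\Gamma V$ with $V=\mathsf{diag}(v)$ directly, rather than routing through $\sqrt{\beta}$. Your version buys a shorter argument and makes the role of the Boolean-output hypothesis more transparent, at the cost of a slightly more delicate eigenvector lemma up front; the paper's version is more pedestrian but does not rely on the balanced-split property of the top eigenvector.
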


The proof of this lemma essentially appears in the proof of~\cite[Theorem~6.2.4]{Cor23}, but not with the exact formulation above. We provide a proof of \Cref{lem:equivalence_of_formulations} in \Cref{sec:equivalence_of_formulations_proof} for completeness.
The benefit of this alternate formulation of the primal adversary bound is that it is a convex optimization problem. Therefore, it is always possible to take a linear combination of two feasible solutions, say with objective values $c_1$ and $c_2$, and obtain a new feasible solution with an objective value that is at least as big as $\min\{c_1,c_2\}$. We will use this to perform a symmetry reduction on the level of the SDP. In particular, we show how symmetries in the function and in the query set translate to additional constraints that we can impose on the optimal solutions to the alternate formulation of the primal adversary bound.

Let $\sigma : D\rightarrow D$ be a permutation of the domain $D$. We abuse notation to use $\sigma$ to refer to the permutation function as well as the corresponding permutation matrix. It will be clear based on context whether $\sigma$ is treated as a function or a matrix. When treated as a matrix, right-multiplying by $\sigma^T$ corresponds to permuting columns according to the permutation $\sigma$ and left-multiplying by $\sigma$ corresponds to permuting rows according to $\sigma$.

Next, we define two types of symmetry that our query problem could exhibit.

\begin{defi}[Function symmetry]\label{defn:function_symmetry}
    A permutation $\sigma : D\rightarrow D$ is said to be \emph{function symmetric with respect to a function $F:D\rightarrow E$} if $F(\sigma(x)) = F(\sigma(y)) \iff F(x) = F(y)$, $\forall x,y\in D$. 
\end{defi}

\begin{defi}[Query symmetry]\label{defn:query_symmetry}
    A permutation $\sigma : D\rightarrow D$ is said to be \emph{query symmetric with respect to the set of queries $\cR$} if $q(\sigma(x)) = q(\sigma(y)) \iff q(x) = q(y)$, $\forall x,y \in D$.
\end{defi}
It is easy to observe that the above definition is equivalent to defining a permutation $\sigma : D\rightarrow D$ to be query symmetric with respect to $\cR$ if $\{\sigma\Delta_q\sigma^T \mid q\in \cR\} = \{ \Delta_q \mid q\in \cR \}$.

We now define a group that we use for the rest of this section and in the appendices.
\begin{defi}\label{def:bitflipgroup}
    Let $G$ denote the group generated by bit-flip permutations, i.e., $G = \{\sigma_y \mid y\in\{-1,1\}^n\}$ where $\sigma_y:x\mapsto x\oplus y$.
\end{defi}

As a first step, we prove that all elements in $G$ exhibit function symmetry with the parity function, and query symmetry with any substring-query set $\cQ$.

\begin{lemma}\label{lem:bit_flip_group_symmetry}
    Let $n\in\mathbb{N}$ and $\cal{Q}\subseteq 2^{[n]}$. Then, every $\sigma \in G$ is function-symmetric w.r.t.\ $\oplus$ and query-symmetric w.r.t.\ $\cQ$.
\end{lemma}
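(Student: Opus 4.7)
The plan is to verify the two symmetry properties separately, handling each generator $\sigma_y$ of $G$ (for $y \in \pmone^n$) directly; since both properties are closed under composition, it suffices to check them for a single $\sigma_y$.

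For function symmetry with respect to $\oplus$, I would exploit the fact that parity is a group homomorphism from $(\pmone^n,\oplus)$ to $(\pmone,\cdot)$, i.e., $\oplus(a\oplus b)=\oplus(a)\cdot\oplus(b)$. Applying this to $\sigma_y(x)=x\oplus y$, we get $\oplus(\sigma_y(x))=\oplus(x)\cdot\oplus(y)$. Thus for any $x,z\in\pmone^n$,
\[
\oplus(\sigma_y(x))=\oplus(\sigma_y(z)) \iff \oplus(x)\cdot\oplus(y)=\oplus(z)\cdot\oplus(y) \iff \oplus(x)=\oplus(z),
\]
which is exactly the condition in \Cref{defn:function_symmetry}.

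For query symmetry, I would first note that $\sigma_y$ is an involution, so $\sigma_y^T=\sigma_y^{-1}=\sigma_y$. Thus for any $x,z\in\pmone^n$ and any query $q=(S,b)\in\cQ$,
\[
(\sigma_y\Delta_q\sigma_y^T)[x,z]=\Delta_q[x\oplus y,\,z\oplus y].
\]
The latter equals $1$ iff exactly one of $(x\oplus y)_S=b$ and $(z\oplus y)_S=b$ holds. Since $(x\oplus y)_S=b$ iff $x_S=b\oplus y_S$, this is precisely $\Delta_{q'}[x,z]$ where $q'=(S,b\oplus y_S)$. The key observation is that $q'\in\cQ$ (regarded as a query set) since $S\in\cQ$ and $b\oplus y_S\in\pmone^S$. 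Moreover, as $b$ ranges over $\pmone^S$, so does $b\oplus y_S$, so the map $q=(S,b)\mapsto q'=(S,b\oplus y_S)$ is a bijection on the queries of $\cQ$ with first coordinate $S$. Taking the union over all $S\in\cQ$ gives $\{\sigma_y\Delta_q\sigma_y^T:q\in\cQ\}=\{\Delta_q:q\in\cQ\}$, establishing query symmetry per \Cref{defn:query_symmetry}.

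There is no real obstacle here; the proof is a direct computation using the multiplicativity of parity and the involution property of bit-flips. The only item requiring a moment of care is the conjugation-by-permutation-matrix step: rewriting $(\sigma_y\Delta_q\sigma_y^T)[x,z]$ as $\Delta_q[\sigma_y^{-1}(x),\sigma_y^{-1}(z)]$ and then applying $\sigma_y^{-1}=\sigma_y$. With that in hand, the bijection argument on $\pmone^S$ closes the proof immediately.
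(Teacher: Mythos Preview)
Your proof is correct and takes essentially the same approach as the paper. The only cosmetic difference is that for function symmetry you invoke the multiplicativity of parity $\oplus(x\oplus y)=\oplus(x)\cdot\oplus(y)$ directly, whereas the paper phrases the same fact via the parity of the Hamming weight of $x\oplus y$; the query-symmetry arguments are identical in substance.
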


\begin{proof}

    Recall that $\forall x\in\{-1,1\}^n$, $|x\oplus y| \text{ is even } \iff \oplus(x) = \oplus(y)$. We use this fact to show that every $\sigma_z\in G$ is function symmetric with respect to the parity function.
    \begin{align*}
        & |\sigma_z(x)\oplus\sigma_z(y)| = |(x\oplus z)\oplus(y\oplus z)| = |x\oplus y|.
    \end{align*}
    Therefore,
    \begin{align*}
    \oplus(\sigma_z(x)) = \oplus(\sigma_z(y)) \iff |\sigma_z(x)\oplus\sigma_z(y)| \text{ is even } \iff |x\oplus y| \text{ is even } \iff \oplus(x) = \oplus(y).
    \end{align*}
    Let $S \in \cQ$, $b \in \pmone^S$, $\sigma_z$ be a permutation in $G$ and let $\sigma_{z\vert S}:x_S\mapsto x_S\oplus z_S$ denote the permutation restricted to indices in $S$. Then, 
    \begin{align*}
        \mathbb{I}[x_S = b] & = \mathbb{I}[(\sigma_z(x))_S = \sigma_{z\vert S}(b)].
    \end{align*}
    By the definition of our query set, the latter is a valid query on input $\sigma_z(x)$. Since this is true for all $S \in \cQ$, all $b \in \pmone^S$ and all $\sigma_z \in G$, this implies
    \begin{align*}
        \{\Delta_{S,b} \mid S\in\cal{Q},b\in\{-1,1\}^S\} & = \{\sigma_z\Delta_{S,b}\sigma_z^T \mid S\in\cal{Q},b\in\{-1,1\}^S\}.
    \end{align*}
    
    This shows that every $\sigma_z\in G$ is query symmetric with respect to $\cal{Q}$.\qedhere
\end{proof}

\begin{lemma}\label{lem:xor_structure_of_feasible_solution}
    Let $n$ be a positive integer and let $\cR$ be a query set. Let $G$ be the group generated by bit flip permutations. If every permutation $\sigma\in G$ is query symmetric  with respect to $\cR$, then there exists an $f : \pmone^n \to \bR$ such that 
    \begin{itemize}
        \item $f(x) = 0$ for all $x$ of even Hamming weight, and
        \item $M_{f \circ \XOR}$ is an optimal solution for the adversary bound $\ADV^{\cR, \pm}(\oplus)$.
    \end{itemize}
\end{lemma}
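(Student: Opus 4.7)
The plan is to exploit the convexity of the alternate formulation \Cref{eqn:primal_adversary_bound_new}: starting from any optimal solution, averaging over the orbit of $G$ yields an optimal, $G$-invariant solution whose matrix part must have the $\XOR$-communication-matrix form; an appeal to \Cref{lem:equivalence_of_formulations} then transports this back to the original primal adversary bound.

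Let $(\Gamma_0, \beta_0)$ be optimal for $\overline{\ADV}^{\cR,\pm}(\oplus)$. For each $\sigma \in G$ define $\Gamma_\sigma := \sigma\Gamma_0\sigma^T$ and $\beta_\sigma[x] := \beta_0[\sigma^{-1}(x)]$. The first task is to verify that $(\Gamma_\sigma, \beta_\sigma)$ is again feasible with the same objective value $\sum_{x,y}\Gamma_0[x,y]$ (the latter is immediate by reindexing). Symmetry of $\Gamma_\sigma$ is inherited, and the zero-pattern constraint $\Gamma_\sigma[x,y] = 0$ whenever $\oplus(x) = \oplus(y)$ follows from \Cref{lem:bit_flip_group_symmetry} (function-symmetry of $G$ with $\oplus$). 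The PSD constraint $\mathsf{diag}(\beta_\sigma) - \Gamma_\sigma \circ \Delta_q \succeq 0$ requires a bit more care: using the identity $\sigma(A \circ B)\sigma^T = (\sigma A\sigma^T) \circ (\sigma B \sigma^T)$ together with the hypothesis that $\sigma$ is query-symmetric (so $\sigma \Delta_q \sigma^T = \Delta_{q'}$ for some $q' \in \cR$), one checks that the PSD constraint for $(\Gamma_\sigma, \beta_\sigma)$ at query $q$ is a unitary conjugate of the PSD constraint for $(\Gamma_0, \beta_0)$ at query $q'$, and so inherits its PSDness. The normalization constraints $\sum_{x : \oplus(x) = \pm 1} \beta_\sigma[x] = 1/2$ hold because each $\sigma \in G$ either fixes or swaps the two fibers of $\oplus$, each of which has $\beta_0$-weight $1/2$.

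By convexity, the orbit average $(\bar\Gamma, \bar\beta) := |G|^{-1}\sum_{\sigma\in G}(\Gamma_\sigma, \beta_\sigma)$ is feasible and achieves the optimum. By construction it is $G$-invariant, i.e., $\bar\Gamma[x \oplus z, y \oplus z] = \bar\Gamma[x,y]$ and $\bar\beta[x \oplus z] = \bar\beta[x]$ for all $x,y,z \in \pmone^n$. Setting $z = y$ in the first identity (and recalling that $y \oplus y = 1^n$ under componentwise multiplication), we see $\bar\Gamma[x,y] = \bar\Gamma[x \oplus y, 1^n]$, so $\bar\Gamma[x,y]$ depends only on $x \oplus y$; defining $f(w) := \bar\Gamma[w, 1^n]$ gives $\bar\Gamma = M_{f\circ\XOR}$. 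The $G$-invariance of $\bar\beta$ similarly forces it to be a nonzero constant $c$ (in fact $c = 2^{-n}$ from the normalization). The surviving zero-pattern $\bar\Gamma[x,y] = 0$ whenever $\oplus(x) = \oplus(y)$ becomes $f(w) = 0$ whenever $|w|$ is even, since $|x \oplus y|$ counts the positions at which $x$ and $y$ disagree.

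Finally, \Cref{lem:equivalence_of_formulations} produces an optimal solution for $\ADV^{\cR,\pm}(\oplus)$ via $\Gamma'[x,y] := \bar\Gamma[x,y]/\sqrt{\bar\beta[x]\bar\beta[y]} = \bar\Gamma[x,y]/c$, which is a positive scalar multiple of $\bar\Gamma$ and therefore equals $M_{(f/c)\circ\XOR}$, still of the required form with the required zero pattern. The main bookkeeping obstacle is showing that conjugation by $\sigma$ interacts correctly with the Hadamard product and the query-symmetry hypothesis so that the PSD constraints transfer cleanly across the orbit; once that is settled, the rest is standard orbit-averaging plus reading off the $\XOR$-structure from the transitivity of $G$ on $\pmone^n$.
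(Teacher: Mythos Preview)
Your proposal is correct and follows essentially the same approach as the paper: take an optimal solution to the alternate formulation, verify that conjugation by each $\sigma \in G$ preserves feasibility and objective value (using function symmetry from \Cref{lem:bit_flip_group_symmetry} and the query-symmetry hypothesis for the PSD constraint), average over the orbit to obtain a $G$-invariant optimum, read off the $\XOR$-structure from transitivity of $G$, and push back through \Cref{lem:equivalence_of_formulations}. The only cosmetic difference is that you explicitly note $\bar\beta$ is the constant $2^{-n}$ (so the conversion is just a scalar multiple), whereas the paper instead argues directly that the resulting $M$ inherits the same $G$-invariance entrywise; both are fine.
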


\begin{proof}
    Let $D=\pmone^n$. Let $(\beta, \Gamma)$ be an optimal solution to $\overline{\ADV}^{\cal{R}, \pm}(\oplus)$. In particular,
    \begin{equation*}
        \overline{\ADV}^{\cR,\pm}(\oplus) = \sum_{x,y\in\{-1,1\}^n}\Gamma[x,y].
    \end{equation*}

    For the next part of the proof, we show that $(\sigma\beta,\sigma\Gamma\sigma^T)$ is also a valid and optimal solution for all $\sigma \in G$. We organize this part of the proof by bullets, one for each constraint in the alternate formulation of the primal adversary bound.
    
    \begin{itemize}
        \item Entrywise non-negativity of $\sigma\beta$ is clear since this holds true for $\beta$, and hence any permutation of $\beta$ as well. Note that $\sigma\Gamma\sigma^T$ is simply reordering the rows and columns of $\Gamma$ by the same permutation $\sigma$. We observe below that $\sigma\Gamma\sigma^T$ is symmetric for any $\sigma\in G$.

    \begin{equation*}
        (\sigma\Gamma\sigma^T)^T = (\sigma^T)^T\Gamma^T\sigma^T = \sigma\Gamma^T\sigma^T = \sigma\Gamma\sigma^T.\tag{$\Gamma = \Gamma^T$}
    \end{equation*}
    \item Since every $\sigma \in G$ is function symmetric with respect to $\oplus$ (see \Cref{lem:bit_flip_group_symmetry}), we have $\oplus(x) = \oplus(y) \iff \oplus(\sigma(x)) = \oplus(\sigma(y))$. Therefore, $\sigma\Gamma\sigma^T[x,y] = \Gamma[\sigma(x),\sigma(y)] = 0$.
    \item Now we show that $\forall \sigma\in G, \quad \sigma\mathsf{diag}(\beta)\sigma^T -\sigma\Gamma\sigma^T\circ\Delta_q\succeq 0$. It is easy to see that every $\sigma\in G$ is its own inverse, and hence $\sigma^{-1} \in G$. Since every $\sigma \in G$ is query symmetric with respect to $\cR$, we have $\sigma^{-1}\Delta_q(\sigma^{-1})^T = \Delta_{q'}$ for some $q' \in \cR$ and hence $\Delta_q = \sigma\Delta_{q'}\sigma^T$. Let $\ket{v}\in\mathbb{R}^D$. 
    \begin{align*}
        \bra{v}\left(\sigma\mathsf{diag}(\beta)\sigma^T -\sigma\Gamma\sigma^T\circ\Delta_q \right)\ket{v} & = \bra{v}\left(\sigma\mathsf{diag}(\beta)\sigma^T -\sigma\Gamma\sigma^T\circ\sigma\Delta_{q^\prime}\sigma^T \right)\ket{v},\\
        & = \bra{v} \sigma(\mathsf{diag}(\beta) - \Gamma\circ\Delta_{q^\prime}) \sigma^T \ket{v},\\
        & = \bra{v^\prime}(\mathsf{diag}(\beta) - \Gamma\circ\Delta_{q^\prime})\ket{v^\prime},\tag{$\ket{v^\prime} = \sigma^T\ket{v}\in\mathbb{R}^D$}\\
        & \ge 0,
    \end{align*}
    where the last line uses $\beta - \Gamma\circ\Delta_q \succeq 0, \forall q\in \cR$, which holds true since $(\beta, \Gamma)$ is an optimal (and hence valid) solution to $\overline{\ADV}^{\cal{R}, \pm}(F)$.
    \item There are two cases to consider. From the function symmetry, we find that $\{\oplus^{-1}(1),\oplus^{-1}(-1)\} = \{\sigma(\oplus^{-1}(1)),\sigma(\oplus^{-1}(-1))\}$. That is, either $\sigma$ maps all $1$-inputs to $1$-inputs and $-1$-inputs to $-1$-inputs, or vice versa. In either case, we obtain that
    \[\sum_{x \in \oplus^{-1}(1)} (\sigma\beta)[x] = \sum_{x \in \oplus^{-1}(1)} \beta[\sigma^{-1}(x)] = \sum_{x \in \sigma(\oplus^{-1}(1))} \beta[x] = \frac12,\]
    and similarly when we take the summation over $x \in \oplus^{-1}(-1)$.
    \end{itemize}

    Thus, $(\sigma\beta, \sigma\Gamma\sigma^T)$ is a valid solution for all $\sigma \in G$. We now show its optimality. 
    Let $\ket{\bar{1}}$ denote the all-$1$ vector of length $2^n$. Then
    \begin{align*}
        \sum_{x,y\in \{-1,1\}^n}\Gamma[x,y] = \bra{\bar{1}}\Gamma\ket{\bar{1}} & = \bra{\bar{1}}\sigma^T\sigma\Gamma\sigma^T\sigma\ket{\bar{1}},\\
        = \bra{\bar{1}}\left(\sigma\Gamma\sigma^T\right)\ket{\bar{1}} & = \sum_{x,y\in\{-1,1\}^n}(\sigma\Gamma\sigma^T)[x,y].
    \end{align*}

    Next, we let $(\beta^\prime,\Gamma^\prime)$ be obtained by averaging over all permutations in $G$ as shown below.

    \begin{equation}
        (\beta^\prime,\Gamma^\prime) = \frac{1}{\left|G\right|}\sum_{\sigma\in G}(\sigma\beta,\sigma\Gamma\sigma^T).
    \end{equation}

    Since $\overline{\ADV}^{\cR,\pm}(\oplus)$ is a convex optimization program and $(\beta',\Gamma')$ is a linear combination of feasible solutions to $\overline{\ADV}^{\cR,\pm}(\oplus)$, we obtain that $(\beta',\Gamma')$ is also a feasible solution. Moreover, since for all $\sigma \in G$, all the objective values of the feasible solutions $(\sigma\beta,\sigma\Gamma\sigma^T)$ are equal as argued above, the objective value of $(\beta',\Gamma')$ is the same as that of $(\beta,\Gamma)$. We conclude that $(\beta',\Gamma')$ is also an optimal solution.

    Finally, we observe from \Cref{lem:equivalence_of_formulations} that an optimal solution to $\ADV^{\cR,\pm}(\oplus)$ can be written as
    \[M[x,y] = \frac{\Gamma'[x,y]}{\sqrt{\beta'[x]\beta'[y]}}.\]
    Since $\beta'$ and $\Gamma'$ are invariant under $\sigma$ by construction, we observe for all $\sigma \in G$ that
    \[M[\sigma(x),\sigma(y)] = \frac{\Gamma'[\sigma(x),\sigma(y)]}{\sqrt{\beta'[\sigma(x)]\beta'[\sigma(y)]}} = \frac{\Gamma'[x,y]}{\sqrt{\beta[x]\beta[y]}},\]
    and so in particular,
    \[
    M[x, y] = M[\sigma_y(x), \sigma_y(y)] = M[\sigma_y(x), 1^n] = M[x \oplus y, 1^n].
    \]
    Thus, if we define $f : \pmone^n \to \mathbb{R}$ as $f(x) = M[x,1^n]$, then we find that $M = M_{f \circ \mathsf{XOR}}$ is an optimal solution to $\ADV^{\cR,\pm}(\oplus)$. Finally, for any $x \in \pmone^n$, if $|x|$ is even, then $\oplus(x) = \oplus(1^n)$, and so $f(x) = M[x,1^n] = 0$.
\end{proof}

\begin{corollary}\label{cor:adv_xor_matrix}
    Let $n\in\mathbb{N}$ and $\cal{Q}\subseteq 2^{[n]}$. Then
    \begin{equation*}
        \ADV^{\cQ, \pm}(\oplus) = \max\limits_{\substack{f: \pmone^n \to \bR\\|z|~\textnormal{even}\implies f(z) = 0}} \frac{\|M_{f \circ \XOR}\|}{\max\limits_{\substack{S \in \cQ,\\ b \in \pmone^S}}\|M_{f \circ \XOR} \circ \Delta_{S,b}\|}.
    \end{equation*}
    
\end{corollary}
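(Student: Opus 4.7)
The plan is to derive this corollary directly by combining Lemmas~\ref{lem:bit_flip_group_symmetry} and~\ref{lem:xor_structure_of_feasible_solution} with the definition of the primal adversary bound, via two matching inequalities.

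For the direction $\ADV^{\cQ,\pm}(\oplus) \le \text{RHS}$, I would first invoke \Cref{lem:bit_flip_group_symmetry} to assert that every $\sigma \in G$ is query-symmetric with respect to $\cQ$ (viewed, per our abuse of notation, as the query set $\{(S,b) : S \in \cQ, b \in \pmone^S\}$). This lets me apply \Cref{lem:xor_structure_of_feasible_solution} with $\cR = \cQ$, producing a function $f : \pmone^n \to \bR$ that vanishes on all even-Hamming-weight inputs and for which $M_{f \circ \XOR}$ is an optimal solution to $\ADV^{\cQ,\pm}(\oplus)$. Substituting $\Gamma = M_{f \circ \XOR}$ into the primal adversary bound~\eqref{eq:dfn-adv} (equivalently \eqref{eqn:primal_adversary_bound_standard}) expresses $\ADV^{\cQ,\pm}(\oplus)$ as the ratio $\|M_{f \circ \XOR}\|/\max_{(S,b)} \|M_{f \circ \XOR} \circ \Delta_{S,b}\|$, which is trivially bounded above by the supremum of the same ratio over all admissible $f$.

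For the reverse direction $\ADV^{\cQ,\pm}(\oplus) \ge \text{RHS}$, the strategy is to exhibit each matrix of the form $M_{f \circ \XOR}$ (with $f$ vanishing on even-Hamming-weight inputs) as a feasible $\Gamma$ in the primal adversary bound; this immediately implies that the restricted supremum on the RHS is at most the unrestricted supremum $\ADV^{\cQ,\pm}(\oplus)$. Checking feasibility is routine: the matrix $M_{f \circ \XOR}$ is real and symmetric because $M_{f \circ \XOR}[x,y] = f(x \oplus y) = f(y \oplus x) = M_{f \circ \XOR}[y,x]$, and whenever $\oplus(x) = \oplus(y)$ we have $|x \oplus y|$ even, so $M_{f \circ \XOR}[x,y] = f(x \oplus y) = 0$, verifying the support constraint $F(x) = F(y) \implies \Gamma[x,y] = 0$.

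Combining the two inequalities yields the claimed equality. The proof is essentially bookkeeping on top of the preceding lemmas; the only mild subtlety is to remember the convention that $\cQ$ doubles as the query set $\{(S,b) : S \in \cQ, b \in \pmone^S\}$, so that the inner maximization in the denominator of the RHS matches the denominator $\max_{q \in \cR} \|\Gamma \circ \Delta_q\|$ of the primal adversary bound. There is no real obstacle, since the symmetry reduction has already been done in \Cref{lem:xor_structure_of_feasible_solution}.
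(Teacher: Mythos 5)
Your proof is correct and takes essentially the same approach as the paper: the paper also verifies feasibility of every $M_{f \circ \XOR}$ (with $f$ vanishing on even-Hamming-weight inputs) for the $\ge$ direction, and invokes \Cref{lem:bit_flip_group_symmetry} and \Cref{lem:xor_structure_of_feasible_solution} for the $\le$ direction. The only cosmetic difference is that you split the equality explicitly into two inequalities and present them in the opposite order.
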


\begin{proof}
    We first show that $M_{f\circ\XOR}$ is a feasible solution to $\ADV^{\cQ,\pm}(\oplus)$, for all $f:\{-1,1\}^n\rightarrow\mathbb{R}$ satisfying $f(x) = 0$ when $|x|$ is even. The matrix $M_{f\circ\XOR}$ is symmetric since $M[x, y] = f(x\oplus y) = f(y\oplus x) = M[y,x]$ for all $x,y\in\{-1,1\}^n$. We know that $|x\oplus y|$ is even if both $|x|, |y|$ are even or if both $|x|, |y|$ are odd. Therefore,
    \begin{equation*}
        \oplus(x) = \oplus(y) \implies |x\oplus y| \text{ is even } \implies \ M_{f\circ\XOR}[x,y] = 0.
    \end{equation*}
    \Cref{lem:xor_structure_of_feasible_solution} (which is applicable since \Cref{lem:bit_flip_group_symmetry} guarantees all elements of $G$ to be query symmetric w.r.t.~$\cQ$ and function symmetric w.r.t.~$\oplus$) yields the corollary.
\end{proof}

\subsection{Simplification by Fourier Analysis}

In this subsection we show that using the structure we derived in the last subsection, we can simplify the expression on the right-hand side of \Cref{cor:adv_xor_matrix} to a concise analytic optimization problem over the Boolean hypercube. Lemma~\ref{lem:xornorm} says that the numerator equals $2^n \max_{S \subseteq [n]}|\widehat{f}(S)|$. For the denominator, we require the following lemma.

\begin{lemma}\label{lem:blockmatrix}
    For all $f : \pmone^n \to \bR$, and all $S \subseteq [n], b \in \pmone^S$, we have
    \begin{equation}
        \setlength\aboverulesep{0pt}
        \setlength\belowrulesep{0pt}
        \setlength\cmidrulewidth{0.5pt}
        \label{eq:matrix}
        M_{f \circ \XOR} \circ \Delta_{S, b} = \begin{blockarray}{rc|ccc}
            & y_S = b && y_S \neq b & \\
            \begin{block}{r[c|ccc]}
            x_S = b & 0 & \cdots & M_{b, y_S} & \cdots \\\cmidrule(lr){1-5}
            & \vdots &&& \\
            x_S \neq b & M_{x_S, b} && 0 & \\
            & \vdots &&& \\
            \end{block}
        \end{blockarray},
    \end{equation}
    where for all $x_S, y_S$, $M_{x_S, y_S} := M_{f_{\bar{S}|x_S \oplus y_S} \circ \mathsf{XOR}}$.
\end{lemma}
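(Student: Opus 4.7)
The plan is to verify the block decomposition by a direct entrywise calculation, after grouping the row and column indices of $M_{f\circ\XOR}\circ\Delta_{S,b}$ according to their restriction to the coordinates in $S$. There are no clever tricks needed; the content of the lemma is just the observation that once the coordinates in $S$ are fixed on both $x$ and $y$, the remaining entries in the block form the communication matrix of $\XOR$ composed with the appropriate restriction of $f$.

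First I would recall that, by definition, $M_{f\circ\XOR}[x,y]=f(x\oplus y)$ and
\[\Delta_{S,b}[x,y]=\bI\bigl[\bI[x_S=b]\neq \bI[y_S=b]\bigr],\]
so $(M_{f\circ\XOR}\circ\Delta_{S,b})[x,y]$ is equal to $f(x\oplus y)$ when exactly one of $x_S,y_S$ equals $b$, and is $0$ otherwise. Permuting rows and columns so that indices are grouped by the value of $x_S$ (and $y_S$) immediately yields the block anti-diagonal structure displayed in \Cref{eq:matrix}: the $(x_S=b,y_S=b)$ block and the $(x_S\neq b,y_S\neq b)$ block are identically zero, while the other two blocks are nonzero.

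Next I would refine each of the two off-diagonal blocks by further grouping according to the specific value of $y_S$ (respectively $x_S$). This produces sub-blocks $M_{x_S,y_S}$ of dimension $2^{n-|S|}\times 2^{n-|S|}$, indexed by $x_{\bar S},y_{\bar S}\in\pmone^{\bar S}$, with entries
\[M_{x_S,y_S}[x_{\bar S},y_{\bar S}]=f(x\oplus y)=f\bigl((x_S\oplus y_S,\ x_{\bar S}\oplus y_{\bar S})\bigr).\]
Setting $u:=x_S\oplus y_S\in\pmone^S$, the right-hand side equals $f_{\bar S\mid u}(x_{\bar S}\oplus y_{\bar S})$ by the definition of the restriction, which in turn equals $M_{f_{\bar S\mid x_S\oplus y_S}\circ\XOR}[x_{\bar S},y_{\bar S}]$. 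This matches the claimed identification $M_{x_S,y_S}=M_{f_{\bar S\mid x_S\oplus y_S}\circ\XOR}$ exactly; specializing to $x_S=b$ (top-right block) and $y_S=b$ (bottom-left block) recovers the labels in the displayed matrix.

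I do not anticipate any real obstacle: the entire argument is bookkeeping about how a row/column reordering and the definition of the restriction of $f$ line up. The only mild subtlety is being careful that the $\bar S$-coordinates of $x\oplus y$ depend only on $x_{\bar S}$ and $y_{\bar S}$, and the $S$-coordinates depend only on $x_S$ and $y_S$; once this factorization is made explicit, the rest is immediate.
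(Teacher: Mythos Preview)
Your proposal is correct and follows essentially the same approach as the paper: an entrywise verification that splits into the case where $\Delta_{S,b}[x,y]=0$ (both diagonal blocks vanish) and the case where exactly one of $x_S,y_S$ equals $b$, in which one unwinds $f(x\oplus y)=f_{\bar S\mid x_S\oplus y_S}(x_{\bar S}\oplus y_{\bar S})$ to identify the off-diagonal block as $M_{f_{\bar S\mid x_S\oplus y_S}\circ\XOR}$.
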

\begin{proof}
    We verify the equality entry-wise. To that end, let $x,y \in \pmone^n$. If $\Delta_{S,b}[x,y] = 0$, then the left-hand side is $0$. Moreover, we have $\bI[x_S = b] = \bI[y_S = b]$, i.e., either $x_S = b$ and $y_S = b$, or $x_S \neq b$ and $y_S \neq b$. In both cases, we also have a $0$-entry on the right-hand side as well.

    On the other hand, suppose that $\Delta_{S,b}[x,y] = 1$. Then, we know that either $x_S = b$ or $y_S = b$, but not both. Then, on the left-hand side of the equation, we have $M_{f \circ \mathsf{XOR}}[x,y] = f(x \oplus y)$, and on the right-hand side we have
    \begin{equation*}
        M_{f_{\bar{S}|x_S \oplus y_S} \circ \mathsf{XOR}}[x_{\bar{S}},y_{\bar{S}}] = f_{\bar{S}|x_S \oplus y_S}(x_{\bar{S}} \oplus y_{\bar{S}}) = f(1_Sx_{\bar{S}} \oplus 1_Sy_{\bar{S}} \oplus 1_{\bar{S}}x_S \oplus 1_{\bar{S}}y_S) = f(x \oplus y).
    \end{equation*}
    In the above equation, $1_Sx_{\bar{S}}$ denotes the string that equals $1_S$ on $S$ and $x$ on $\bar{S}$. The strings $1_Sy_{\bar{S}}$, $1_{\bar{S}}x_S$ and $1_{\bar{S}}y_S$ are defined analogously.\qedhere
\end{proof}

Now, we turn to the computation of the spectral norm of this matrix. Recall the definition of $D_{\widehat{f}}$ from \Cref{fact:xorhadamard}.

\begin{lemma}\label{lem:diagonal-blocks}
    For all $f : \pmone^n \to \mathbb{R}$, and all $S \subseteq [n]$ and $b \in \pmone^S$, we have
    \[\norm{M_{f \circ \mathsf{XOR}} \circ \Delta_{S,b}} = \setlength\aboverulesep{0pt}
    \setlength\belowrulesep{0pt}
    \setlength\cmidrulewidth{0.5pt}
    \norm{\begin{blockarray}{rc|ccc}
    & y_S = b && y_S \neq b & \\
    \begin{block}{r[c|ccc]}
    x_S = b & 0 & \cdots & D_{b,y_S} & \cdots \\\cmidrule(lr){1-5}
    & \vdots &&& \\
    x_S \neq b & D_{x_S,b} && 0 & \\
    & \vdots &&& \\
    \end{block}
\end{blockarray}},\]
where $D_{x_S,y_S} := D_{2^{n-|S|}\widehat{f_{\bar{S}|x_S \oplus y_S}}}$. The rows and columns of the matrix on the RHS are indexed by $x,y \in \pmone^n$.
\end{lemma}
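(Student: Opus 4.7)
The plan is to exploit the fact that every XOR-communication matrix is diagonalized by the Hadamard transform (\Cref{fact:xorhadamard}), and that the \emph{same} Hadamard matrix works simultaneously for every block appearing in the decomposition from \Cref{lem:blockmatrix}. This will let us pull out a single tensor-product orthogonal conjugation from the entire matrix, which preserves the spectral norm, and reduce the nonzero blocks to their diagonal Fourier forms.

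Concretely, I would first view the block decomposition tensorially. Indexing the rows and columns of $M_{f \circ \mathsf{XOR}} \circ \Delta_{S,b}$ by pairs $(x_S, x_{\bar{S}})$ and $(y_S, y_{\bar{S}})$, I write
\[
M_{f \circ \mathsf{XOR}} \circ \Delta_{S,b} \;=\; \sum_{x_S, y_S \in \pmone^S} \ket{x_S}\bra{y_S} \otimes B_{x_S, y_S},
\]
where, by \Cref{lem:blockmatrix}, $B_{x_S, y_S} = 0$ whenever $\bI[x_S = b] = \bI[y_S = b]$, and otherwise $B_{x_S, y_S} = M_{f_{\bar{S}|x_S \oplus y_S} \circ \mathsf{XOR}}$. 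Applying \Cref{fact:xorhadamard} on $\pmone^{\bar{S}}$ to each nonzero block and letting $H'$ denote the $2^{n-|S|}$-dimensional (symmetric, self-inverse) Hadamard matrix, each such block equals $H' D_{x_S, y_S} H'$ with $D_{x_S, y_S} = D_{2^{n-|S|}\widehat{f_{\bar{S}|x_S \oplus y_S}}}$, exactly as in the lemma statement. Crucially, $H'$ does not depend on $(x_S, y_S)$, so using $H' \cdot 0 \cdot H' = 0$ for the vanishing blocks, I can factor the Hadamards out uniformly to obtain
\[
M_{f \circ \mathsf{XOR}} \circ \Delta_{S,b} \;=\; (I \otimes H')\, \widetilde{M}\, (I \otimes H'),
\]
where $\widetilde{M}$ is precisely the block matrix displayed on the right-hand side of the lemma. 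Since $I \otimes H'$ is orthogonal, conjugation by it preserves the spectral norm, and the claim follows immediately.

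There is no substantive obstacle: the proof is essentially a bookkeeping exercise once one notices that a single orthogonal transformation, acting only on the $\bar{S}$-coordinates via $I \otimes H'$, simultaneously diagonalizes every XOR-communication submatrix appearing in the block decomposition. Consequently the norm of $M_{f \circ \mathsf{XOR}} \circ \Delta_{S,b}$ is unchanged when its nonzero blocks are replaced by their diagonal Fourier forms, which is exactly what the lemma asserts.
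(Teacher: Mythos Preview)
Your proposal is correct and follows essentially the same argument as the paper: both conjugate $M_{f \circ \mathsf{XOR}} \circ \Delta_{S,b}$ by the orthogonal matrix $I_S \otimes H_{\bar{S}}$ (equivalently $H_{\bar{S}} \otimes I_S$), invoking \Cref{lem:blockmatrix} and \Cref{fact:xorhadamard} to diagonalize every nonzero block simultaneously. Your tensor-sum bookkeeping is slightly more explicit than the paper's blockwise description, but the key step and the cited ingredients are identical.
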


\begin{proof}
    It suffices to show that the matrices on the left- and right-hand side are related by a similarity transform, since they leave the spectrum, and hence spectral norm, invariant. To that end, let $H_{\bar{S}}$ be the Hadamard transform over $\bar{S}$, and let $X = H_{\bar{S}} \otimes I_S$, where $I_S \in \mathbb{R}^{\pmone^S \times \pmone^S}$ is the identity matrix. We observe that $X^{-1} = H_{\bar{S}}^{-1} \otimes I_S$, and so combining \Cref{lem:blockmatrix}~and~\Cref{fact:xorhadamard}, we observe that
    \[\setlength\aboverulesep{0pt}
    \setlength\belowrulesep{0pt}
    \setlength\cmidrulewidth{0.5pt}
    X^{-1}(M_{f \circ \mathsf{XOR}} \circ \Delta_{S,b})X = \begin{blockarray}{rc|ccc}
        & y_S = b && y_S \neq b & \\
        \begin{block}{r[c|ccc]}
        x_S = b & 0 & \cdots & H_{\bar{S}}^{-1}H_{\bar{S}}D_{b, y_S}H^{-1}_{\bar{S}}H_{\bar{S}} & \cdots \\\cmidrule(lr){1-5}
        & \vdots &&& \\
        x_S \neq b & H_{\bar{S}}^{-1}H_{\bar{S}}D_{x_S, b}H^{-1}_{\bar{S}}H_{\bar{S}} && 0 & \\
        & \vdots &&& \\
        \end{block}
    \end{blockarray},\]
    which equals the matrix on the RHS of the equation in the lemma statement.
\end{proof}

The following lemma rewrites the spectral norm further. Specifically, using the structure of the matrix in the RHS of \Cref{lem:diagonal-blocks} we are able to rewrite the matrix as a direct sum of $2^{|\bar{S}|}$ many matrices, each of which is $2^{|S|}$-dimensional, with rows and columns indexed by strings in $\pmone^S$.

\begin{lemma}\label{lem:diagonal-blpcks}
    For all $f : \pmone^n \to \mathbb{R}$, and all $S \subseteq [n]$ and $b \in \pmone^S$, we have
    \[\norm{M_{f \circ \mathsf{XOR}} \circ \Delta_{S,b}} = 2^{n-|S|}\max_{z \in \pmone^{\bar{S}}} \setlength\aboverulesep{0pt}
    \setlength\belowrulesep{0pt}
    \setlength\cmidrulewidth{0.5pt}
    \norm{\begin{blockarray}{rc|ccc}
    & y = b && y \neq b & \\
    \begin{block}{r[c|ccc]}
    x = b & 0 & \cdots & \widehat{f_{\bar{S}|b \oplus y}}(z) & \cdots \\\cmidrule(lr){1-5}
    & \vdots &&& \\
    x \neq b & \widehat{f_{\bar{S}|x \oplus b}}(z) && 0 & \\
    & \vdots &&& \\
    \end{block}
\end{blockarray}},\]
where the row and column labels range over $x,y \in \pmone^S$.
\end{lemma}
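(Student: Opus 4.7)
The plan is to start from the matrix on the right-hand side of \Cref{lem:diagonal-blocks} and exhibit a permutation of its rows and columns that reveals a block-diagonal structure indexed by $z \in \pmone^{\bar{S}}$. Once this is done, the identity follows from the standard fact that conjugation by a permutation matrix preserves the spectral norm, together with the observation that the spectral norm of a block-diagonal matrix equals the maximum spectral norm of its blocks.

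Concretely, I would index the rows (and analogously the columns) of the matrix in \Cref{lem:diagonal-blocks} by pairs $(x_S, z_x) \in \pmone^S \times \pmone^{\bar{S}}$, where $x_S$ indexes the outer block and $z_x$ indexes a row of the inner diagonal matrix $D_{x_S, y_S} = D_{2^{n-|S|} \widehat{f_{\bar{S} \mid x_S \oplus y_S}}}$. The crucial point is that each inner matrix is \emph{diagonal}: its $(z_x, z_y)$-entry is zero unless $z_x = z_y$, and when $z_x = z_y = z$ it equals $2^{n-|S|}\,\widehat{f_{\bar{S} \mid x_S \oplus y_S}}(z)$. Combined with the outer zero-diagonal structure coming from $\Delta_{S,b}$ (i.e.\ the entry vanishes unless exactly one of $x_S, y_S$ equals $b$), the entire matrix has a nonzero $((x_S, z_x), (y_S, z_y))$-entry only when $z_x = z_y$.

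Now I would apply the permutation of row and column indices that groups together all pairs sharing a common $z \in \pmone^{\bar{S}}$. After this conjugation, the matrix becomes block-diagonal with $2^{|\bar{S}|}$ blocks, one for each $z$; the block corresponding to $z$ has rows and columns indexed by $x, y \in \pmone^S$, and its entry at position $(x, y)$ is exactly $2^{n-|S|}\,\widehat{f_{\bar{S} \mid x \oplus b}}(z)$ when $x \neq b$ and $y = b$, equals $2^{n-|S|}\,\widehat{f_{\bar{S} \mid b \oplus y}}(z)$ when $x = b$ and $y \neq b$, and is zero otherwise. This matches the $2^{n-|S|}$-multiple of the matrix displayed on the right-hand side of the lemma.

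Finally, I would conclude by applying the two standard facts: permutation similarity preserves the spectral norm, so the norm of the original matrix equals the norm of the block-diagonalized matrix; and the spectral norm of a block-diagonal matrix is the maximum of the spectral norms of its blocks, pulling out the common factor $2^{n-|S|}$ to yield the claimed identity. The argument is essentially bookkeeping; the only point requiring genuine care is tracking indices carefully enough to confirm that the permutation really does produce the stated blocks, which is where I expect the main (purely notational) obstacle to lie.
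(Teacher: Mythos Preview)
Your proposal is correct and follows essentially the same approach as the paper: starting from \Cref{lem:diagonal-blocks}, observing that the diagonality of each $D_{x_S,y_S}$ forces nonzero entries only when $z_x = z_y$, and then permuting rows and columns to expose a block-diagonal structure indexed by $z \in \pmone^{\bar{S}}$. The paper's proof is terser but makes exactly the same moves.
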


\begin{proof}
    We start from the matrix on the RHS in \Cref{lem:diagonal-blocks}. We observe that its $[x,y]$'th entry is only non-zero when $x_{\bar{S}} = y_{\bar{S}}$, since each of the blocks $D_{x_S,y_S}$ is diagonal. Thus, by rearranging rows and columns of this matrix by the same permutations (which doesn't affect spectral norm), we obtain a block-diagonal matrix, with every block labeled by $z \in \pmone^{\bar{S}}$. Hence, its operator norm is the maximum operator norm of each of these blocks. The block corresponding to $z \in \pmone^{\bar{S}}$ is precisely the matrix that appears in the RHS of the lemma statement, which concludes the proof.
\end{proof}

Now, we can prove the required bound on the operator norm of the denominator in \Cref{cor:adv_xor_matrix}.

\begin{theorem}\label{thm:denominatornorm}
    For all $f : \pmone^n \to \bR, S \subseteq [n]$ and all $b \in \pmone^S$, we have
    \[
    \|M_{f \circ \mathsf{XOR}} \circ \Delta_{S,b}\| = 2^n \max_{x \in \pmone^{\bar{S}}} \sqrt{\Var(\hat{f}_{S|x})}.
    \]
\end{theorem}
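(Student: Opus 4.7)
The plan is to start from Lemma~\ref{lem:diagonal-blpcks} and, for each fixed $z$, compute the spectral norm of the $z$-block via Fact~\ref{fact:off-diagonal-matrix-norm}, then rewrite the resulting $\ell_2$-norm of Fourier coefficients as a variance. Concretely, Lemma~\ref{lem:diagonal-blpcks} already gives $\|M_{f\circ\XOR}\circ\Delta_{S,b}\| = 2^{n-|S|}\max_z\|A_z\|$, where $A_z$ is a $2^{|S|}\times 2^{|S|}$ matrix whose only non-zero entries live in its $b$-indexed row (equal to $\widehat{f_{\bar S\mid b\oplus y}}(z)$ for $y\neq b$) and its $b$-indexed column (equal to $\widehat{f_{\bar S\mid x\oplus b}}(z)$ for $x\neq b$). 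Permuting rows and columns to put $b$ first brings $A_z$ into the off-diagonal form assumed by Fact~\ref{fact:off-diagonal-matrix-norm}, since after the bijection $x' := b\oplus y = x\oplus b$ (which sends $b$ to $1^S$) the row- and column-vectors become the same vector $v_z$. Hence $\|A_z\| = \|v_z\|_2$, and
\[
\|v_z\|_2^2 \;=\; \sum_{x'\in\pmone^S\setminus\{1^S\}} \widehat{f_{\bar S\mid x'}}(z)^2.
\]

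Next I would analyse this sum by Parseval. Viewing $h_z(x'):=\widehat{f_{\bar S\mid x'}}(z)$ as a function $\pmone^S\to\bR$, Proposition~\ref{prop:restcoeffs} (applied with $J=\bar S$ and $\bar J=S$) gives the Fourier expansion $h_z(x')=\sum_{T\subseteq S}\widehat f(z\cup T)\chi_T(x')$, so $\widehat{h_z}(T) = \widehat f(z\cup T)$. Parseval (Fact~\ref{fact:parseval}) yields $\sum_{x'\in\pmone^S}h_z(x')^2 = 2^{|S|}\sum_T\widehat f(z\cup T)^2$, while evaluating $h_z$ at $x'=1^S$ gives $h_z(1^S)=\sum_T\widehat f(z\cup T)$. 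Subtracting the $x'=1^S$ contribution from the Parseval sum and dividing by $4^{|S|}$ produces
\[
\frac{\|v_z\|_2^2}{4^{|S|}} \;=\; \frac{1}{2^{|S|}}\sum_T\widehat f(z\cup T)^2 \;-\; \left(\frac{1}{2^{|S|}}\sum_T\widehat f(z\cup T)\right)^{\!2},
\]
which is precisely the variance of $\widehat f(z\cup T)$ when $T$ is drawn uniformly from subsets of $S$.

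The last and, in my view, most conceptually delicate step is to identify this combinatorial variance with $\Var(\hat f_{S\mid x})$ for an appropriate $x\in\pmone^{\bar S}$. Under the natural bijection between subsets $U\subseteq[n]$ and strings $u\in\pmone^n$ given by $u_i=-1\iff i\in U$---the same identification implicit in Fact~\ref{fact:xorhadamard} and in reading $\hat f$ as a function $\pmone^n\to\bR$---the subset $z\subseteq\bar S$ corresponds to a string $x(z)\in\pmone^{\bar S}$, and as $T$ ranges uniformly over $2^S$ the value $\widehat f(z\cup T)$ has the same distribution as $\hat f_{S\mid x(z)}(\tau)$ with $\tau$ uniform in $\pmone^S$. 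Hence $\|v_z\|_2^2 = 4^{|S|}\Var(\hat f_{S\mid x(z)})$, and plugging back into Lemma~\ref{lem:diagonal-blpcks} yields
\[
\|M_{f\circ\XOR}\circ\Delta_{S,b}\| \;=\; 2^{n-|S|}\cdot 2^{|S|}\max_{x\in\pmone^{\bar S}}\sqrt{\Var(\hat f_{S\mid x})} \;=\; 2^n\max_{x\in\pmone^{\bar S}}\sqrt{\Var(\hat f_{S\mid x})},
\]
as desired. The main obstacle is therefore less calculational than notational: one must handle the subset-string identification carefully so that the combinatorial variance of Fourier coefficients on the left matches the function-theoretic variance of a subcube restriction of $\hat f$ on the right.
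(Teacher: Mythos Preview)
Your proposal is correct and follows essentially the same route as the paper's proof: both start from \Cref{lem:diagonal-blpcks}, invoke \Cref{fact:off-diagonal-matrix-norm} to reduce to an $\ell_2$-norm of Fourier coefficients, perform the substitution $x' = b \oplus y$ to re-index over $\pmone^S \setminus \{1^S\}$, and then identify the resulting sum as a variance of $\hat f$ on a subcube. The only cosmetic difference is that where you apply \Cref{prop:restcoeffs} and Parseval (\Cref{fact:parseval}) directly to the auxiliary function $h_z$, the paper instead cites the pre-packaged \Cref{cor:0restcoeffs} and \Cref{cor:parsevalrestcoeffs}; these are the same computation.
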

\begin{proof}
For $x \in \pmone^S$ and $y \in \pmone^{\bar{S}}$, let $xy$ denote the bit string that equals $x$ on $S$ and $y$ on $\bar{S}$. For the rest of this proof we identify a bitstring with its characteristic set. Specifically, this means for every $f : \pmone^n \to \bR$, we use the notation $\hat{f}(x) := \hat{f}(\{j \in [n] : x_j = -1\})$.
The spectral norm of a symmetric matrix with a single non-zero row is simply the norm of that row. We can thus combine \Cref{lem:diagonal-blpcks} and \Cref{fact:off-diagonal-matrix-norm} to obtain that
\begin{align*}
    \|M_{f \circ \mathsf{XOR}} \circ \Delta_{S,b}\| &= 2^{n-|S|} \max_{x \in \pmone^{\bar{S}}} \norm{\left(\widehat{f_{\bar{S}|y \oplus b}}(x)\right)_{y \in \pmone^S \setminus \cbra{b}}}_2 \\
    &= 2^{n-|S|} \max_{x \in \pmone^{\bar{S}}} \sqrt{\sum_{y \in \pmone^S  \setminus \{b\}} \widehat{f_{\bar{S}|y \oplus b}}(x)^2} \\
    &= 2^{n-|S|} \max_{x \in \pmone^{\bar{S}}} \sqrt{\sum_{y \in \pmone^S  \setminus \{1^S\}} \widehat{f_{\bar{S}|y}}(x)^2} \tag*{by variable substitution}\\
    & = 2^{n-|S|} \max_{x \in \pmone^{\bar{S}}} \sqrt{\sum_{y \in \pmone^S}\widehat{f_{\bar{S}|y}}(x)^2 - \widehat{f_{\bar{S}|1^S}}(x)^2}\\
    & = 2^{n-|S|} \max_{x \in \pmone^{\bar{S}}} \sqrt{2^{|S|}\E_{y \in \pmone^S}[\widehat{f_{\bar{S}|y}}(x)^2] - \widehat{f_{\bar{S}|1^S}}(x)^2}\\
    & = 2^{n-|S|} \max_{x \in \pmone^{\bar{S}}} \sqrt{2^{|S|}\sum_{y \in \pmone^{S}}\widehat{f}(xy)^2 - \left(\sum_{y \in \pmone^S}\widehat{f}(xy)\right)^2}\tag*{By Corollaries~\ref{cor:0restcoeffs} and~\ref{cor:parsevalrestcoeffs}}\\
    & = 2^{n-|S|} \max_{x \in \pmone^{\bar{S}}} \sqrt{2^{2|S|}\E[\hat{f}_{S|x}^2] - 2^{2|S|}\E[\hat{f}_{S|x}]^2} \\
    & = 2^n \max_{x \in \pmone^{\bar{S}}} \sqrt{\Var(\hat{f}_{S|x})}.
\end{align*}
\end{proof}

Observe that the RHS in the theorem above is independent of $b$. Using this derived expression for the denominator, we are able to now state and prove our main theorem of this section. Recall the definition of $\val_\cQ$ from \Cref{eq:dfn-valq}.

\begin{repeattheorem}{\Cref{thm:main}}
    Let $n$ be a positive integer and $\cQ \subseteq 2^{[n]}$. 
    Then $\Q^\cQ(\oplus) = \Theta(\val_\cQ)$.
\end{repeattheorem}

\begin{proof}
    By \Cref{cor:adv_xor_matrix}, we have
    \[
        \ADV^{\cQ, \pm}(\oplus) = \max\limits_{\substack{f: \pmone^n \to \bR,\\|z|~\textnormal{even}\implies f(z) = 0}} \frac{\|M_{f \circ \XOR}\|}{\max\limits_{\substack{S \in \cQ,\\ b \in \pmone^S}}\|M_{f \circ \XOR} \circ \Delta_{S,b}\|}.
    \]
    We can use Lemma~\ref{lem:xornorm} and Theorem~\ref{thm:denominatornorm} to rewrite the numerator and denominator, respectively. This implies that
    \[
        \ADV^{\cQ, \pm}(\oplus) = \max\limits_{\substack{f: \pmone^n \to \bR,\\|z|~\textnormal{even}\implies f(z) = 0}} \frac{2^n \|\hat{f}\|_\infty}{2^n \max\limits_{S \in \cQ} \max\limits_{b \in \pmone^{\bar{S}}} \sqrt{\Var(\widehat{f}_{S|b})}}.
    \]
    Lemma~\ref{lem:fourierequiv} says that the condition `$|z|$ even $\implies f(z) = 0$' is equivalent to the condition `for all $S \subseteq [n]$, we have $\widehat{f}(S) = - \widehat{f}(\bar{S})$' (equivalently, $\widehat{f}$ is an odd function, identifying sets with their characteristic vectors). 
    The RHS above is thus equal to
    \[
        \max\limits_{\substack{\widehat{f}: 2^{[n]} \to \bR,\\\widehat{f} \textnormal{ is an odd function}}} \frac{2^n \|\hat{f}\|_\infty}{2^n \max\limits_{S \in \cQ} \max\limits_{b \in \pmone^{\bar{S}}} \sqrt{\Var(\widehat{f}_{S|b})}}.
    \]
    Syntactically replacing $\widehat{f}$ by $f$ everywhere and using \Cref{thm:adv_equals_quantum_queries}, we obtain
    \[
    \Q^{\cQ}(\oplus) = \Theta(\ADV^{\cQ, \pm}(\oplus)) = \Theta\left(\max_{\substack{f : \pmone^n \to \bR,\\ f \textnormal{ is an odd function}}} \frac{\|f\|_\infty}{\max\limits_{\substack{S \in \cQ,\\ b \in \pmone^{\bar{S}}}}\sqrt{\Var(f_{S|b})}}\right).\qedhere
    \]
\end{proof}

Using Observation~\ref{obs:paritylowerbound} and Lemma~\ref{lem:downwardclosed}, we obtain the following immediate corollary about the quantum $\cQ$-substring quantum query complexity of learning an input string.
\begin{corollary}\label{cor:learning}
    Let $n \in \mathbb{N}$ and $\cQ \subseteq 2^{[n]}$.
    Then, $\Q^\cQ(\rec) = \Omega(\val_\cQ)$. If $\cQ$ is also downward closed, then $\Q^\cQ(\rec) = \Theta(\val_\cQ)$.
\end{corollary}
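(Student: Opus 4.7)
The plan is to derive this corollary by directly chaining together the three preceding results: \Cref{obs:paritylowerbound}, \Cref{lem:downwardclosed}, and \Cref{thm:main}. There is no real technical content to add beyond this composition, so I would keep the proof deliberately short and simply verify that both implications come out correctly.

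For the unconditional lower bound $\Q^\cQ(\rec) = \Omega(\val_\cQ)$, I would first invoke \Cref{obs:paritylowerbound} to get $\Q^\cQ(\rec) \ge \Q^\cQ(\oplus)$, and then apply \Cref{thm:main} to replace the right-hand side by $\Theta(\val_\cQ)$. Concretely, $\Q^\cQ(\rec) \ge \Q^\cQ(\oplus) = \Theta(\val_\cQ) = \Omega(\val_\cQ)$, which gives the claimed bound with no restriction on $\cQ$.

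For the matching upper bound under the extra hypothesis that $\cQ$ is downward closed, I would apply \Cref{lem:downwardclosed}, which upgrades the inequality $\Q^\cQ(\rec) \ge \Q^\cQ(\oplus)$ from \Cref{obs:paritylowerbound} to the equality $\Q^\cQ(\rec) = \Q^\cQ(\oplus)$. Combining this with \Cref{thm:main} then immediately gives $\Q^\cQ(\rec) = \Q^\cQ(\oplus) = \Theta(\val_\cQ)$.

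Since both steps are one-line invocations of previously established results, I do not expect any technical obstacle. The only subtlety worth a brief sentence is to confirm that the hypothesis of \Cref{lem:downwardclosed}, namely that $\cQ$ is downward closed, is used exactly where the upper bound is claimed and nowhere in the unconditional part; this justifies stating the two regimes separately in the corollary.
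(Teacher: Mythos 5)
Your proposal matches the paper's own derivation exactly: the corollary is stated as an immediate consequence of Observation~\ref{obs:paritylowerbound}, Lemma~\ref{lem:downwardclosed}, and Theorem~\ref{thm:main}, chained together just as you describe. Nothing is missing, and the observation that downward-closedness is used only for the upper-bound direction is the right thing to note.
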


\section{Applications of the Framework}\label{sec:apps}
In this section we use the framework developed in the last section to analyze $\val_\cQ$, and thereby the quantum $\cQ$-substring query complexity of computing parity, for various interesting collections $\cQ$.

\subsection{Notation}
We define some notation that we will use for the rest of this section.

For a collection of sets $\cQ \subseteq 2^{[n]}$, we use the notation $\val_{\cQ}$ as defined in Theorem~\ref{thm:main}.
That is,
\begin{tcolorbox}
\begin{equation}\label{eq:valq}
    \val_\cQ := \max_{\substack{f : \pmone^n \to \bR\\ f \textnormal{ is an odd function}}} \frac{\|f\|_\infty}{\max\limits_{\substack{S \in \cQ,\\ b \in \pmone^{\bar{S}}}}\sqrt{\Var(f_{S|b})}}.
\end{equation}
\end{tcolorbox}

Recall the following interpretation of $\val_\cQ$: it is the maximum, over all odd functions $f : \pmone^n \to \bR$, ratio of the maximum absolute value of $f$ to the maximum standard deviation of $f$ in a subcube. The subcubes under consideration here are precisely those where the set of free variables must be equal to an allowed substring query set (i.e., a set in $\cQ$). The fixed variables can be fixed to any value. We organize the rest of this section into various subsections, one for each type of $\cQ$ that we consider.

\subsection{Contains All Singletons}
In the case that the allowed substring queries include all singletons, there is an easy quantum query upper bound of $n$: simply query all of the variables one at a time. As a sanity check, we first show how this can be recovered by our framework.

\begin{proposition}\label{prop:singletons}
    Let $\cQ \supseteq \cbra{\cbra{i} : i \in [n]}$. Then $\val_\cQ \leq n$.
\end{proposition}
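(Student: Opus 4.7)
The plan is to directly exhibit, for any odd $f : \pmone^n \to \bR$, a singleton query that witnesses a standard deviation of at least $\|f\|_\infty/n$, which by the characterization in \Cref{eq:valq} immediately yields $\val_\cQ \leq n$.

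First I would unpack what the denominator of $\val_\cQ$ looks like when restricted to singleton queries. For $S = \{i\}$ and $b \in \pmone^{\bar{S}}$, the restriction $f_{S|b}$ is a function on $\pmone^{\{i\}}$, taking just two values, namely $f_{S|b}(1)$ and $f_{S|b}(-1)$. A direct computation gives
\[
\Var(f_{S|b}) \;=\; \frac{\bigl(f_{S|b}(1) - f_{S|b}(-1)\bigr)^2}{4},
\]
so $\sqrt{\Var(f_{S|b})}$ equals half the absolute difference between the two neighboring values of $f$ that differ only in coordinate $i$.

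Next, I would exploit oddness of $f$ together with a Boolean hypercube path argument. Fix $x^* \in \pmone^n$ with $|f(x^*)| = \|f\|_\infty$. Since $f(-x^*) = -f(x^*)$, we have $|f(x^*) - f(-x^*)| = 2\|f\|_\infty$. Consider any path $x^* = y^{(0)}, y^{(1)}, \dots, y^{(n)} = -x^*$ in $\pmone^n$ that flips one coordinate at each step. By the triangle inequality,
\[
\sum_{j=1}^n \bigl|f(y^{(j)}) - f(y^{(j-1)})\bigr| \;\geq\; \bigl|f(x^*) - f(-x^*)\bigr| \;=\; 2\|f\|_\infty,
\]
so by averaging there exists a step $j^*$ with $|f(y^{(j^*)}) - f(y^{(j^* - 1)})| \geq 2\|f\|_\infty/n$. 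The two strings $y^{(j^*)}$ and $y^{(j^* - 1)}$ differ in exactly one coordinate $i \in [n]$, and this pair corresponds precisely to the restriction $f_{\{i\} \mid b}$ for $b \in \pmone^{\overline{\{i\}}}$ equal to their common values on $[n] \setminus \{i\}$.

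Finally I would conclude. Since $\cQ \supseteq \{\{i\} : i \in [n]\}$, the singleton $\{i\}$ together with the fixing $b$ is a valid choice in the denominator of \Cref{eq:valq}, and the two displays above give
\[
\max_{\substack{S \in \cQ \\ b \in \pmone^{\bar{S}}}} \sqrt{\Var(f_{S|b})} \;\geq\; \sqrt{\Var(f_{\{i\} \mid b})} \;\geq\; \frac{\|f\|_\infty}{n}.
\]
Hence the ratio in the definition of $\val_\cQ$ is at most $n$ for every odd $f$, and so $\val_\cQ \leq n$. There is no real obstacle in this argument; the only mild subtlety is remembering that the free set $S$, rather than its complement, indexes the query, so that singleton queries produce subcubes of dimension one whose variance has the clean two-point formula above.
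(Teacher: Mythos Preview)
Your proof is correct and essentially identical to the paper's: both pick $x^*$ attaining $\|f\|_\infty$, walk along a length-$n$ hypercube path from $x^*$ to $-x^*$, and use averaging (equivalently, pigeonhole) to find an edge where $f$ changes by at least $2\|f\|_\infty/n$, which gives a singleton subcube with standard deviation at least $\|f\|_\infty/n$. The only cosmetic difference is that the paper phrases the averaging as ``total drop $2M$ over $n$ edges'' while you invoke the triangle inequality explicitly.
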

\begin{proof}
Since $f$ is odd, there is always an $x^*$ such that $f(x^*)=\|f\|_\infty>0$. Let $M = \max_{x}f(x)$ be attained at $x^*$. As $f$ is odd, $f(-x^*) = -M$.

Consider a shortest path from $x^*$ to $-x^*$ in the hypercube, this has length $n$. Along this path the total value of $f$ decreases by $2M$, thus there must exist an edge with a drop of at least $2M/n$. That edge corresponds to flipping one coordinate, say $i \in [n]$. This means there exists an $z \in \pmone^n$ such that $|f(z) - f(z \oplus e_i)| \geq 2M/n$. Choose the subcube that fixes all variables but $i$ to be consistent with $z$, and leave the $i$'th variable free. Note that this is a valid subcube to consider as $\cQ$ contains all singletons. For this choice of subcube we have the variance (using the formula $\Var(g) = \frac{1}{2}\E[(g(x) - g(y))^2]$) to be $\frac{(f(z) - f(z \oplus e_i))^2}{4} \geq \frac{4M^2}{4n^2}$. Thus $\val_\cQ \leq \frac{M}{\sqrt{4M^2/4n^2}} = n$.
\end{proof}

\subsection{Sets of Bounded Size}
The setting $\cQ = \cbra{S \subseteq 2^{[n]} : |S| \leq k}$ corresponds to the model where the allowed substrings to be queried are bounded in size.

In particular, $k = n$ corresponds to the usual search with wildcards setting, where an algorithm is allowed to query OR's of arbitrary sets of literals. 

\begin{theorem}\label{thm:valbounded}
    Let $1 \leq k \leq n$ be an integer, and let $\cQ = \cbra{S \subseteq 2^{[n]} : |S| \leq k}$. Then,
    \[
    \frac{n}{\sqrt{k}} \leq \val_\cQ \leq \frac{2n}{\sqrt{k}}.
    \]
\end{theorem}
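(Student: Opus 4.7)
For the lower bound, I will exhibit the odd shifted Hamming-weight function $f(x) = -\tfrac{1}{2}\sum_{i=1}^n x_i$. It satisfies $\|f\|_\infty = n/2$, and for every $S \in \cQ$ (so $|S| \le k$) and $b \in \pmone^{\bar{S}}$, the restriction $f_{S|b}$ remains a linear function with variance exactly $|S|/4 \le k/4$. Plugging into \eqref{eq:valq} gives $\val_\cQ \ge (n/2)/\sqrt{k/4} = n/\sqrt{k}$.

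For the upper bound, the goal is to show every odd $f$ with $\|f\|_\infty = M$ admits some $S, b$ with $|S| \le k$ and $\Var(f_{S|b}) \ge kM^2/(4n^2)$, yielding $\val_\cQ \le 2n/\sqrt{k}$. My plan proceeds in two stages. In the first stage, I construct iteratively (by pruning coordinates with small degree-$1$ Fourier coefficients, using an adaptive threshold that tightens with the current subcube size) a subcube $(T, b)$ such that $g := f_{T|b}$ satisfies $\|g\|_\infty = M$ and $|\widehat{g}(\{i\})| \ge M/(2|T|)$ for every $i \in T$. At each step, if some $i \in T$ has $|\widehat{g}(\{i\})| < M/(2|T|)$, we fix coordinate $i$ to a value consistent with a maximizer of $|g|$ and repeat; termination with $|T| \ge 1$ is established by an inductive argument on $n$, relying on oddness to handle the base case. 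Combined with Parseval, this yields $\Var(g) \ge \sum_{i \in T}\widehat{g}(\{i\})^2 \ge M^2/(4|T|)$.

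In the second stage, given the subcube $(T, b)$, I handle the size constraint. If $|T| \le k$, the subcube is already valid, and $\Var(g) \ge M^2/(4|T|) \ge M^2/(4k) \ge kM^2/(4n^2)$ (using $|T| \le k \le n$). If $|T| > k$, pick $T' \subseteq T$ of size $k$ and $b' \in \pmone^{T \setminus T'}$ uniformly at random. By \Cref{cor:parsevalrestcoeffs}, $\E_{T', b'}\bigl[\sum_{i \in T'}\widehat{g_{T'|b'}}(\{i\})^2\bigr] \ge (k/|T|)\sum_{i \in T}\widehat{g}(\{i\})^2 \ge kM^2/(4|T|^2) \ge kM^2/(4n^2)$, so some realization gives a valid subcube with variance at least $kM^2/(4n^2)$.

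The main obstacle will be ensuring the iterative procedure in Stage 1 terminates with $|T| \ge 1$ while simultaneously maintaining $\|g\|_\infty = M$ and the adaptive degree-$1$ lower bound. The difficulty is that restricting a coordinate generally destroys oddness, so the inductive hypothesis cannot be applied directly to the restricted function. I plan to resolve this by strengthening the inductive invariant to a condition preserved under restriction---for instance, that $f$ attains both $+M$ and $-M$ at two marked points of Hamming distance $d$ (with oddness corresponding to $d = n$), and the degree-$1$ threshold correspondingly becoming $M/(2d)$---and choosing the restriction value at each step to either preserve both marked points (when they agree on the pruned coordinate) or replace them by a new pair with controlled loss in $d$. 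Any resulting constant-factor slack is absorbed by the factor of $2$ in the final $2n/\sqrt{k}$ bound.
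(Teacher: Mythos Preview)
Your lower bound and Stage 2 match the paper's argument and are correct. The gap is in Stage~1.

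Your two-point invariant is not preserved under restriction in the way you need. For an odd $f$, the marked points $x^*$ and $y^* = -x^*$ differ on \emph{every} coordinate, so every pruning step is the ``differ'' case. After restricting $x_i = x^*_i$, the point $-x^*$ is lost, and the restricted function need not attain any value near $-M$: take $f$ equal to $M$ at $1^n$, $-M$ at $(-1)^n$, and $0$ elsewhere; the restriction to $x_n = 1$ takes only the values $0$ and $M$. There is no new pair to substitute, so ``controlled loss in $d$'' has no clear meaning. Even if the procedure happens to terminate correctly in examples, your stated invariant does not justify it.

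You are also making Stage~1 harder than necessary. You aim for the adaptive threshold $|\widehat{g}(\{i\})| \ge M/(2|T|)$, but for Stage~2 the fixed threshold $M/(2n)$ already suffices: $\E_{b'}[\widehat{g_{T'|b'}}(\{i\})^2] \ge \widehat{g}(\{i\})^2 \ge M^2/(4n^2)$ for each of the $k$ coordinates in $T'$, summing to $kM^2/(4n^2)$.

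The paper's fix is to replace oddness by the weaker invariant $\E[g] = 0$ (balancedness), which \emph{can} be restored after each restriction simply by subtracting the mean. The key calculation: if $\E[g] = 0$ and $|\widehat{g}(\{i\})| < M/(2(\ell+1))$, then the half-means satisfy $|\mu_{i,1} - \mu_{i,-1}| < M/(\ell+1)$ and $\mu_{i,1} + \mu_{i,-1} = 0$, hence $|\mu_{i,x^*_i}| < M/(2(\ell+1))$. Restricting to $x^*_i$ and subtracting $\mu_{i,x^*_i}$ yields a balanced function on $\ell$ variables with $\|\cdot\|_\infty \ge M(1 - 1/(2(\ell+1)))$; one then checks inductively that the degree-$1$ threshold stays above $M/(2n)$ throughout (degree-$1$ Fourier coefficients are unaffected by the additive shift). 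This recentering step---trading exact preservation of $\|g\|_\infty$ for exact preservation of balancedness---is the missing idea.
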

We show the lower bound using the (appropriately shifted) Hamming weight function. We prove the upper bound by induction on $n$. We first state a convenient induction hypothesis, then show how it implies the upper bound, and finally prove the induction hypothesis.

\begin{lemma}\label{lem:induction}
    For all integers $n > 0$ and all functions $f : \pmone^n \to \bR$ with $\E[f(x)] = 0$, there exists a $S \subseteq [n]$ and $b \in \pmone^{\bar{S}}$ where:
    \begin{itemize}
        \item $\Var(f_{S|b}) \geq \frac{\|f\|_\infty^2}{4n}$, and
        \item $\left|\widehat{f_{S|b}}(\cbra{i})\right| \geq \frac{\|f\|_\infty}{2n}$ for all $i \in S$.
    \end{itemize}
\end{lemma}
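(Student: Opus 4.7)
The plan is to induct on $n$. The base case $n=1$ is immediate: the mean-zero condition forces $f(1) = -f(-1) = \pm\|f\|_\infty$, so taking $S = \{1\}$ with empty $b$ gives $\Var(f) = \|f\|_\infty^2$ and the unique nonzero degree-1 Fourier coefficient equal to $\pm\|f\|_\infty$, both comfortably exceeding $\|f\|_\infty^2/4$ and $\|f\|_\infty/2$ respectively.

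For the inductive step I would dichotomize on whether every degree-1 Fourier coefficient of $f$ is ``large''. If $|\widehat{f}(\{i\})| \geq \|f\|_\infty/(2n)$ for all $i \in [n]$, then $S = [n]$ with empty $b$ works: combining Parseval (Fact~\ref{fact:parseval}) and Lemma~\ref{lem:fouriervariance} gives
\[
\Var(f) \;\geq\; \sum_{i \in [n]} \widehat{f}(\{i\})^2 \;\geq\; n \cdot \frac{\|f\|_\infty^2}{4n^2} \;=\; \frac{\|f\|_\infty^2}{4n},
\]
and the degree-1 bound holds by hypothesis.

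Otherwise there is some $i^* \in [n]$ with $|\widehat{f}(\{i^*\})| < \|f\|_\infty/(2n)$. I would fix this coordinate to $x^*_{i^*}$, where $x^*$ maximizes $|f|$, so that the restriction $g := f_{[n]\setminus\{i^*\} \mid x^*_{i^*}}$ still attains $\|f\|_\infty$ at $x^*_{\overline{\{i^*\}}}$. A one-line computation shows $\E[g] = x^*_{i^*}\,\widehat{f}(\{i^*\})$, hence $|\E[g]| < \|f\|_\infty/(2n)$. Form $g' := g - \E[g]$ on $n-1$ variables; this has mean zero, the same variance and the same nonzero Fourier coefficients as $g$, and by the triangle inequality $\|g'\|_\infty \geq \|f\|_\infty(1 - 1/(2n))$. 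Applying the inductive hypothesis to $g'$ yields $S' \subseteq [n]\setminus\{i^*\}$ and $b'$, and setting $S := S'$ with $b$ equal to $b'$ extended by $x^*_{i^*}$ on coordinate $i^*$ gives $f_{S|b} = g_{S'|b'}$, which again differs from $g'_{S'|b'}$ only by a constant shift.

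The final step is a routine arithmetic check: the inductive conclusions give $\Var(f_{S|b}) \geq \|g'\|_\infty^2/(4(n-1))$ and $|\widehat{f_{S|b}}(\{j\})| \geq \|g'\|_\infty/(2(n-1))$ for $j \in S$, and plugging in $\|g'\|_\infty \geq \|f\|_\infty(1 - 1/(2n))$ both inequalities reduce to $n(1 - 1/(2n)) \geq n-1$, i.e.\ $n - 1/2 \geq n-1$, which is trivial. The only real ``obstacle'' in the argument is bookkeeping --- making sure that the shift-by-constant trick preserves variance and nonzero Fourier coefficients, and that passing from $\|f\|_\infty$ to $\|g'\|_\infty$ loses only a factor of $1 - 1/(2n)$ so that the induction closes with the stated constants.
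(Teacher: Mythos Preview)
Your proof is correct and is essentially the same argument as the paper's: both induct on $n$, split on whether all degree-$1$ Fourier coefficients exceed $\|f\|_\infty/(2n)$, and in the ``small coefficient'' case fix that coordinate to $x^*_{i^*}$, recenter, and invoke the inductive hypothesis on the resulting $(n-1)$-variable function. The paper phrases the dichotomy in terms of the conditional means $\mu_{i,b}$, but since $\widehat{f}(\{i\}) = (\mu_{i,1}-\mu_{i,-1})/2$ the two case splits are identical. One tiny slip: the variance inequality actually reduces to $n(1-1/(2n))^2 \geq n-1$ (with a square), not $n(1-1/(2n)) \geq n-1$; but $n(1-1/(2n))^2 = n-1+1/(4n) \geq n-1$, so the conclusion still holds.
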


In other words, there exists a subcube where the restricted function has large variance, as well as large degree-1 Fourier coefficients. We feel this is an independently interesting result in Fourier analysis.
We first show how Theorem~\ref{thm:valbounded} follows from Lemma~\ref{lem:induction} (note that Lemma~\ref{lem:induction} gives no guarantee on $|S|$, which we require for proving Theorem~\ref{thm:valbounded}).

\begin{proof}[Proof of Theorem~\ref{thm:valbounded}]
We first show the lower bound on $\val_\cQ$, and then the upper bound.
\begin{itemize}
    \item \textbf{Lower bound:} Consider the function $f : \pmone^n \to \bR$ defined by $f(x) = |x| - (n/2)$. It is clear that $f$ is an odd function since $f(-x) = |-x| - (n/2) = (n - |x|) - (n/2) = n/2 - |x| = -f(x)$. Let $m \leq k$. For any subcube that fixes all but a collection of $m$ variables, it is easy to see that the random variable obtained by restricting $f$ to this subcube and picking a uniformly random function value in the restricted subcube is simply a constant additive shift of the binomial random variable $\mathrm{Bin}(m, 1/2)$. Since such shifts do not affect variance, the variance in this subcube equals $m/4 \leq k/4$, since the only subcubes under consideration are those where the number of free variables is at most $k$. Since $\max |f(x)| = n/2$, this function $f$ witnesses
    \[
    \val_\cQ \geq \frac{n/2}{\sqrt{k/4}} = \frac{n}{\sqrt{k}}.
    \]
    \item \textbf{Upper bound:} Let $M := \|f\|_\infty = \max_{x \in \pmone^n}|f(x)|$. Let $S \subseteq [n]$ and $b \in \pmone^{\bar{S}}$ be obtained from Lemma~\ref{lem:induction} (which is applicable since $f$ is odd, and hence balanced).
    
    If $|S| \leq k$, then this is a valid subcube that can be chosen in the denominator of $\val_\cQ$, and we get $\val_\cQ \leq \frac{M}{\sqrt{M^2/4n}} = 2\sqrt{n} \leq \frac{2n}{\sqrt{k}}$, which proves the theorem. So for the rest of the proof we may assume that $|S| > k$. For convenience, define $g:=f_{S|b} : \pmone^S \to \bR$.
    
    Let $T$ be any subset of $S$ of size $k$, and define $\bar{T}:=S \setminus T$ (i.e., the complement of $T$ with respect to the universe $S$). For an arbitrary $x \in \pmone^{\bar{T}}$, consider the function $g_{T|x} : \pmone^T \to \bR$ obtained by restricting $g$ further by setting the variables in $\bar{T}$ to $x$. We have for each $i \in T \subseteq S$,
    \begin{align*}
        \E_{x}[\widehat{g_{T|x}}(\cbra{i})^2] & = \sum_{A \subseteq \bar{T}} \widehat{g}(A \cup \cbra{i})^2 \tag*{by Corollary~\ref{cor:parsevalrestcoeffs}}\\
        & \geq \widehat{g}(\emptyset \cup \cbra{i})^2 = \widehat{g}(\cbra{i})^2\\
        & \geq \frac{M^2}{4n^2},
    \end{align*}
    where the last line holds by Lemma~\ref{lem:induction}.
    By linearity of expectation, and since $|T| = k$, we have
    \begin{align*}
        \E_x\left[\sum_{i \in T}\widehat{g_{T|x}}(\cbra{i})^2\right] = \sum_{i \in T}\E_x\left[\widehat{g_{T|x}}(\cbra{i})^2\right] \geq \frac{M^2k}{4n^2}.
    \end{align*}
    Therefore there exists a $x^* \in \pmone^{\bar{T}}$ such that $\sum_{i \in T}\rbra{\widehat{g_{T|x^*}}(\cbra{i})^2} \geq \frac{M^2 k}{4n^2}$. Since $\Var(h) = \sum_{S \neq \emptyset} \widehat{h}(S)^2$ for all functions $h$ by Lemma~\ref{lem:fouriervariance}, we have
    \begin{align*}
        \Var(g_{T|x^*}) = \sum_{S \subseteq T, S \neq \emptyset}\widehat{g_{T|x^*}}(S)^2 \geq \sum_{i \in T}\widehat{g_{T|x^*}}(\cbra{i})^2 \geq \frac{M^2 k}{4n^2}.
    \end{align*}
    Note that $g_{T|x^*}$ is the restriction of $f$ to the subcube $T$, by setting the variables in $\bar{S}$ and $\bar{T}$ to $b$ and $x^*$, respectively. This is a valid subcube for the denominator of $\val_\cQ$ (see \Cref{eq:valq}), and so this subcube witnesses
    \[
    \val_\cQ \leq \frac{M}{\sqrt{\frac{M^2k}{4n^2}}} = \frac{2n}{\sqrt{k}},
    \]
    which completes the proof.\qedhere
\end{itemize}
\end{proof}

It only remains to prove our induction hypothesis, Lemma~\ref{lem:induction}, which we do now.

\begin{proof}[Proof of \Cref{lem:induction}]
We prove the lemma by induction on $n$. Let $\|f\|_\infty=M$ and assume that $|f(x^*)|=M$.
    \begin{itemize}
        \item \textbf{Base case}: $n = 1$. Thus we have $f(-1) = M$ and $f(1) = -M$ (or the opposite, in which case essentially the same proof works). We have the variance over the whole cube as $\Var(f) = M^2 > \frac{M^2}{4}$. We also have $\widehat{f}(\cbra{1}) = $ either $M$ or $-M$. In either case, $|\widehat{f}(\cbra{1})| = M \geq \frac{M}{2}$. This proves the base case.
        \item \textbf{Assumption for inductive step}: We assume the induction hypothesis to be true for $n = \ell$, i.e., for every balanced function $f$ on $\ell$ variables, there exists a $S \subseteq [\ell]$ and $b \in \pmone^{\bar{S}}$ such that $\Var(f_{S|b}) \geq \frac{M^2}{4\ell}$, and $\left|\widehat{f_{S|b}}(\cbra{i})\right| \geq \frac{M}{2\ell}$ for all $i \in S$.
        \item \textbf{Inductive step}: For the inductive step, we assume that the hypothesis holds for $n=\ell$ (see above) and show it is true for $n=\ell + 1$. Define $\mu_{i, b} := \E_{x : x_i = b}[f(x)]$ for $b \in \pmone$. We consider two cases.
        \begin{itemize}
            \item \textbf{Case 1:} $\exists i \in [\ell + 1]$ such that $|\mu_{i, x^*_i} - \mu_{i, -x^*_i}| \leq \frac{M}{\ell + 1}$. Since we know $f$ to be balanced, we have $\mu_{i, x^*_i} + \mu_{i, -x^*_i} = 0$, and hence by the triangle inequality
            \[
                |\mu_{i, x^*_i}| = \left|\frac{\mu_{i,x^*_i} - \mu_{i,-x^*_i}}{2} + \frac{\mu_{i,x^*_i} + \mu_{i,-x^*_i}}{2}\right| \leq \frac{|\mu_{i,x^*_i} - \mu_{i,-x^*_i}|}{2} + \frac{|\mu_{i,x^*_i} + \mu_{i,-x^*_i}|}{2} \leq \frac{M}{2(\ell + 1)}.
            \]
            Define $f_i := f|_{x_i = x^*_i} - \mu_{i, x^*_i}$. Note that we have $\E[f_i] = 0$ and $\|f_i\|_\infty \geq M - \frac{M}{2(\ell + 1)}$, where we used that the maximum absolute value of $f$ is attained in the domain of $f_i$ by construction (i.e., $x^*$ is in the domain of $f_i$). Applying the induction hypothesis to $f_i$, we get a $S \subseteq [\ell + 1]$ and $b \in \pmone^{\bar{S}}$ (with $i \in \bar{S}$ and $b_i=x^*_i$) such that the following two properties hold.
            \begin{itemize}
                \item We write $b' := b|_{\bar{S} \setminus \{i\}}$, i.e., the bit string $b$ without the bit at position $i$, and obtain
                \[
                \Var((f_i)_{S|b'}) \geq \frac{\rbra{M\rbra{1 - \frac{1}{2(\ell + 1)}}}^2}{4\ell} \geq \frac{\rbra{M\rbra{\sqrt{1 - \frac{1}{\ell + 1}}}}^2}{4\ell} = \frac{M^2}{4\ell}\cdot\frac{\ell}{\ell + 1} = \frac{M^2}{4(\ell + 1)}.
                \]
                Here we are using $1-\frac{1}{2\alpha} \geq \sqrt{1 - \frac{1}{\alpha}}$ for all $\alpha > 1$, which can easily be verified by squaring both sides and comparing terms.
                Since $f_{S|b} = (f_i)_{S|b'} + \mu_{i,x_i^*}$ and variance is unaffected by a constant additive shift, this implies $\Var(f_{S|b}) \geq \frac{M^2}{4(\ell + 1)}$ as well.
                \item First note that for any function $g : \pmone^n \to \bR$ and any constant $c$, all Fourier coefficients of $g$ are equal to the corresponding Fourier coefficients of $g - c$, except for the empty coefficient which gets subtracted by $c$. Thus, we again write $b' := b|_{\bar{S} \setminus \{i\}}$, to find that for all $j \in S$, 
                \[
                    |\widehat{f_{S|b}}(\cbra{j})| = |\widehat{(f_i)_{S|b'}}(\cbra{j})|  \geq \frac{M\rbra{1 - \frac{1}{2(\ell + 1)}}}{2\ell} = \frac{M \cdot \frac{2\ell + 1}{2(\ell + 1)}}{2\ell} > \frac{M \cdot 2\ell}{2(\ell + 1)}\cdot\frac{1}{2\ell} = \frac{M}{2(\ell + 1)}.
                \]
            \end{itemize}
                \item \textbf{Case 2:} For all $i \in [\ell + 1]$, we have $|\mu_{i, x^*_i} - \mu_{i,-x^*_i}| > \frac{M}{\ell + 1}$. In that case, we choose $S = [\ell+1]$ and $b$ becomes the empty string, so that $f_{S|b} = f$. Then, we find
                \begin{itemize}
                    \item By definition of Fourier coefficients, we have for all $i \in [\ell+1]$,
                    \[
                        | \widehat{f}(\cbra{i})| = | \E_x[f(x)\cdot x_i] | = \left|\frac{x^*_i}{2} (\mu_{i, x^*_i} - \mu_{i, -x^*_i})\right| > \frac{M}{2(\ell + 1)}.
                    \]
                    \item By Lemma~\ref{lem:fouriervariance}, we have the variance of the whole subcube as
                    \[
                    \Var(f) = \sum_{S \neq \emptyset}\widehat{f}(S)^2 \geq \sum_{i \in [\ell + 1]}\widehat{f}(\cbra{i})^2 > (\ell + 1)\frac{M^2}{4(\ell + 1)^2} = \frac{M^2}{4(\ell + 1)}.\qedhere
                    \]
            \end{itemize}
        \end{itemize}
    \end{itemize}
\end{proof}

\subsection{Contiguous Blocks}

The setting $\cQ = \cbra{[i, j] : i \leq j \in [n]}$ corresponds to the model where the allowed substrings to be queried form contiguous blocks. While we don't phrase it as such, our results in this subsection also hold when the notion of contiguity allows for wraparound.

The following is our main result of this subsection.
\begin{theorem}\label{thm:valcontig}
    Let $n$ be a positive integer and let $\cQ = \cbra{[i, j] : i \leq j \in [n]}$. Then there exists a universal constant $c$ such that
    \[
        \frac{cn}{\log n} \leq \val_\cQ \leq n.
    \]
\end{theorem}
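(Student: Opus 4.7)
The upper bound $\val_\cQ \le n$ is immediate from Proposition~\ref{prop:singletons}, since $\{i\} = [i,i] \in \cQ$ for every $i \in [n]$. The content of the theorem is therefore the lower bound, and my plan is to exhibit an explicit odd function that achieves $\Omega(n/\log n)$.

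I would follow the construction sketched before the theorem statement. Partition $[n]$ into $m = n/k$ contiguous blocks of length $k$, and view an input $x \in \pmone^n$ as $(z^{(1)},\ldots,z^{(m)})$ with $z^{(i)} \in \pmone^k$. Define $g : \pmone^k \to \{-1,0,1\}$ by $g(1^k)=1$, $g((-1)^k)=-1$, and $g(z)=0$ for every other $z$, and set $f(x) := \sum_{i=1}^m g(z^{(i)})$. Then $g(-z)=-g(z)$ pointwise, so $f$ is odd; and since $|g|\le 1$ with equality only at the two constant strings, we get $\|f\|_\infty = m = n/k$, attained at $x = 1^n$.

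The heart of the argument is to bound $\Var(f_{S|c})$ for every valid $S = [p,q] \in \cQ$ and every fixing $c \in \pmone^{\bar S}$. The key structural observation is that a contiguous window $S$ meets at most two of the size-$k$ blocks only at their boundary, while every other intersected block lies entirely inside $S$. I would classify each of the $m$ blocks as (i) fully in $\bar S$, so $g(z^{(i)})$ is a constant and contributes zero variance; (ii) fully in $S$, so $g(z^{(i)})$ depends only on the free variables of that block; or (iii) straddling an endpoint of $S$, of which there are at most two. Because distinct blocks depend on disjoint free variables, the type-(ii) and type-(iii) summands are mutually independent under the uniform distribution on $\pmone^S$, so their variances add. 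A type-(ii) summand has $\E[g]=0$ and $\E[g^2] = 2\cdot 2^{-k} = 2^{1-k}$, hence variance $2^{1-k}$, and there are at most $m$ of these. A type-(iii) summand is bounded by $\|g\|_\infty = 1$ and so contributes at most $1$. Summing,
\[
\Var(f_{S|c}) \;\le\; \tfrac{n}{k}\cdot 2^{1-k} + 2.
\]

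Finally, I would choose $k = \lceil \log n - \log \log n \rceil$, so that $k\cdot 2^k = \Theta(n)$ and the first term becomes $O(1)$, giving $\Var(f_{S|c}) = O(1)$ uniformly over valid subcubes. Plugging into~\eqref{eq:valq} yields $\val_\cQ \ge (n/k)/O(1) = \Omega(n/\log n)$, as required. I do not anticipate a serious obstacle; the routine care points are (a) handling divisibility by padding $n$ up to the next multiple of $k$, which does not affect the asymptotics, and (b) noting that the "at most two straddling blocks" observation survives the wraparound variant, since a cyclic arc still has two endpoints, so the same proof gives the lower bound in that setting too.
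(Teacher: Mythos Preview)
Your proposal is correct and follows essentially the same approach as the paper: the same block construction with $g$, the same independence-of-blocks variance decomposition with at most two boundary blocks contributing $O(1)$ and the fully-free blocks contributing $m\cdot 2^{1-k}$, and the same choice $k \approx \log n - \log\log n$. The only cosmetic differences are your explicit block classification (i)/(ii)/(iii) and the divisibility remark, neither of which departs from the paper's argument.
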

\begin{proof}
Since $\cQ$ contains all singletons, Proposition~\ref{prop:singletons} implies the upper bound.
For the lower bound, define $f : \pmone^{mk} \to \bR$ as $f(z^{(1)}, \dots, z^{(m)}) = \sum_{i = 1}^m g(z^{(i)})$. Here each $z^{(i)} \in \pmone^k$, and 
\[
g(z^{(i)}) = \begin{cases}
    1, & z^{(i)} = 1^k, \\
    -1, & z^{(i)} = (-1)^k, \\
    0, & \textnormal{otherwise}.
\end{cases}
\]
First note that $f$ is odd and $\max|f(x)| = m = n/k$. We show that every valid subcube in the denominator of the RHS of the expression of $\val_\cQ$ (see Equation~\eqref{eq:valq}) has small variance, which proves the theorem.

For any subcube $(S, b)$ where $S$ is a contiguous block, since a contiguous block can partially intersect at most 2 blocks (let the free variables in these blocks be denoted by strings $y^{(1)}$ and $y^{(2)}$), we can assume that the restriction of $f$ to the subcube $(S, b)$ has the following form:
\begin{equation}\label{eq:contigsubcube}
f_{S| b}(y^{(1)}, y^{(2)}, z^{(1)}, z^{(2)}, \dots z^{(t)}) = h_1(y^{(1)}) + h_2(y^{(2)}) + \sum_{i = 1}^t g(z^{(i)}) + c,
\end{equation}
where $h_1, h_2 \in \cbra{r_1, r_{-1}, r_0}$ depending on the values of the set variables in the corresponding blocks, where for all $b \in \cbra{-1, 0, 1}$,
\[
r_b(y) := \begin{cases}
    b, & \textnormal{if } y \textnormal{ is the all-}b \textnormal{ string}, \\
    0, & \textnormal{otherwise}.
    \end{cases}
\]
In particular, $r_0$ is the constant 0 function. Since each block in the restricted function contains disjoint variables, the variance of the sum on the RHS of Equation~\eqref{eq:contigsubcube} equals the sum of variances of the individual terms, which equals
\begin{align*}
& = \Var(h_1) + \Var(h_2) + \sum_{i = 1}^t\Var(g) + 0,\\
& = O(1) + t \cdot \E[g^2],\\
& = O(1) + m \cdot \frac{1}{2^{k-1}},
\end{align*}
where the second line follows since since $h_i$ is a $\cbra{-1, 0, 1}$-valued function, and each $g$ is a balanced function, and the last line follows since $t \leq m$ and $g^2$ takes value 1 at precisely two points and 0 everywhere else.

If we set $n$ to be the number of variables for $f$, which equals $mk$, then the above variance equals $O(1) + \frac{n/k}{2^{k - 1}}$. Thus we have 
\[
\val_\cQ = \frac{\|f\|_\infty}{\sqrt{\max\limits_{\substack{S \in \cQ,\\ b \in \pmone^{\bar{S}}}}\Var({f_{S | b}})}} = \frac{n/k}{\sqrt{O(1) + \frac{n/k}{2^{k - 1}}}} = \Omega\left(\min\cbra{\frac{n}{k}, \sqrt{\frac{n}{k}}\cdot2^{\frac{k - 1}{2}}}\right).
\]
Setting $k$ to maximize this quantity asymptotically, we get $\frac{n}{k} = 2^{k-1}$, which implies $n = k\cdot 2^{k-1}$, implying $k = \log n - \log\log n + O(1)$, and hence $\val_\cQ = \Omega(n/\log n)$, proving the theorem.
\end{proof}

\subsection{Prefixes}

In this subsection we deal with the setting where $\cQ = \cbra{[1, i] : i \in [n]}$, that is, the allowed substrings to be queried are only prefixes of the input string. 
\begin{theorem}\label{thm:valprefix}
    Let $n$ be a positive integer and let $\cQ = \cbra{[1,i] : i \in [n]}$. Then,
    \[
        \frac{n}{\sqrt{8}} \leq \val_\cQ \leq n.
    \]
\end{theorem}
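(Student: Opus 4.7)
The plan is to prove the two bounds separately, both via Theorem~\ref{thm:main}. For the upper bound $\val_\cQ \leq n$, I will take an arbitrary odd $f : \pmone^n \to \bR$ with $M := \|f\|_\infty$, pick $x^* \in \pmone^n$ with $f(x^*) = M$, and define $\mu_i$ to be the average of $f$ over the subcube $C_i$ obtained by fixing the last $i$ coordinates to agree with $x^*$. Oddness forces $\mu_0 = 0$ while $\mu_n = M$, so the telescoping identity $M = \sum_{i=0}^{n-1}(\mu_{i+1}-\mu_i)$ together with pigeonhole produces some index $i \in \{0,\ldots,n-1\}$ with $|\mu_{i+1}-\mu_i| \geq M/n$. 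Setting $g := f|_{C_i}$ (a function of the free variables $x_{[1,n-i]}$), a short calculation splitting $C_i$ according to the value of the $(n-i)$-th coordinate gives $|\widehat{g}(\{n-i\})| = |\mu_{i+1}-\mu_i|$, and Lemma~\ref{lem:fouriervariance} then yields $\sqrt{\Var(g)} \geq M/n$. Since $[1,n-i]$ is a prefix, $C_i$ is a valid subcube for $\cQ$, so $\val_\cQ \leq M/(M/n) = n$.

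For the lower bound $\val_\cQ \geq n/\sqrt{8}$, the plan is to construct an explicit odd $f$ with $\|f\|_\infty = n$ whose variance on every suffix-fixing subcube is bounded by a universal constant of at most $8$. Following the overview, I define $h : \pmone^{n-1} \to \{1,\ldots,n\}$ by the decision list that checks $x_{n-1}, x_{n-2}, \ldots$ in order and outputs the index of the first $-1$ encountered (with an asymmetric final clause that produces the value $n$), and extend it to an odd $f : \pmone^n \to \bR$ by $f(x,1) := h(x)$ and $f(x,-1) := -h(-x)$. The crucial structural property is that the value $i$ is attained on roughly $2^{n-1-i}$ inputs on each half, so values grow linearly while their probabilities decay geometrically; already the full cube has $\E[f^2]$ bounded by $\sum_{i\geq 1} i^2/2^i = 6$ plus lower-order terms.

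To bound $\Var(f_{S|b})$ for any valid suffix-fixing $(S,b)$, I split cases based on whether the fixed bits trigger one of the early decision-list clauses on the positive side ($x_n = 1$), on the negative side ($x_n = -1$), or on neither. A triggered side makes $f_{S|b}$ constant there, contributing nothing to the variance, while an untriggered side inherits the same geometric-decay structure on fewer free variables. Expanding the second moment, the dominant contributions are controlled by the convergent sums $\sum_{i\geq 1} i^2/2^i = 6$ and $\sum_{i\geq 1} i/2^i = 2$, yielding a dimension-independent bound which a short calculation verifies is at most $8$. The main obstacle is precisely this case analysis on the lower-bound side: ensuring that \emph{every} valid suffix-fixing yields variance at most $8$, irrespective of how the fixed bits interact with the firing positions on both halves. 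The geometric decay of the firing probabilities is the key structural fact making the relevant second-moment sums converge to universal constants independent of $n$.
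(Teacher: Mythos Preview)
Your proposal is correct and follows essentially the same approach as the paper: the upper bound via the telescoping averages along the suffix-fixing chain and a degree-$1$ Fourier coefficient, and the lower bound via the same decision-list construction (with the odd extension over the last bit) together with the case analysis on whether the fixed suffix triggers a clause, controlling all variances by the convergent series $\sum_{i\geq 1} i^2/2^i$. One cosmetic remark: the statement concerns $\val_\cQ$ directly, so you do not actually need to invoke Theorem~\ref{thm:main}; also your phrase ``index of the first $-1$ encountered'' should be read as the ordinal position in the check sequence (so the first check $x_{n-1}=-1$ yields output $1$), consistent with the probability-decay property you state.
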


We first define decision lists.
\begin{defi}[Decision lists]\label{def:declist}
A decision list is a sequence $D = ((L_1, a_1), (L_2, a_2), \dots, (L_k, a_k), b)$, where each $a_i, b \in \bR$ and each $L_i$ is either a variable or the negation (i.e. complement) of a variable.  The decision list computes a function $f : \cbra{-1, 1}^n \rightarrow \bR$ as follows.
If $L_1(x) = -1$, then $f(x) = a_1$; elseif $L_2(x) = -1$, then $f(x) = a_2$, elseif \dots, elseif $L_k(x) = -1$, then $f(x) = a_k$, else $f(x) = b$.
\end{defi}
\begin{proof}[Proof of Theorem~\ref{thm:valprefix}]
We first show the upper bound and then the lower bound.
\begin{itemize}
    \item The upper bound follows immediately from the observation that the values of all the variables (and hence their parity) can be learned by $n$ (classical) prefix queries and the characterization given in Theorem~\ref{thm:main}. Below we recover the same bound by analyzing $\val_\cQ$ directly. To this end, let $f:\pmone^n \to \bR$ be any odd function, and say its maximum value $M$ is attained at $x^*=(x^*_1, \ldots, x^*_n)\in\pmone^n$. For $i \in [n]$ define subcube $C_i$ to be the one that sets variables $x_i, \ldots, x_n$ to $x^*_i,\ldots, x^*_n$ respectively and define $C_{n+1}:=\pmone^n$. Note that the set of variables set in each of these subcubes is the complement of a prefix, and hence each of these subcubes is a valid subcube to consider in the denominator of the expression for $\val_\cQ$. Since $f$ is odd and hence balanced, $\E_{x \in C_{n+1}}[f(x)]=0$. We also have that $\E_{x \in C_{1}}[f(x)]=f(x^*)=M$. Thus there exists an $i\in [n]$ such that $\E_{x \in C_{i+1}}[f(x)]-\E_{x \in C_{i}}[f(x)] \leq -M/n$. We will show that $\Var(f_{C_{i+1}})$ is large.

Note that $C_{i}=C_{i+1} \cap \{x: x_i=x^*_i\}$. Let $C'_i:=C_{i+1} \cap \{x: x_i=-x^*_i\}$. Thus $\E_{x \in C_{i+1}}[f(x)] = \frac{1}{2}\rbra{\E_{x \in C'_{i}}[f(x)]+\E_{x \in C_{i}}[f(x)]}$. Hence, 
\begin{align*}
-\frac{M}{n} &\geq \E_{x \in C_{i+1}}[f(x)]-\E_{x \in C_{i}}[f(x)] \\
&=\frac{1}{2}\cdot \rbra{\E_{x \in C'_{i}}[f(x)]-\E_{x \in C_{i}}[f(x)]}\\
&\geq -|\E_{x \in C_{i+1}}[x_i\cdot f(x)]| \\
&=-|\widehat{f_{C_{i+1}}}(\cbra{i})|,
\end{align*}  
implying $|\widehat{f_{C_{i+1}}}(\cbra{i})| \geq M/n$. This in turn implies via Lemma~\ref{lem:fouriervariance} that $\Var(f_{C_{i+1}}) \geq |\widehat{f_{C_{i+1}}}(\cbra{i})|^2 \geq M^2/n^2$. We thus have that $\val_\cQ \leq \frac{M}{\sqrt{M^2/n^2}}=n$.

\item For the lower bound, we define a function such that every valid subcube has small variance. Define $g:\pmone^{n-1} \to \bR$ to be the decision list $((x_{n-1}, 1), (x_{n-2}, 2),$ $\ldots, (x_2, n-2), (x_1, n-1), n)$. Then  define $f:\pmone^n \to \bR$ as follows:
\[
f(x_1, \ldots, x_{n-1}, y) = \begin{cases}
    g(x_1, \ldots, x_{n-1}), & \text{if } y=1,\\
    -g(-x_1, \ldots, -x_{n-1}), & \text{if } y=-1.
\end{cases}
\]
Note that $f$ is odd by construction.
Towards bounding $\val_Q$ from below, we first note that $\|f\|_\infty=n$. Next, we bound the variance of $f$ in each subcube that fixes every variable in the complement of a prefix. We show that the variance of $f$ in each such subcube is $O(1)$. This will show that $\val_\cQ \geq \frac{n}{\sqrt{O(1)}}=\Omega(n)$.

Let $S \in \cQ$ and $b \in \pmone^{\bar{S}}$. We split our analysis into three cases.
\begin{itemize}
\item \textbf{Case 1:} $S=[1,n]$. Since $f$ is odd and hence balanced, we have that
\begin{align*}
\Var(f)&=\E_{x_1, \ldots, x_{n-1}, y}[f(x_1, \ldots, x_{n-1}, y)^2]\\
&=\frac{1}{2}\cdot\E_{x_1, \ldots, x_{n-1}}[f(x_1, \ldots, x_{n-1}, 1)^2]+\frac{1}{2}\cdot\E_{x_1, \ldots, x_{n-1}}[f(x_1, \ldots, x_{n-1}, -1)^2] \\
&=\frac{1}{2}\cdot\E_{x_1, \ldots, x_{n-1}}[g(x_1, \ldots, x_{n-1} )^2]+\frac{1}{2}\cdot\E_{x_1, \ldots, x_{n-1}}[(-g(-x_1, \ldots, -x_{n-1} ))^2] \\
&=\E_{x_1, \ldots, x_{n-1}}[g(x_1, \ldots, x_{n-1} )^2] \\
&=\sum_{i=1}^{n-1}\frac{1}{2^i}\cdot i^2 + \frac{1}{2^{n-1}}\cdot n^2 = \sum_{i=1}^n \frac{i^2}{2^i} + \frac{n^2}{2^n} \leq \sum_{i=1}^{\infty} \frac{i^2}{2^i} + \max_{n\geq1} \frac{n^2}{2^n} < 6+2 = 8.
\end{align*}
\item \textbf{Case 2:} $S=[1,n-1]$. In this case, the restricted function is either $g(x_1, \ldots, x_{n-1})$ or $-g(-x_1, \ldots, -x_{n-1})$ depending on the setting of $y$. In either case, the variance of the restricted function is bounded above by $\E_{x_1, \ldots, x_{n-1}}[g(x_1, \ldots, x_{n-1} )^2]$ which is at most $8$ by the calculations in the previous case.
\item \textbf{Case 3:} $S=[1,i]$ for some $i \in \{1, \ldots, n-2\}$ and $b=(b_{i+1}, \ldots, b_{n-1}, b_0) \in \pmone^{n-i}$. Thus the subcube is obtained by setting $x_j=b_j$ for $j \in \cbra{i+1,\ldots,n-1}$ and $y=b_0$.
\begin{itemize}
\item \textbf{Case 3a:} $b_0=1$. In this case, if $b_j=-1$ for some $j \neq 0$, then $f_{S|b}$ is a constant function, and hence has variance $0$. So, we assume that $b_j=1$ for all $j \in \cbra{i+1, \ldots, n-1}$. Then $f_{S|b}(x_1, \ldots, x_i)$ is the function given by the decision list $((x_i, n-i), (x_{i-1}, n-i+1), \ldots, (x_2, n-2), (x_1, n-1), n)$. 

In order to bound $\Var(f_{S|b})$ we use the following formulation of variance which is easy to verify from the definition of variance.
\[
\Var(f_{S|b})=\frac{1}{2}\cdot\E_{x,y}[(f_{S|b}(x)-f_{S|b}(y))^2].
\]
Note that for $|f_{S|b}(x)-f_{S|b}(y)|$ to be at least $k$, at least one of $f_{S|b}(x)$ and $f_{S|b}(y)$ must be at least $n-i+k$. For $f_{\bar{S}|b}(x)$ ($f_{\bar{S}|b}(y)$, resp.) to be at least $n-i+k$, all $x_j$ (all $y_j$, resp.) for $j \in \cbra{i, i-1, \ldots, i-k-1}$ must equal $1$, an event whose probability is $1/2^k$. We thus have that
\begin{align*}
\Var(f_{S|b})&=\frac{1}{2}\cdot\E_{x,y}[(f_{S|b}(x)-f_{S|b}(y))^2] = \frac12 \sum_{k=0}^n k^2 \mathbb{P}_{x,y}\left[|f_{S|b}(x) - f_{S|b}(y)| = k\right] \\
&\leq\frac{1}{2}\cdot\sum_{k=1}^{i-1} 2\cdot\frac{1}{2^k}\cdot k^2 < 6.
\end{align*}
\item \textbf{Case 3b:} $b_0=-1$. In this case, if $b_j=1$ for some $j \neq 0$, then $f_{S|b}$ is a constant function, and hence has variance $0$. So, we assume that $b_j=-1$ for all $j \in \cbra{i+1, \ldots, n-1}$. Then $f_{S|b}(x_1, \ldots, x_{i-1})$ is the function given by the decision list $((-x_i, -(n-i)), (-x_{i-1}, -(n-i+1)), \ldots, (-x_2, -(n-2)), (-x_1, -(n-1)), -n)$. By a similar analysis as in the previous case, it follows that $\Var(f_{S|b}) < 6$.
\end{itemize}
\end{itemize}
\end{itemize}
The theorem now follows from the definition of $\val_\cQ$ in \Cref{eq:valq}.
\end{proof}

We remark that Theorem~\ref{thm:valprefix} and Theorem~\ref{thm:main} imply a lower bound of $\Omega(n)$ for computing parity using prefix queries. We have not tried to optimize the constants in \Cref{thm:valprefix}. In fact, we show a tighter bound of approximately $n/\sqrt{2}$ in the appendix using the adversary bound directly (see \Cref{lem:prefix_lower_bound_for_dict}) for the function that computes the last input bit $x_n$ (which can only be easier that computing the parity of all the bits).

\subsection{Only Full Set}
Finally, we consider the setting where $\cQ = \cbra{[n]}$. That is, the only allowed substrings must contain all indices. The task of learning $x \in \pmone^n$ is equivalent to search over a space of $2^n$ elements, where one element is promised to be marked.\footnote{Note that the query set $\cQ$ is not downward closed (see \Cref{lem:downwardclosed}) and therefore $\val_\cQ$ only gives us a lower bound on $\Q^{\cQ}(\rec)$.}
Our bound of $\val_\cQ = 2^{(n-1)/2}$ in this case recovers the now well-known lower bound of $\Omega(\sqrt{N})$ of searching for a marked item among $N$ elements~\cite{BBHT98}. The upper bound of $O(\sqrt{N})$ for search follows from \cite{Gro96}.

\begin{theorem}\label{thm:valfull}
    Let $n$ be a positive integer and let $\cQ = \cbra{[n]}$. Then,
    \[
        \val_\cQ = 2^{(n-1)/2}.
    \]
\end{theorem}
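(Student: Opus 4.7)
The plan is to unfold the definition of $\val_\cQ$ in this very restrictive setting and observe that both bounds essentially follow from a single two-point calculation. When $\cQ = \{[n]\}$, the only valid pair $(S,b)$ in the denominator of \Cref{eq:valq} is $S=[n]$ with $b \in \pmone^{\emptyset}$ the empty string, so $f_{S|b} = f$ itself. Moreover, since $f$ is odd we have $\E[f] = 0$, hence $\Var(f) = \E[f^2]$. Thus the problem reduces to showing
\[
\max_{\substack{f : \pmone^n \to \bR\\ f \text{ odd}}} \frac{\|f\|_\infty}{\sqrt{\E[f^2]}} = 2^{(n-1)/2}.
\]

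For the lower bound I would exhibit the explicit odd function $f$ that sends $1^n \mapsto 1$, $(-1)^n \mapsto -1$, and every other input to $0$. Then $\|f\|_\infty = 1$ and $\E[f^2] = 2/2^n = 2^{-(n-1)}$, so this $f$ witnesses $\val_\cQ \geq 2^{(n-1)/2}$.

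For the matching upper bound I would argue as follows. Let $f$ be an arbitrary odd function and pick $x^* \in \pmone^n$ with $|f(x^*)| = \|f\|_\infty$. Oddness forces $f(-x^*) = -f(x^*)$, so the two antipodal inputs $x^*$ and $-x^*$ together contribute at least $2\|f\|_\infty^2$ to the unnormalized sum $\sum_x f(x)^2$. Hence
\[
\E[f^2] \geq \frac{2\|f\|_\infty^2}{2^n} = \frac{\|f\|_\infty^2}{2^{n-1}},
\]
which gives $\|f\|_\infty / \sqrt{\E[f^2]} \leq 2^{(n-1)/2}$, matching the lower bound.

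No step here looks genuinely difficult: the only subtlety is recognizing that because $\bar S$ is empty there is nothing to average over, so the program degenerates to a one-line variance computation. The ``obstacle,'' insofar as there is one, is simply double-checking that the extremal $f$ is indeed odd (it is, since $1^n$ and $(-1)^n$ are negatives of each other) and that the upper-bound argument does not need the full oddness hypothesis beyond pinning down the value at $-x^*$.
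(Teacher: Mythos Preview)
Your proposal is correct and follows essentially the same approach as the paper: the same extremal function $f$ (supported on $1^n$ and $(-1)^n$) for the lower bound, and the same two-point contribution argument for the upper bound, after the identical reduction $\Var(f) = \E[f^2]$ using oddness.
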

\begin{proof}
We first show the lower bound, and then the upper bound. Since $\cQ = \cbra{[n]}$, the denominator of the expression for $\val_\cQ$ (\Cref{eq:valq}) is the standard deviation of $f$ on the whole cube.
\begin{itemize}
    \item For the lower bound, consider $f : \pmone^n \to \bR$ defined by $f(1^n) = 1, f(-1^n) = -1$, and $f(x) = 0$ for all other $x \in \pmone^n$. We have the standard deviation of $f$ on the whole cube equal to $\sqrt{\Var(f)} = \sqrt{\E[f^2] - \E[f]^2} = \sqrt{\E[f^2]} = \sqrt{\frac{2}{2^n}} = 2^{-(n-1)/2}$, and hence $f$ witnesses $\val_\cQ \geq 2^{(n-1)/2}$.
    \item For the upper bound let $f : \pmone^n \to \bR$ be such that $f$ is odd, which in particular means $\E[f] = 0$. The standard deviation over the whole cube equals $\sqrt{\Var(f)} = \sqrt{\E[f^2] - \E[f]^2} = \sqrt{\E[f^2]} \geq \sqrt{\frac{2}{2^n}\max|f(x)|^2}$. Here the last inequality holds because $f$ is odd: if $x^* = \argmax_x |f(x)|$, then since $f(x) = -f(-x)$, we also have $-x^* = \argmax_x |f(x)|$. Thus, for all odd functions $f$ we have $\val_\cQ \leq \frac{\max |f(x)|}{\sqrt{\frac{2}{2^n}\max|f(x)|^2}} = 2^{(n-1)/2}$.\qedhere
\end{itemize}
\end{proof}

\section{Conclusions}\label{sec:conc}
This work introduces a general framework for analyzing query algorithms for learning input strings in models that extend the search with wildcards setting~\cite{AM14}. As applications of our framework we are able to recover existing upper bounds~\cite{AM14, Bel15} in the search-with-wildcards setting and also show new bounds in generalized search-with-wildcards settings. Interestingly, our upper bounds in this paper are all shown (without explicitly resorting to SDP duality) using the \emph{primal} version of the negative-weight adversary bound~\cite{HLS07}, which is a maximization SDP, typically used to show \emph{lower bounds}. To the best of our knowledge, ours is the first work to use the primal adversary bound to establish novel upper bounds. Even though some of our results were already known from previous works~\cite{AM14, Bel15, CILNTTY12, BV23} through various differing techniques, we are able to recover the same bounds through a unifying framework.
Our work makes progress towards answering an open question of~\cite{BBGK18}, who asked if one could devise algorithms for learning input strings or computing some function of the input string, assuming non-standard query access to the input. It would be interesting to see if our results can be used to show any classical lower bounds using the framework developed in~\cite{BBGK18}.

An interesting related direction is that of the AND/OR-query complexity of computing Boolean functions rather than learning an input string. It is not hard to show that the query model in the search-with-wildcards setting is equivalent to allowing unit-cost query access to arbitrary ORs and arbitrary ANDs of inputs. It was recently shown~\cite{CDMRS23} that randomness does not give a superpolynomial advantage over determinism for computing total Boolean functions in this model. It would be interesting to see if ideas from our work could be generalized to show that quantumness also does not provide a superpolynomial advantage over randomness/determinism. Our framework relies on the optimizing matrix for the primal adversary bound being the communication matrix of an XOR function, which crucially uses symmetries of Parity (the function being computed), so it is not clear how to generalize our framework in this direction easily.

\section*{Acknowledgments}
The authors gratefully acknowledge OpenAI’s ChatGPT (GPT-5), which assisted in developing several crucial ideas presented in this work. The authors bear full responsibility for any mistakes in the work. The authors also thank Nadezhda Voronova for pointing out that one of our lower bounds (when the query set is contiguous blocks) was already known, and for pointing us to the relevant references.

A.C.\ is supported by a
Simons-CIQC postdoctoral fellowship through NSF QLCI Grant No.\ 2016245. N.S.M.~gratefully acknowledges support from the London Mathematical Society through a Scheme 7 grant, which enabled preliminary work that sparked this project. S.P.\ and N.R.\ acknowledge the
support from the Dutch Ministry of Education, Culture, and Science through Gravitation project ``Challenges
in Cyber Security - 024.006.037'' for this work.

\bibliography{bibo}

\appendix

\section{Proof of \Cref{lem:equivalence_of_formulations}}\label{sec:equivalence_of_formulations_proof}

We first observe that for any Boolean function $F$ on $D \subseteq \{-1,1\}^n$, the optimal adversary matrix $\Gamma$ for $\ADV^{\cQ,\pm}(F)$ attains its norm without needing absolute values, i.e., its spectral norm equals its largest eigenvalue.

\begin{lemma}\label{lem:eigenvalue_sign}
    Let $D\subseteq\{-1,1\}^n$ and $F:D\rightarrow\{-1,1\}$ be a Boolean function. Let $\Gamma$ be an optimal solution to $\ADV^{\cQ,\pm}(F)$. Then $\|\Gamma\| = \lambda_{\max}(\Gamma)$.
\end{lemma}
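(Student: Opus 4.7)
The plan is that optimality of $\Gamma$ is actually not needed: the statement follows structurally from the constraints of the adversary SDP together with the assumption that $F$ is Boolean. The main observation is that the constraint $\Gamma[x,y] = 0$ whenever $F(x) = F(y)$ forces $\Gamma$ into a block anti-diagonal form. Concretely, I would permute the rows and columns of $\Gamma$ so that the indices in $F^{-1}(1)$ come first, followed by those in $F^{-1}(-1)$. Since $\Gamma$ is symmetric and vanishes on the $(F^{-1}(1), F^{-1}(1))$ and $(F^{-1}(-1), F^{-1}(-1))$ blocks, in this ordering it takes the form
\[
\Gamma = \begin{pmatrix} 0 & B \\ B^T & 0 \end{pmatrix}
\]
for some real matrix $B$ indexed by $F^{-1}(1) \times F^{-1}(-1)$. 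Permutations preserve both the spectrum and the spectral norm, so we may analyze this permuted matrix.

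Next, I would invoke the standard fact that the spectrum of such a block anti-diagonal symmetric matrix is exactly $\{\pm \sigma_i(B) : i\}$, where $\sigma_i(B)$ denotes the $i$-th singular value of $B$. A short proof suffices: if $(v_1, v_2)^T$ is an eigenvector of $\Gamma$ with eigenvalue $\lambda$, then $B v_2 = \lambda v_1$ and $B^T v_1 = \lambda v_2$, so $B^T B v_2 = \lambda^2 v_2$ and hence $\lambda^2$ is an eigenvalue of $B^T B$, i.e., $\lambda = \pm\sigma_i(B)$ for some $i$. Conversely, every singular value of $B$ yields a pair $\pm \sigma_i(B)$ of eigenvalues of $\Gamma$ by an analogous construction. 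In particular, the spectrum of $\Gamma$ is symmetric about zero, so $\lambda_{\max}(\Gamma) = -\lambda_{\min}(\Gamma) = \sigma_{\max}(B)$.

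Finally, I would combine this with \Cref{fact:spectral}, which gives $\|\Gamma\| = \sqrt{\lambda_{\max}(\Gamma^T \Gamma)} = \sigma_{\max}(\Gamma) = \sigma_{\max}(B)$ (the last equality because a permutation of rows/columns also preserves singular values, and the singular values of the block anti-diagonal matrix are exactly $\sigma_i(B)$, each appearing twice). Hence $\|\Gamma\| = \lambda_{\max}(\Gamma)$, as required.

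There is no real obstacle here: the argument is purely linear-algebraic, and optimality of $\Gamma$ plays no role. The only thing to be slightly careful about is the bookkeeping of the permutation used to bring $\Gamma$ into block form, and the observation that this rearrangement preserves both $\|\cdot\|$ and $\lambda_{\max}(\cdot)$, which is immediate since permutation matrices are orthogonal.
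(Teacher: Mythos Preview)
Your proposal is correct and follows essentially the same approach as the paper: both arguments exploit the block anti-diagonal form $\begin{pmatrix}0 & B \\ B^T & 0\end{pmatrix}$ forced by the Boolean output, and both conclude by showing the spectrum is $\{\pm\sigma_i(B)\}$ and hence symmetric about zero. The paper constructs the eigenvectors explicitly from the SVD of $B$, while you derive $\lambda^2 \in \mathrm{spec}(B^TB)$ from the eigenvector equations; these are two standard and equivalent ways to establish the same fact, and your observation that optimality is not used is also true of the paper's proof.
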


\begin{proof}

    To prove the lemma, it is sufficient to show that if $\lambda$ is an eigenvalue of $\Gamma$, then so is $-\lambda$. We first group the elements in $D$ as $F^{-1}(1)$ and $F^{-1}(-1)$. Grouping the indices of $\Gamma$, it takes the following form.

    \begin{equation}\label{eqn:boolean_feasible_solution}
        \Gamma = \begin{bmatrix}
            0 & A \\
            A^T & 0
        \end{bmatrix}.
    \end{equation}

    Let the rank of the matrix $A$ be $r$. Then, the rank of the matrix $\Gamma$ is $2r$. Furthermore, let $A = \sum_{j=1}^r s_j\ket{u_j}\bra{v_j}$ be the singular value decomposition of $A$. Now, for every $j \in [r]$, we find
    \[\Gamma(\ket{v_j} \oplus \pm\ket{u_j}) = (\pm A^T\ket{u_j}) \oplus A\ket{v_j} = s_j(\pm\ket{v_j} \oplus \ket{u_j}) = \pm s_j(\ket{v_j} \oplus \pm\ket{u_j}),\]
    and so we have found all $2r$ eigenvalues of $\Gamma$. It follows that the spectrum is $\{s_j, - s_j : j \in [r]\}$, which is symmetric around $0$. It follows that $\norm{\Gamma} = \lambda_{\max}(\Gamma)$.
\end{proof}

We require the following fact that is easy to verify.
\begin{fact}\label{fact:commute}
    For a diagonal matrix $A$, and matrices $B, C$, we have $A(B\circ C)A = (ABA)\circ C$.
\end{fact}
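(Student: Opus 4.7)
The plan is to verify the identity entrywise. First I would observe that for a diagonal matrix $A$ with diagonal entries $A[i,i] = a_i$, standard matrix multiplication collapses to $(ABA)[i,j] = a_i a_j B[i,j]$ for any matrix $B$, since the only nonzero term in the double sum $\sum_{k,l} A[i,k] B[k,l] A[l,j]$ is the $k=i, l=j$ term.

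Next I would apply this observation twice. On the left-hand side, using the entrywise definition of the Hadamard product $(B \circ C)[i,j] = B[i,j] C[i,j]$, we get
\[
(A(B \circ C)A)[i,j] = a_i a_j (B \circ C)[i,j] = a_i a_j B[i,j] C[i,j].
\]
On the right-hand side,
\[
((ABA) \circ C)[i,j] = (ABA)[i,j] \cdot C[i,j] = a_i a_j B[i,j] C[i,j].
\]
Both expressions agree at every entry, which proves the identity.

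There is no real obstacle here: the content of the fact is that conjugation by a diagonal matrix acts as entrywise scaling by the outer product of its diagonal, and this scaling commutes trivially with another entrywise operation. The proof is a one-line unfolding of the definitions, so I would present it as essentially a single computation with a brief remark that diagonal conjugation only scales the $(i,j)$-entry by $a_i a_j$.
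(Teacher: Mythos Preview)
Your proof is correct. The paper does not actually prove this fact; it simply states it as ``easy to verify,'' and your entrywise computation is precisely the natural verification the paper has in mind.
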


For the sake of clarity, we restate \Cref{lem:equivalence_of_formulations}.

\begin{repeattheorem}{\Cref{lem:equivalence_of_formulations}}
    Let $D\subseteq\{-1,1\}^n$ and $F:D\rightarrow\{-1,1\}$ be a Boolean function. Then $\ADV^{\cR,\pm}(F) = \overline{\ADV}^{\cR,\pm}(F)$. Moreover, if $(\beta,\Gamma)$ is an optimal solution for $\overline{\ADV}^{\cR,\pm}(F)$, then $\Gamma'$, defined as\footnote{Here, we use the convention that $0/0 = 0$.}
    \[\Gamma'[x,y] = \frac{\Gamma[x,y]}{\sqrt{\beta[x]\beta[y]}},\]
    is an optimal solution for $\ADV^{\cR,\pm}(F)$.
\end{repeattheorem}

\begin{proof}
    We split our proof into two parts. First, we show how to convert a feasible solution $\Gamma^\prime$ of $\ADV^{\cR,\pm}(F)$ into a feasible solution $(\beta,\Gamma)$ of $\overline{\ADV}^{\cR,\pm}(F)$ with the same objective value. Next, we show that if $(\beta,\Gamma)$ is an optimal solution for $\overline{\ADV}^{\cR,\pm}(F)$, then $\Gamma^\prime$ is an optimal solution for $\ADV^{\cR,\pm}(F)$.

    \paragraph{Standard formulation to alternate formulation} Let $M$ be an optimal solution to $\ADV^{\cR,\pm}(F)$. Then there exists a unit vector $\ket{v}$ such that $M\ket{v} = \lambda_{\max}(M)\ket{v}$, and so
    
    \begin{align}\label{eq:normmaxeval}
        \|M\| & = \left|\lambda_{\max} (M)\right|, \\
        \nonumber& = \lambda_{\max}(M),\tag{by \Cref{lem:eigenvalue_sign}}\\
        \nonumber& = \bra{v}M\ket{v}\tag{$M\ket{v} = \lambda_{\max}(M)\ket{v}$},
    \end{align}
    If an entry of $\ket{v}$ has a negative sign, we can switch this to a positive sign and instead multiply the corresponding row and column of $M$ with a negative sign. Define a diagonal matrix $A\in\{-1,0,1\}^{D\times D}$ as follows.

    \begin{equation*}
        A[x,x] = \begin{cases}
            -1, & v[x] < 0, \\
            1, & v[x] \ge 0.
        \end{cases}
    \end{equation*}
    
    We write $\ket{v_{\mathsf{abs}}}$ for the vector containing the entry-wise absolute value of $\ket{v}$. Then $A\ket{v_{\mathsf{abs}}} = \ket{v}$, and   

    \begin{equation}\label{eq:vabs}
        \bra{v}M\ket{v} = \bra{v_{\mathsf{abs}}}A M A\ket{v_{\mathsf{abs}}} = \sum_{x,y}|v[x]|A[x,x]M[x,y]A[y,y]|v[y]|.
    \end{equation}
    Since $M$ is an optimal (and hence feasible) solution to $\ADV^{\cR,\pm}(F)$, $F(x) = F(y) \implies M[x, y] = 0$, which implies, since $AMA$ is simply flipping the signs of certain rows and columns of $M$, that $F(x) = F(y) \implies AMA[x,y] = 0$, which means $AMA$ is a feasible solution to $\ADV^{\cR,\pm}(F)$ as well. By \Cref{eq:vabs}, this new solution is also optimal, and hence we can assume without loss of generality that all entries of $\ket{v}$ are non-negative. For the rest of the proof we assume that $M$ is an optimal solution to $\ADV^{\cR,\pm}(F)$ whose principal eigenvector $v$ has only non-negative entries.
    
    Now, we define a vector $\beta$ such that $\beta[x] = v[x]^2$. Then, \Cref{eq:vabs} implies
    
    \begin{equation}\label{eqn:objvalusingbeta}
        \bra{v}M\ket{v} = \sum_{x,y\in D}\sqrt{\beta[x]}M[x,y]\sqrt{\beta[y]}.
    \end{equation}

    Define the quantity $s = \sum_{x\in F^{-1}(1)}\beta[x]$. By the definition of $M$, the contribution to the sum $\sum_{x,y\in D}\sqrt{\beta[x]}M[x,y]\sqrt{\beta[y]}$ from $(x,y)$ with $F(x) = F(y)$ is $0$. This means every non-zero summand must correspond to $x \in F^{-1}(1), y \in F^{-1}(-1)$ (or vice versa). Since the objective value considered is non-zero, we must have $s > 0$. Note that since $\|v\| = 1$, this means $\sum_x \beta[x] = 1$, which implies $s \leq 1$ (a similar argument to the above shows $s < 1$).
    Consider vector $\beta^\prime\in\mathbb{R}^D$ whose entries are defined as follows.

    \begin{equation*}
        \beta^\prime[x] = \begin{cases}
            \frac{1}{2s}\beta[x], & x\in F^{-1}(1)\\
            \frac{1}{2(1-s)}\beta[x], & x\in F^{-1}(-1)
        \end{cases}
    \end{equation*}

    We have
    \begin{align}\label{eq:sqrtbetaprimebeta}
        & \sum_{x,y\in D}\sqrt{\beta^\prime[x]}M[x,y]\sqrt{\beta^\prime[y]},\\
        \nonumber = & \sum_{x,y\in D}\frac{1}{2\sqrt{s(1-s)}}\sqrt{\beta[x]}M[x,y]\sqrt{\beta[y]},\tag{$F(x) = F(y) \implies M[x,y] = 0$}\\
        \nonumber\ge & \sum_{x,y\in D}\sqrt{\beta[x]}M[x,y]\sqrt{\beta[y]}.\tag{$s\in (0,1)$}
    \end{align}

    This choice of $\beta^\prime$ guarantees that
    \begin{equation}\label{eqn:beta_balance}
        \sum_{x \in F^{-1}(1)} \beta^\prime[x] = \sum_{x \in F^{-1}(-1)} \beta^\prime[x] = \frac{1}{2}.
    \end{equation}

Define the diagonal matrix $B\in\mathbb{R}^{D\times D}$ such that $B[x,x] = \sqrt{\beta^\prime[x]}$. Next, we define a new matrix $\Gamma = B M B$.

    \begin{equation}\label{eq:sumgammabeta}
        \sum_{x,y\in D}\Gamma[x,y] = \sum_{x,y\in D}\sqrt{\beta^\prime[x]}M[x,y]\sqrt{\beta^\prime[y]}.
    \end{equation}

    Observe that both $M$ and $\Delta_q$ are symmetric matrices. Therefore, $M\circ\Delta_q$ is also a symmetric matrix. Due to this, $\|M\circ\Delta_q\| = |\lambda_{\max}(M\circ\Delta_q)|$. Then
    \begin{equation*}
        \|M\circ\Delta_q\| \le 1 \implies |\lambda_{\max}(M\circ\Delta_q)| \le 1,
    \end{equation*}
    which implies all eigenvalues of $M \circ \Delta_q$ are in $[-1,1]$.
    Note that any vector in $\mathbb{R}^D$ is an eigenvector of $I_D$ with eigenvalue $1$. Then $I_D - M\circ\Delta_q$ has the same eigenvectors as $M\circ\Delta_q$ with eigenvalues of the form $1-\lambda$ where $\lambda\in[-1,1]$. Therefore, all eigenvalues of $I_D - M\circ\Delta_q$ are non-negative and hence $I_D - M\circ\Delta_q \succeq 0$. This implies, since $B$ is diagonal with all non-negative entries,
    
    \begin{align}\label{eqn:feasiblePSD}
        B(I_D - M\circ\Delta_q)B & \succeq 0,\\
        \nonumber BB - B M B\circ\Delta_q & \succeq 0,\tag*{by \Cref{fact:commute}}\\
        \nonumber \mathsf{diag}(\beta') - \Gamma\circ\Delta_q & \succeq 0.
    \end{align}

    Note that $(\beta', \Gamma)$ has the following properties.
    \begin{itemize}
        \item $\Gamma = BMB$ is symmetric since $B$ is diagonal and $M$ is symmetric, $\beta'$ is entry-wise non-negative by construction.
        \item $\Gamma[x,y] = 0$ if $F(x) = F(y)$ since $\Gamma[x,y] = \sqrt{\beta'[x]}M[x,y]\sqrt{\beta'[y]}$ and $M[x,y] = 0$ if $F(x) = F(y)$.
        \item $\mathsf{diag}(\beta) - \Gamma\circ\Delta_q \succeq 0$ for all $q\in \cR$ due to \Cref{eqn:feasiblePSD}
        \item $\sum_{x\in F^{-1}(1)}\beta'[x] = \sum_{x\in F^{-1}(-1)}\beta'[x] = \frac{1}{2}$ due to \Cref{eqn:beta_balance}
    \end{itemize}
    Thus, $(\beta', \Gamma)$ is a feasible solution to $\overline{\ADV}^{\cR, \pm}(F)$, and the objective value attained is 
    \begin{align*}
        \sum_{x, y \in D}\Gamma[x,y] & = \sum_{x,y\in D}\sqrt{\beta^\prime[x]}M[x,y]\sqrt{\beta^\prime[y]} \tag{by \Cref{eq:sumgammabeta}}\\
        & \geq \sum_{x,y\in D}\sqrt{\beta[x]}M[x,y]\sqrt{\beta[y]} \tag{by \Cref{eq:sqrtbetaprimebeta}}\\
        & = \bra{v}M\ket{v} \tag{by \Cref{eqn:objvalusingbeta}}\\
        & = \|M\| \tag{by \Cref{eq:normmaxeval}}.
    \end{align*}
    This shows that $\overline{\ADV}^{\cR,\pm}(F) \ge \ADV^{\cR,\pm}(F)$.

    \paragraph{Alternate formulation to standard formulation} In the other direction, let $(\beta,\Gamma)$ be an optimal solution to $\overline{\ADV}^{\cR, \pm}(F)$. Define $\Gamma^\prime = B\Gamma B$ where the matrix $B$ is a diagonal matrix defined as follows.

    \begin{equation}
        B[x,x] = \begin{cases}
            \frac{1}{\sqrt{\beta[x]}}, & \beta[x]\neq 0\\
            0, & \text{otherwise}
        \end{cases}
    \end{equation}

    Suppose that $\Gamma[x,y] > 0$. Then, $F(x) \neq F(y)$, which in particular means that $x \neq y$. Let $q$ be a query that distinguishes between $x$ and $y$. Since $\mathsf{diag}(\beta) - \Gamma\circ\Delta_q \succeq 0$, so are all its submatrices. In particular, the $2 \times 2$ submatrix of indexed by $x$ and $y$ is also PSD, and it looks as follows;
    \[\begin{bmatrix}
        \beta[x] & -\Gamma[x,y] \\
        -\Gamma[x,y] & \beta[y]
    \end{bmatrix} \succeq 0.\]
    For a $2 \times 2$ matrix to be PSD, the determinant must be non-negative, which implies that $\beta[x]\beta[y] - \Gamma[x,y]^2 \geq 0$. Since $\Gamma[x,y] \neq 0$ by assumption, we find that $\beta[x]\beta[y] \geq \Gamma[x,y]^2 > 0$, and so we find that $\beta[x] > 0$ and $\beta[y] > 0$. By the contrapositive, if $\beta[x] = 0$, then the entire row and column in $\Gamma$ labeled by $x$ is $0$. As such, we can rewrite the objective value in $\overline{\ADV}^{\cR,\pm}(F)$ as
    \[\overline{\ADV}^{\cR,\pm}(F) = \sum_{x,y \in D}\Gamma[x,y] = \sum_{\substack{x,y  \in D \\ \beta[x] > 0 \land \beta[y] > 0}}\Gamma[x,y].\]
    
    Now, since $(\beta,\Gamma)$ is a feasible solution, $\mathsf{diag}(\beta) - \Gamma\circ\Delta_q \succeq 0$. Then
    \begin{align}
        B(\mathsf{diag}(\beta) - \Gamma\circ\Delta_q)B & \succeq 0,\\
        \nonumber B \mathsf{diag}(\beta) B - B\Gamma B\circ\Delta_q & \succeq 0,\tag*{by \Cref{fact:commute}}\\
        \nonumber I_D - \Gamma^\prime\circ\Delta_q & \succeq 0.
    \end{align}
    This implies that eigenvalues of $I_D - \Gamma^\prime\circ\Delta_q$ are non-negative. Since $I_D - \Gamma^\prime\circ\Delta_q$ and $\Gamma^\prime\circ\Delta_q$ have the same eigenvectors, eigenvalues of $\Gamma^\prime\circ\Delta_q$ are at most $1$. Furthermore, $\Gamma'$ and $\Gamma'\circ\Delta_q$ are of the form 
    \[
    \begin{bmatrix}
        0 & A\\
        A^T & 0
    \end{bmatrix}.
    \]
    This observation allows us to use the same argument as in the proof of \Cref{lem:eigenvalue_sign}, implying that for all eigenvalues $\lambda$, $-\lambda$ is also an eigenvalue. Combining these two observations, we get $\|\Gamma^\prime\circ\Delta_q\|\le 1$.

    Observe that $\Gamma'$ has the following properties.
    \begin{itemize}
        \item $\Gamma' = B\Gamma B$ is symmetric since $B$ is diagonal and $\Gamma$ is symmetric.
        \item Since 0-entries in $\Gamma$ continue to be 0-entries in $\Gamma'$, this means $F(x) = F(y) \implies \Gamma'(x,y) = 0$. 
        \item $\|\Gamma'\circ\Delta_q\| \leq 1$ for all $q\in\cR$ as shown above. 
    \end{itemize}

    Let the vector $\ket{v}$ be defined as $v[x] = \sqrt{\beta[x]}$. Then
    \begin{align*}
        \ADV^{\cR,\pm}(F) &\geq \norm{\Gamma'} \geq \bra{v}\Gamma'\ket{v} = \sum_{x,y}\sqrt{\beta[x]}\Gamma'[x,y]\sqrt{\beta[y]} \\
        &= \sum_{\substack{x,y \\ \beta[x] > 0 \land \beta[y] > 0}}\sqrt{\beta[x]}\frac{1}{\sqrt{\beta[x]}}\Gamma[x,y]\frac{1}{\sqrt{\beta[y]}}\sqrt{\beta[y]} = \sum_{\substack{x,y \\ \beta[x] > 0 \land \beta[y] > 0}}\Gamma[x,y] = \overline{\ADV}^{\cR,\pm}(F).
    \end{align*}

    Moreover, from an optimal solution $(\beta,\Gamma)$ for $\overline{\ADV}^{\cR,\pm}(F)$, we obtained a solution $\Gamma'$ to $\ADV^{\cR,\pm}(F)$ defined by $\Gamma'[x,y] = \frac{\Gamma[x,y]}{\sqrt{\beta[x]\beta[y]}}$ with at least as large objective value (and hence equal, by the first part of this proof). This concludes the proof.\qedhere
\end{proof}

\section{Prefix-Query Complexity of Dictator Function}\label{sec:prefix_dictator}

In this section we study the query complexity of the Dictator function $\DictatorFunction_n : \pmone^n \to \pmone$ defined by $\DictatorFunction_n(x) = x_n$ in the substring query model with $\cQ = \{ [1,i] \mid i\in[n] \}$ as the set of all prefixes.
\begin{lemma}\label{lem:prefix_lower_bound_for_dict}
    Let $n$ be a positive integer. Let $\cal{Q} = \{ [1,i] \mid i\in[n] \}$. Then,
    \begin{equation*}
        \ADV^{\cal{Q},\pm}(\mathsf{Dict}_n) \ge 2+\frac{n-3}{\sqrt{2}}.
    \end{equation*}
\end{lemma}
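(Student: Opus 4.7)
The plan is to lower-bound $\ADV^{\cQ, \pm}(\DictatorFunction_n)$ by exhibiting an explicit feasible adversary matrix $\Gamma \in \mathbb{R}^{\pmone^n \times \pmone^n}$ and analyzing $\|\Gamma\| / \max_{q \in \cQ}\|\Gamma \circ \Delta_q\|$. Since $\DictatorFunction_n(x) = x_n$ depends only on the last coordinate, the adversary constraint $\Gamma[x,y] = 0$ whenever $x_n = y_n$ is naturally enforced by the tensor-product ansatz $\Gamma = M \otimes \sigma_x$, where $\sigma_x = \bigl(\begin{smallmatrix} 0 & 1 \\ 1 & 0 \end{smallmatrix}\bigr)$ acts on the Dictator coordinate and $M$ is a symmetric matrix on $\pmone^{n-1}$. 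This gives $\|\Gamma\| = \|M\|$, so the task reduces to designing $M$.

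I would then reduce the restriction norms to quantities involving only $M$. For a prefix query $q = ([1,i], b)$ with $i < n$, the matrix $\Delta_q$ does not see the last coordinate and factorizes as $\Delta_q = \Delta'_q \otimes J_{2^{n-i}}$; the Hadamard identity $(A \otimes B) \circ (C \otimes D) = (A \circ C) \otimes (B \circ D)$ then yields $\|\Gamma \circ \Delta_q\| = \|M \circ \Delta'_q\|$, where $\Delta'_q$ is the corresponding prefix-distinguishing matrix on $\pmone^{n-1}$. For $i = n$, the query $q$ distinguishes only the single input $b$; the rank-$2$ structure of $\Delta_q$ combined with \Cref{fact:off-diagonal-matrix-norm} reduces $\|\Gamma \circ \Delta_q\|$ to the $\ell_2$-norm of a single row of $M$. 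The problem therefore becomes: find a symmetric $M$ with $\|M\|$ large, subject to bounded row $\ell_2$-norms and bounded prefix-restricted norms on $\pmone^{n-1}$.

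For the construction of $M$, the baseline all-ones matrix $\alpha J_{2^{n-1}}$ (with $\alpha$ chosen so the binding length-$1$ prefix restriction saturates at $1$) already delivers $\|M\| = 2$, which recovers the base case $n = 3$ and supplies the ``$2$'' in the bound. To obtain the additional $(n-3)/\sqrt{2}$, $M$ must be augmented with a perturbation that exploits the slack available in the longer prefix restrictions and in the row $\ell_2$-norm. The natural structure for the perturbation is a sum of $n-3$ terms, each block-diagonal in an initial segment of coordinates (making it invisible to short prefix queries) and chosen to align its top eigenvector with the all-ones vector (so that contributions to $\|M\|$ add up), while contributing a bounded amount to each restriction norm. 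The main obstacle is balancing the perturbation components so that all three constraint types --- each prefix-length restriction and the row $\ell_2$-norm --- remain within $1$ simultaneously, while the spectral norms of the components sum to give exactly $(n-3)/\sqrt{2}$ along the all-ones direction. The factor $1/\sqrt{2}$ per component most likely emerges from a $2 \times 2$ block identity such as $\|\sigma_x\| = 1$ combined with $\ell_2$-norm computations arising from the rank-$2$ decomposition $\Delta_q = uv^T + vu^T$ of the prefix-distinguishing matrices; carefully tuning these contributions so that every restricted norm stays at most $1$ is the most delicate step of the argument.
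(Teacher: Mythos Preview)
Your reductions are sound: the tensor ansatz $\Gamma = M \otimes \sigma_x$ enforces the Dictator constraint, and the identities $\|\Gamma\| = \|M\|$, $\|\Gamma \circ \Delta_q\| = \|M \circ \Delta'_q\|$ for $i<n$, and $\|\Gamma \circ \Delta_q\| = \|M[b',\cdot]\|_2$ for $i=n$ are all correct (modulo a dimensional slip in writing $J_{2^{n-i}}$ while also claiming $\Delta'_q$ lives on $\pmone^{n-1}$; the tensor factor you need is just $J_2$ on the last bit). The baseline $\alpha J_{2^{n-1}}$ with $\alpha = 2^{-(n-2)}$ does give $\|M\|=2$ and saturates only the $k=1$ prefix constraint. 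Even your proposed perturbation structure --- adding terms block-diagonal in the first $j$ coordinates --- is exactly what the paper's construction is, after a change of viewpoint: their $\Gamma = M_{f\circ\XOR}$ has $f$ depending only on the position of the first $-1$, and rewriting $M = \sum_j c_j(B_{j-1}-B_j)$ with $B_j[x,y]=\bI[x_{[1,j]}=y_{[1,j]}]$ telescopes to $c_1 J + \sum_j (c_{j+1}-c_j)B_j$, precisely your ``baseline plus block-diagonal perturbations''.

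The gap is that you stop exactly where the actual proof begins. You have not specified the perturbation coefficients, and you have not verified that every restricted norm stays at most $1$. Your paragraph about the $1/\sqrt{2}$ ``most likely'' arising from $\|\sigma_x\|=1$ and rank-$2$ identities is speculation, not an argument. The paper does this step concretely: it writes down explicit coefficients $c_j$ (namely $c_1 = 2^{-(n-2)}$, $c_{n-d} = (2^{d-1}\sqrt{2})^{-1}$ for $2\le d\le n-2$, $c_{n-1}=c_n=1/2$), then for each prefix length $k$ shows that $\Gamma\circ\Delta_q$ has rank $2$ and exhibits its two eigenvectors explicitly, reading off eigenvalues $\pm 1$. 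That verification --- three cases ($k=1$, $k=n$, and $2\le k\le n-1$) with concrete eigenvector formulas --- is the content of the lemma, and your proposal does not contain it.
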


\begin{proof}
    We prove a lower bound for $\ADV^{\cQ, \pm}(\DictatorFunction_n)$ by exhibiting a feasible solution, denoted by $\Gamma$, to the standard formulation of the primal adversary bound for the $\DictatorFunction_n$ function and by analyzing the quantity
    \begin{equation}
    \label{eqn:quantity}
        \frac{\norm{\Gamma}}{\max\limits_{\substack{S \in \cQ,\\ b \in \pmone^S}} \norm{\Gamma \circ \Delta_{S,b}}}.
    \end{equation}

    \paragraph{Defining matrix $\Gamma$} Let $\IndexFunction: \pmone^n \rightarrow[n]\cup \{0\}$ be a function defined as $\IndexFunction(x) = 0$ if $x_n = 1$, and $\IndexFunction(x) = \min\cbra{i \in [n] : x_i = -1}$ otherwise.

    Let $f:\{-1,1\}^n \rightarrow \mathbb{R}$ be a function defined as follows.
    \begin{equation*}
        f(x) = \begin{cases}
            0, & x_n = 1\\
            \frac{1}{2}, & \IndexFunction(x) \in \{n-1,n\}\\
            \frac{1}{2^{d-1}\sqrt{2}}, & \IndexFunction(x) = n-d \quad \forall d\in[2,n-2]\\
            \frac{1}{2^{n-2}}, &\IndexFunction(x) = 1.
        \end{cases}
    \end{equation*}

    Let $\Gamma \in \mathbb{R}^{2^n \times 2^n}$ be the matrix defined as $\Gamma = M_{f \circ \XOR}$.

    First observe that this is a valid matrix to consider for the adversary bound: it is symmetric, and $\DictatorFunction_n(x) = \DictatorFunction_n(y) \implies (x \oplus y)_n = 1 \implies f(x \oplus y) = 0$.
    To compute the quantity in \Cref{eqn:quantity}, we first reorder the rows and similarly the columns of $\Gamma$ in a way that $\Gamma$ can be partitioned into blocks containing all strings with $x_n = 1$ appear first, followed by all strings with $x_n=-1$. Clearly, with such an ordering the two diagonal blocks will contain all zeroes. Furthermore, the remaining two off-diagonal blocks will be transpose of each other because $\Gamma$ is a symmetric matrix by construction. Consequently, $\Gamma$ is of the form 

    \begin{equation}
    \label{eq:BlockStructureGamma}
        \Gamma = \begin{bmatrix}
            0 & A\\
            A^T& 0\\
        \end{bmatrix}.
    \end{equation}

    Further ordering is simply lexicographic in the reverse of the strings. Equivalently this ordering is done by the value of the function $\IndexFunction(\cdot)$: the bit-string $x$ appears before the bit-string $y$ if $\IndexFunction(x) > \IndexFunction(y)$, and ties are broken by similarly comparing the values of $\IndexFunction$ on the substrings of $x$ and $y$ obtained by deleting the first $\IndexFunction(x)$ bits.

    \paragraph{Lower bounding numerator of \Cref{eqn:quantity}}
    Let $x \in \pmone^n$ be an arbitrary string. The $x$'th row of $\Gamma$ contains $2^{n-1}$ entries with value $0$ (corresponding to $y$ with $x_n = y_n$), $2^{n-2}$ entries with value $\frac{1}{2^{n-2}}$ (corresponding to all remaining $y$ with $x_1 \neq y_1$), $2^{d-1}$ entries of $\frac{1}{2^{d-1}\sqrt{2}}$ $\forall d\in[2,n-2]$ (corresponding to all $y$ with $x_n \neq y_n$ and $\IndexFunction(x \oplus y) = n-d$) and $2$ entries of values $\frac{1}{2}$ (corresponding to the input strings $(1^{n-2},-1,-1)$ and $(1^{n-1},-1)$). Let $\ket{\bar{1}}$ denote the all-$1$ vector of length $2^n$. Then,
    \begin{equation*}
        \frac{1}{2^n}\bra{\bar{1}}\Gamma\ket{\bar{1}} = \frac{1}{2^n}\left[2^n\left(\frac{1}{2}+\frac{1}{2}+\sum_{d=2}^{n-2}\frac{2^{d-1}}{2^{d-1}\sqrt{2}}+\frac{2^{n-2}}{2^{n-2}}\right)\right] = 2+\frac{n-3}{\sqrt{2}}.
    \end{equation*}
    
    Since $\norm{\Gamma} \ge \frac{1}{2^n}\bra{\bar{1}}\Gamma\ket{\bar{1}}$, we get $\norm{\Gamma} \ge 2+\frac{n-3}{\sqrt{2}}$.\\

    \paragraph{Computing denominator of \Cref{eqn:quantity}}
    Recall that $G$ is the bit-flip group (see \Cref{def:bitflipgroup}). Since $\Gamma$ is a communication matrix of a $\XOR$ function, $\forall \sigma\in G, \Gamma = \sigma\Gamma\sigma^T$. Consider an arbitrary query $q = ([k],b)$ where $b\in\{-1,1\}^k$. Let $\sigma := \sigma_{b1^{n-k}}$ That is, $\sigma$ is the permutation in $G$ such that $\sigma(b1^{n-k}) = 1^n$.
    \begin{align*}
        & \sigma(\Gamma\circ\Delta_q)\sigma^T = (\sigma\Gamma\sigma^T)\circ(\sigma\Delta_q\sigma^T) = \Gamma\circ\Delta_{1^k},\\
        \implies & \norm{\Gamma\circ\Delta_q} = \norm{\sigma(\Gamma\circ\Delta_q)\sigma^T} = \norm{\Gamma\circ\Delta_{1^k}}.
    \end{align*}

    Therefore, it is sufficient to bound the denominator for queries of the form $q=([k],1^k), \forall k\in[n]$.
    
    \paragraph{Case $q=(\{1\}, 1)$}
    The matrix $\Gamma\circ\Delta_q$ is the following block matrix where each block is of size $2^{n-2}$.
    
    \begin{equation}
        \Gamma\circ\Delta_q = \frac{1}{2^{n-2}}\left(\begin{bmatrix}
            0 & 0 & 0 & 1\\
            0 & 0 & 0 & 0\\
            0 & 0 & 0 & 0\\
            1 & 0 & 0 & 0\\
        \end{bmatrix}_{4\times 4} \otimes J_{2^{n-2} \times 2^{n-2}}\right).
    \end{equation}
    
    Clearly, this is a rank $2$ matrix with eigenvectors $(v, \bar{0}, \bar{0}, v)^T$ and $(v, \bar{0}, \bar{0}, -v)^T$ where $\bar{0}$ represents the all zero column vector of length $2^{n-2}$ and $v$ is the all one column vector of length $2^{n-2}$ multiplied with $\frac{1}{2^{n-2}}$. The corresponding eigenvalues are $+1$ and $-1$ respectively.

    \paragraph{Case $q = ([n], 1^n)$}
    The matrix $\Gamma\circ\Delta_q$ is the following.
    
    \begin{align}
        \Gamma\circ\Delta_q & = \frac{1}{2^{n-2}}\begin{bmatrix}
            0 & \cdots & 0 & A\\
            \vdots & \ddots & \vdots & 0_{1\times2^{n-1}}\\
            0 & \cdots & 0 & 0\\
            A^T & 0_{2^{n-1}\times1} & 0 & 0_{2^{n-1}\times2^{n-1}} \\
        \end{bmatrix},
    \end{align}
    where $A$ is the row vector defined as 
    \begin{align*}
        A = (\frac{1}{2}, \frac{1}{2}, \frac{1}{2\sqrt{2}}, \frac{1}{2\sqrt{2}}, \underbrace{\frac{1}{4\sqrt{2}}, \ldots, \frac{1}{4\sqrt{2}}}_{2^2}, \ldots, \underbrace{\frac{1}{2^{n-3}\sqrt{2}}, \ldots, \frac{1}{2^{n-3}\sqrt{2}}}_{2^{n-3}}, \underbrace{\frac{1}{2^{n-2}}, \ldots, \frac{1}{2^{n-2}}}_{2^{n-2}}).
    \end{align*}

    Once again, $\Gamma\circ\Delta_q$ is a rank $2$ matrix. The eigenvectors are the following with the appropriate normalization factors.
    \begin{equation}
        (\pm1, 0, \ldots, 0, \frac{1}{2}, \frac{1}{2}, \frac{1}{2\sqrt{2}}, \frac{1}{2\sqrt{2}}, \underbrace{\frac{1}{4\sqrt{2}}, \ldots, \frac{1}{4\sqrt{2}}}_{2^2}, \ldots, \underbrace{\frac{1}{2^{n-3}\sqrt{2}}, \ldots, \frac{1}{2^{n-3}\sqrt{2}}}_{2^{n-3}}, \underbrace{\frac{1}{2^{n-2}}, \ldots, \frac{1}{2^{n-2}}}_{2^{n-2}}).
    \end{equation}
    The corresponding eigenvalues are $\pm1$ respectively.

    \paragraph{Case $q = ([k], 1^k), k\in[2,n-1]$}
    The corresponding matrix $\Gamma\circ\Delta_q$ is of the form $\begin{bmatrix}
        0 & B\\
        B & 0\\
    \end{bmatrix}$ where B is a $2^{n-1}\times2^{n-1}$ symmetric matrix defined below. The eigenvectors of $\Gamma\circ\Delta_q$ are $(v, v)^T$ and $(v, -v)^T$ where $v$ is any eigenvector of $B$ and the eigenvalues are $\pm\lambda$ where $\lambda$ is the eigenvalue of $B$ corresponding to eigenvector $v$.

    \begin{align}
        B & = \begin{bmatrix}
            0 & A_0 & A_1 & \ldots & A_{k-2} & A_{k-1}\\
            A_0^T & & & & & \\
            A_1^T & & & & & \\
            \vdots & & & {\textrm{\huge 0}} & & \\
            A_{k-2}^T & & & & & \\
            A_{k-1}^T & & & & & \\
        \end{bmatrix},\\
        \nonumber A_i & = \frac{1}{2^{n-k-1+i}\sqrt{2}}[1]_{(2^{n-k-1})\times (2^{n-k-1+i})},\tag{$\forall i\in\{0,1,\ldots,k-2\}$}\\
        \nonumber A_{k-1} & = \frac{1}{2^{n-2}}[1]_{(2^{n-k-1})\times (2^{n-2})}.
    \end{align}

    The matrix $B$ is clearly a rank $2$ matrix since each $A_i$ is a constant matrix, and each $A_i$ has the same height. The two eigenvectors of the matrix are the following with the appropriate normalization factor. The corresponding eigenvalues are $\pm1$. Therefore, the eigenvalues of $\Gamma\circ\Delta_q$ are also $\pm1$.

    \begin{equation*}
        (\underbrace{\pm1,\ldots,\pm1}_{2^{n-k-1}},\underbrace{\frac{1}{\sqrt{2}},\ldots,\frac{1}{\sqrt{2}}}_{2^{n-k-1}},\underbrace{\frac{1}{2\sqrt{2}},\ldots,\frac{1}{2\sqrt{2}}}_{2^{n-k}},\ldots,\underbrace{\frac{1}{2^{k-2}\sqrt{2}},\ldots,\frac{1}{2^{k-2}\sqrt{2}}}_{2^{n-3}},\underbrace{\frac{1}{2^{k-1}\sqrt{2}},\ldots,\frac{1}{2^{k-1}\sqrt{2}}}_{2^{n-2}})^T.
    \end{equation*}

    For every prefix query $q$, the quantity $\norm{\Gamma\circ\Delta_q}$ is always $1$. Therefore, $\max_q\norm{\Gamma\circ\Delta_q} = 1$.

    \paragraph{Adversary bound} Combining the lower bound for the numerator with the denominator value, we get the following.

    \begin{equation*}
      \ADV^{\cQ, \pm}(\mathsf{Dict_n}) = \frac{\norm{\Gamma}}{\max\limits_{\substack{S \in \cQ,\\ b \in \pmone^S}} \norm{\Gamma \circ \Delta_{S,b}}} \ge  2+\frac{n-3}{\sqrt{2}}.\qedhere
    \end{equation*}
\end{proof}

\Cref{thm:adv_equals_quantum_queries} and the naive upper bound of query prefixes of increasing size, learning $1$ bit in each step, immediately yields the following corollary.

\begin{corollary} Let $n$ be a positive integer. Let $\cQ=\{[1,i] \mid i \in [n]\}$. Then,
    \begin{equation*}
        \Q^{\cQ}(\DictatorFunction_n)=\Theta(n).
    \end{equation*}
\end{corollary}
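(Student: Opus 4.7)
The plan is to establish the two directions of $\Theta(n)$ separately, with both directions essentially already in hand from what precedes the corollary in the excerpt.

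For the upper bound $\Q^{\cQ}(\DictatorFunction_n) = O(n)$, the plan is to exhibit an explicit deterministic (and hence quantum) algorithm that learns the bits $x_1, \ldots, x_n$ one at a time using prefix queries, and then returns $x_n$. Concretely, I would proceed inductively: having determined $x_1, \ldots, x_{i-1}$, make the single prefix query $([1, i],\, (x_1, \ldots, x_{i-1}, 1))$, which returns $1$ if $x_i = 1$ and $0$ otherwise. This reveals $x_i$ using one query, so $n$ such queries suffice to learn the entire input string; in particular they determine $x_n = \DictatorFunction_n(x)$. Since classical deterministic complexity upper bounds quantum query complexity up to constants, this gives $\Q^{\cQ}(\DictatorFunction_n) \le n$.

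For the lower bound $\Q^{\cQ}(\DictatorFunction_n) = \Omega(n)$, the plan is to invoke the Reichardt characterization (Theorem~\ref{thm:adv_equals_quantum_queries}) together with \Cref{lem:prefix_lower_bound_for_dict}. The former states $\Q^{\cQ}(\DictatorFunction_n) = \Theta(\ADV^{\cQ, \pm}(\DictatorFunction_n))$, and the latter provides the explicit feasible adversary matrix that gives $\ADV^{\cQ, \pm}(\DictatorFunction_n) \ge 2 + (n-3)/\sqrt{2}$, which is $\Omega(n)$. Combining these two facts immediately yields $\Q^{\cQ}(\DictatorFunction_n) = \Omega(n)$.

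Putting the two bounds together yields $\Q^{\cQ}(\DictatorFunction_n) = \Theta(n)$, as claimed. Since both directions rely on results that have already been established in the paper (the explicit feasible $\Gamma$ in \Cref{lem:prefix_lower_bound_for_dict} for the lower bound, and a trivial bit-by-bit algorithm for the upper bound), there is no genuine obstacle; the only thing to keep in mind is that the lower bound in \Cref{lem:prefix_lower_bound_for_dict} is stated for the Dictator function itself, not for $\rec$ or $\oplus$, so no additional reduction is needed.
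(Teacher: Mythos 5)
Your proposal is correct and follows essentially the same route as the paper: the upper bound via the naive strategy of querying prefixes of increasing length to learn one bit per query, and the lower bound by combining \Cref{thm:adv_equals_quantum_queries} with the explicit adversary matrix of \Cref{lem:prefix_lower_bound_for_dict}. No gaps.
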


\begin{remark}
    We observe that when $\cQ=\{[1,i] \mid i \in [n]\}$, computing parity is at least as hard as computing the last bit, more formally, $\Q^{\cQ}(\oplus) \geq (1/2)\Q^{\cQ}(\DictatorFunction_n)$: 
    Given an algorithm for computing parity, first run this algorithm on the input string $x$ and get output $b$, say. Next run this algorithm on the input string with the last bit deleted (note that prefix queries to this input are valid prefix queries to the original input) and get output $b'$, say. It is easy to see that $x_n = b \oplus b'$.
\end{remark}

\end{document}